\documentclass[a4paper,11pt]{article}
\pdfoutput=1 

\usepackage{jheppub} 
                     
\usepackage[T1]{fontenc} 
\graphicspath{{img/}} 


\usepackage{amsfonts,amssymb,amsthm,bbm,mathrsfs,enumitem}

\usepackage{amsmath}

\newcommand\utimes{\mathbin{\ooalign{$\cup$\cr%
   \hfil\raise0.42ex\hbox{$\scriptscriptstyle\times$}\hfil\cr}}}
\newcommand\bigutimes{\mathop{\ooalign{$\bigcup$\cr%
   \hfil\raise0.36ex\hbox{$\scriptscriptstyle\boldsymbol{\times}$}\hfil\cr}}}
   
\usepackage{tensor}

\allowdisplaybreaks[3]

\usepackage{hyperref}

\usepackage{subcaption}
\captionsetup{compatibility=false}
\captionsetup[figure]{format=plain,position=top,justification=centerlast,textfont=sf,width=.9\textwidth}
\captionsetup[figure]{belowskip=12pt,aboveskip=8pt}

\usepackage{color,psfrag}

\usepackage{tikz}
\usepackage{tikz-cd}
\usetikzlibrary{calc,matrix}
\usetikzlibrary{arrows,decorations.pathmorphing,decorations.pathreplacing,decorations.markings}
\usetikzlibrary{hobby,knots,celtic,shapes.geometric,calc}

\tikzset{
    edge/.style={draw, postaction={decorate},
        decoration={markings,mark=at position .55 with {\arrow{>}}}},
    linking-D/.style={draw, postaction={decorate},
        decoration={markings,mark=at position 1 with { rectangle, draw, inner sep=1pt, minimum size=2mm, fill=cyan }}},
}

\usepackage{mathtools}

\newcounter{footnotemarknum}




\newcommand{\C}{{\mathbb C}}
\newcommand{\N}{{\mathbb N}}

\newcommand{\cB}{{\mathcal B}}
\newcommand{\cE}{{\mathcal E}}
\newcommand{\cF}{{\mathcal F}}
\newcommand{\cG}{{\mathcal G}}

\newcommand{\cL}{{\mathcal L}}
\newcommand{\cH}{{\mathcal H}}

\newcommand{\cO}{{\mathcal O}}

\newcommand{\cV}{{\mathcal V}}

\newcommand{\cU}{{\mathcal U}}

\newcommand{\SU}{\mathrm{SU}}

\newcommand{\U}{\mathrm{U}}

\newcommand{\be}{\begin{equation}}
\newcommand{\ee}{\end{equation}}
\newcommand{\beq}{\begin{eqnarray}}
\newcommand{\eeq}{\end{eqnarray}}

\renewcommand{\u}{{\mathfrak{u}}}

\newcommand{\la}{\langle}
\newcommand{\ra}{\rangle}

\newcommand{\tr}{{\mathrm{Tr}}}
\newcommand{\f}{\frac}
\newcommand{\tl}{\widetilde}
\newcommand{\nn}{\nonumber}
\newcommand{\pp}{\partial}

\newcommand{\rd}{\mathrm{d}}

\newcommand{\eps}{\epsilon}

\newcommand{\id}{\mathbb{I}}

\newcommand{\act}{\triangleright}

\newcommand{\tk}{\tilde{k}}

\newcommand{\ri}{ { \rm{i} } }

\newcommand{\wh}{\widehat}


\newtheorem{theorem}{Theorem}[section]
\newtheorem{lemma}[theorem]{Lemma}
\newtheorem{res}[theorem]{Result}
\newtheorem{prop}[theorem]{Proposition}


\title{\boldmath Spin network entanglement: coarse-graining}

\author{Qian Chen}


\affiliation{ENSL, CNRS, Laboratoire de Physique, F-69342 Lyon, France}

\emailAdd{chenqian.phys@gmail.com}

\abstract{In loop quantum gravity, partitioning graph introduces boundaries and entanglement between spin sub-networks, reflecting non-local degrees of freedom and correlation amongst spatial regions. This gives rise to the view of coarse-graining, reducing the degrees of freedom that are unnecessary to be considered.
The present work sets coarse-graining in the framework of bulk-boundary relation. We investigates the spin network entanglement, showing that the entanglement is invariant under the coarse-graining at kinematical level. Moreover, we build the transformation between holonomy operators based on finer graph and coarser graph, and reveal the preservation of the coarse-graining method under the evolution generated by the holonomy operator. This leads to a holographical perspective for the entanglement issue in loop quantum gravity.
}

\begin{document} 
\maketitle
\flushbottom

\section{Introduction}
\label{sec:intro}
Loop Quantum Gravity (LQG) proposes a background independent framework for a theory of quantum general relativity (see \cite{Gaul:1999ys,Thiemann:2007zz,Rovelli:2014ssa,Bodendorfer:2016uat} for reviews), provides a local definition of quantum states of space geometry as spin networks and a canonical description for their dynamics through a Hamiltonian constraint. The geometry of 3d space slices is described by a pair of canonical fields, the (co-)triad and the Ashtekar-Barbero connection. Standard loop quantum gravity approach performs a canonical quantization of the holonomy-flux algebra, of observables smearing the Ashtekar-Barbero connection along 1d curves - holonomies, and the (co-)triad along 2d surfaces - fluxes, and defines quantum states of geometry as polymer structures or graph-like geometries with edges and vertices. Those spin network states represent the excitations of the Ashtekar-Barbero connection as Wilson lines in gauge field theory. On the other hand, geometric observables are raised to quantum operators acting on the Hilbert space spanned by spin networks, leading to the celebrated result of discrete spectra for areas (living along edges) and volumes (living at vertices) \cite{Rovelli:1994ge,Ashtekar:1996eg,Ashtekar:1997fb}.

So spin networks are the kinematical states of the theory and the game is to describe their dynamics, i.e. their evolution in time generated by the Hamiltonian constraints\footnotemark. 
\footnotetext{
Although the traditional canonical point of view is to attempt to discretize, regularize and quantize the Hamiltonian constraints \cite{Thiemann:1996aw,Thiemann:1996av}, this often leads to anomalies. The formalism naturally evolved towards a path integral formulation. The resulting spinfoam models, constructed from (extended) topological quantum field theories (TQFTs) with defects, define transition amplitudes for histories of spin networks \cite{Reisenberger:1996pu,Baez:1997zt,Barrett:1997gw,Freidel:1998pt} (see \cite{Livine:2010zx,Dupuis:2010kqh,Perez:2012wv} for reviews). The formalism then evolves in a third quantization, where so-called “group field theories” define non-perturbative sums over random spin network histories in a similar way than matrix model partition functions define sums over random 2d discrete surfaces \cite{DePietri:1999bx,Reisenberger:2000zc,Freidel:2005qe} (see \cite{Oriti:2006se,Carrozza:2013oiy,Oriti:2014yla} for reviews).
}
Evolving spin networks, formalized as spinfoams, describe the four-dimensional quantum space-time at the Planck scale. This quantum space-time is defined without reference to a background classical geometry, with quantum states of geometry are defined up to diffeomorphisms with no reference to any intrinsic coordinate system or background structure. Then concepts in classical geometry, such as distance, area, curvature, become emergent notions, in a continuum limit after suitably coarse-graining Planck scale quantum fluctuations. They can only be reconstructed from the interaction between subsystems, quantified by correlation and entanglement shared between subsystems (see e.g. \cite{Donnelly:2016auv,Feller:2017jqx}). This perspective sets the field of quantum information at the heart of research in quantum gravity, with essential roles to play for entanglement, decoherence and quantum localization in probing quantum states of geometries and thinking about the quantum-to-classical transition for the space-time geometry.

In the context of loop quantum gravity, the work investigating the entanglement carried by spin networks states have slowly built since the birth of the theory \cite{Livine:2005mw,Livine:2007sy,Donnelly:2008vx,Livine:2008iq,Donnelly:2011hn,Donnelly:2014gva,Feller:2015yta,Bianchi:2015fra,Feller:2016zuk,Bianchi:2016hmk,Delcamp:2016eya,Livine:2017fgq,Baytas:2018wjd}, but has definitely sped up in the past few years with the burst of interest in the bulk-to-boundary propagator and bulk-from-boundary reconstruction in the light of holography, see for instance \cite{Anza:2016fix,Chirco:2017xjb,Colafranceschi:2021acz,Chirco:2021chk,Colafranceschi:2022ual,Chen:2021vrc}.

The key function now played by the holographic principle as a guide for quantum gravity has put great emphasis of the role of boundaries. Although holography, inspired from black hole entropy and the AdS/CFT correspondence, can be initially thought as an asymptotic global property, recent researches on local area-entropy relations, holographic entanglement, holographic diamonds and the investigation of quasi-local holography and gravitational edge modes \cite{Donnelly:2016auv,Freidel:2016bxd,Freidel:2018pvm,Freidel:2019ees,Freidel:2020xyx,Freidel:2020svx,Freidel:2020ayo,Takayanagi:2019tvn} for finite boundaries necessarily pushes us to include (spatial) boundaries in the description of quantum geometries, not just as mere classical boundary conditions but as legitimate quantum boundary states. This translates a shift of perspective from a global description of space(-time) as a whole to a quasi-local description where any local bounded region of space(-time) is thought as an open quantum system.

\smallskip

The question --- where are the quantum gravity degrees of freedom, could be phrased to be a snapshot for the interlaced issues: the coarse-graining of quantum geometry states from the Planck scale to larger scales, the definition of quantum dynamics consistent with the holographic principle, and the implementation of (discretized) diffeomorphism at quantum level as the fundamental gauge symmetry of the theory (or, in other words, the implementation of a relativity principle for quantum geometry) \cite{Livine:2017xww}.

\smallskip

In this paper, we would like to discuss the relation between coarse-graining and and holography, and focus on spin network entanglement, i.e. the entanglement between spin sub-networks. Partitioning spin network inevitably introduces boundaries (e.g. Fig.\ref{fig:coarse-grain}), and every spin sub-network is a spin network with nonempty boundary evolving along time (e.g. Fig.\ref{fig:boundary}).
On the other hand, we can coarse-grain sub-network into simpler graph-structure, presenting the sub-network with a single vertex and boundary edges (e.g. Fig.\ref{fig:coarse-grain}). The coarse-grained spin network state can be defined based on the coarse-grained graph. We would like to answer the questions: (a) Can we study spin network entanglement from coarse-grained graph? (b) Can we study dynamics of spin network entanglement from the coarse-graining graph if we take evolution into account?
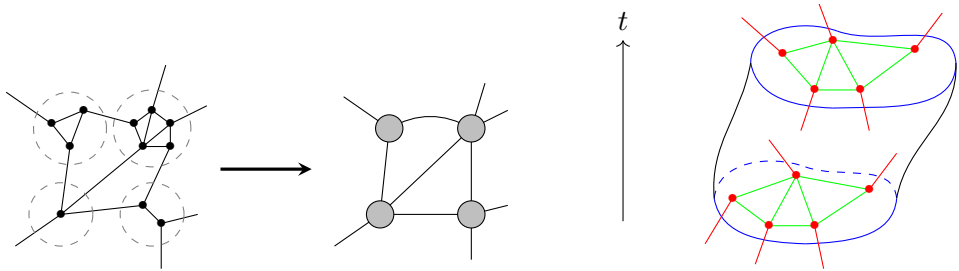
\begin{figure}[htb]
\begin{subfigure}[t]{0.45\linewidth}
\centering
\begin{tikzpicture}[scale=0.6]

\coordinate(a1) at (-0.2,0);
\coordinate(a2) at (0.5,0.3);
\coordinate(a3) at (0.2,-0.5);
\coordinate(b1) at (2,0.3);
\coordinate(b2) at (1.8,-0.5);
\coordinate(b3) at (2.4,0);
\coordinate(b4) at (2.4,-0.5);
\coordinate(b5) at (1.6,0);
\coordinate(c1) at (0,-2);
\coordinate(d1) at (1.8,-1.8);
\coordinate(d2) at (2.2,-2.2);

\draw (a1) -- (a2);
\draw (a1) -- (a3);
\draw (a3) -- (a2);
\draw (b1) -- (b2);
\draw (b1) -- (b3);
\draw (b1) -- (b5);
\draw (b2) -- (b4);
\draw (b2) -- (b5);
\draw (b3) -- (b2);
\draw (b3) -- (b4);
\draw (a2) -- (b5);
\draw (c1) -- (b2);
\draw (a3) -- (c1);
\draw (c1) -- (d1);
\draw (d1) -- (d2);
\draw (b4) -- (d1);
\draw (a1)--++(-1,.7);
\draw (c1)--++(-1,-.7);
\draw (b1)--++(.3,1);
\draw (b3)--++(.8,.4);
\draw (d2)--++(.8,.2);
\draw (d2)--++(0,-1);

\draw (a1) node[scale=0.7] {$\bullet$};
\draw (a2) node[scale=0.7] {$\bullet$};
\draw (a3) node[scale=0.7] {$\bullet$};
\draw (b1) node[scale=0.7] {$\bullet$};
\draw (b2) node[scale=0.7] {$\bullet$};
\draw (b3) node[scale=0.7] {$\bullet$};
\draw (b4) node[scale=0.7] {$\bullet$};
\draw (b5) node[scale=0.7] {$\bullet$};
\draw (c1) node[scale=0.7] {$\bullet$};
\draw (d1) node[scale=0.7] {$\bullet$};
\draw (d2) node[scale=0.7] {$\bullet$};

\draw[gray,dashed] (0.2,-0.1) circle(.8);
\draw[gray,dashed] (2,-0.1) circle(.85);
\draw[gray,dashed] (0,-2) circle(.7);
\draw[gray,dashed] (2,-2) circle(.7);

\draw[->,>=stealth,very thick] (3.5,-1) -- (5.5,-1);

\coordinate(a) at (7.2,-0.1);
\coordinate(b) at (9,-0.1);
\coordinate(c) at (7,-2);
\coordinate(d) at (9,-2);

\draw (a)--++(-1,.7);
\draw (c)--++(-1,-.7);
\draw (b)--++(.3,1);
\draw (b)--++(.8,.4);
\draw (d)--++(.8,.2);
\draw (d)--++(0,-1);

\draw (a) to[bend left] (b);
\draw (b) -- (c);
\draw (a) -- (c);
\draw (b) -- (d);
\draw (c) -- (d);

\draw[fill=lightgray] (a) circle(0.3);
\draw[fill=lightgray] (b) circle(0.3);
\draw[fill=lightgray] (c) circle(0.3);
\draw[fill=lightgray] (d) circle(0.3);

\end{tikzpicture}
\caption{An illustration of the partitioning of the graph in order to coarse-grain the spin network state.}
\label{fig:coarse-grain}
\end{subfigure}
\begin{subfigure}[t]{0.45\linewidth}
\centering
	\begin{tikzpicture} [scale=0.6]
	
	\draw[->] (-2,-0.5) -- (-2,3.5) node[above] {$t$};

\coordinate  (O1) at (0,0);
\coordinate  (O2) at (2,0.7);
\coordinate  (O3) at (4,0);
\coordinate  (O4) at (2,-1);

\coordinate  (P1) at (0.8,3);
\coordinate  (P2) at (2.7,3.7);
\coordinate  (P3) at (5.3,3);
\coordinate  (P4) at (2.6,2);

\draw[dashed,color=blue] (O1) to[out=85,in=160] (O2);
\draw[dashed,color=blue] (O2) to[out=-20,in=95] (O3);
\draw[color=blue]        (O3) to[out=-95,in=2] (O4);
\draw[color=blue]        (O4) to[out=178,in=-80] (O1);

\draw[color=blue] (P1) to[out=90,in=160] (P2);
\draw[color=blue] (P2) to[out=-20,in=95] (P3);
\draw[color=blue]        (P3) to[out=-90,in=2] (P4);
\draw[color=blue]        (P4) to[out=178,in=-80] (P1);

\draw        (O1) to[out=95,in=-100] (P1);
\draw        (O3) to[out=80,in=-87] (P3);

\coordinate  (A1) at (0.4,0);
\coordinate  (A2) at (1.8,0.5);
\coordinate  (A3) at (3.4,0.2);
\coordinate  (A4) at (2.2,-0.6);
\coordinate  (A5) at (1.2,-0.6);

\draw (A1) [color=red] -- ++ (-0.6,-1);
\draw (A2) [color=red] -- ++ (-0.6,0.8);
\draw (A3) [color=red] -- ++ (0.6,0.8);
\draw (A4) [color=red] -- ++ (0.2,-0.95);
\draw (A5) [color=red] -- ++ (-0.3,-0.85);

\draw (A1) [color=green] -- (A2);
\draw (A2) [color=green] -- (A3);
\draw (A3) [color=green] -- (A4);
\draw (A4) [color=green] -- (A5);
\draw (A5) [color=green] -- (A1);
\draw (A2) [color=green] -- (A4);
\draw (A2) [color=green] -- (A5);

\node[scale=0.7,color=red] at (A1) {$\bullet$};
\node[scale=0.7,color=red] at (A2) {$\bullet$};
\node[scale=0.7,color=red] at (A3) {$\bullet$};
\node[scale=0.7,color=red] at (A4) {$\bullet$};
\node[scale=0.7,color=red] at (A5) {$\bullet$};

\coordinate  (B1) at (1.5,3.2);
\coordinate  (B2) at (2.6,3.5);
\coordinate  (B3) at (4.4,3.3);
\coordinate  (B4) at (3.2,2.4);
\coordinate  (B5) at (2.2,2.4);

\draw (B1) [color=red] -- ++ (-0.9,0.8);
\draw (B2) [color=red] -- ++ (-0.3,0.8);
\draw (B3) [color=red] -- ++ (0.6,0.8);
\draw (B4) [color=red] -- ++ (0.2,-0.9);
\draw (B5) [color=red] -- ++ (-0.3,-0.9);

\draw (B1) [color=green] -- (B2);
\draw (B2) [color=green] -- (B3);
\draw (B3) [color=green] -- (B4);
\draw (B4) [color=green] -- (B5);
\draw (B5) [color=green] -- (B1);
\draw (B2) [color=green] -- (B4);
\draw (B2) [color=green] -- (B5);

\node[scale=0.7,color=red] at (B1) {$\bullet$};
\node[scale=0.7,color=red] at (B2) {$\bullet$};
\node[scale=0.7,color=red] at (B3) {$\bullet$};
\node[scale=0.7,color=red] at (B4) {$\bullet$};
\node[scale=0.7,color=red] at (B5) {$\bullet$};

	\end{tikzpicture}
	\caption{Spin sub-network of right-upper Fig.\ref{fig:coarse-grain} evolves along time.
	}
	\label{fig:boundary}
\end{subfigure}
\caption{Coarse-graining and spin network with non-empty boundary.
	}
\end{figure}

The answers are positive, at least for the cases that evolution is generated by loop holonomy operator, which can excite spin network entanglement \cite{Chen:2022rty}. We will see how to arrive the answers with the method of bulk-boundary maps \cite{Chen:2021vrc}. We wish this work to take a step forward in the exploration for coarse-graining and holography in quantum gravity.

\smallskip

Section \ref{Section:SpinNetwork-BoundaryMap} sets up the mathematical definitions of spin network Hilbert spaces, boundary Hilbert spaces and dual boundary Hilbert spaces. The coarse-graining method is defined via gauge transformation and gauge-fixing. The key property is that the coarse-graining preserves the scalar product for dual boundary Hilbert space. Based on it, we formulate the unitarity for dual boundary Hilbert space.
Section \ref{Section:SPNWsEntanglement} is dedicated to the analysis of spin network entanglement, i.e. entanglement between spin sub-networks. Our most important result of this part is that spin network entanglement can be coarse-grained, and the spin network entanglement can be exactly described by the entanglement between loopy intertwiners that presents spin sub-network state.
Section \ref{Section:DynamicsCoarseGraining} extends the power of the coarse-graining method for spin network entanglement: we investigate the situation that allows to implement dynamics with loop holonomy operator. The most important results in this part are two-fold: (a) The exact transformation is built between the loop holonomy operators on graph and the corresponding coarse-grained graph. (b) The coarse-graining for spin network entanglement still holds under the dynamics generated by the operator.
In a word, not only one can study the spin network entanglement from a simpler graph, but can also study the dynamics of the spin network entanglement from the simpler graph.

Section \ref{Section:Examples} applies these general results of coarse-graining for explicit examples, studying entanglement excitation generated by holonomy operator for the sake of looking into the dynamics of spin entanglement. We present examples with one-loop triangle graph and square graph, and a simple two-loop graph.


\subsection{Conventions}
Here we summarize the conventions used throughout the paper.
\begin{itemize}
\item[$-$] $\Gamma$ generic graph. $\Gamma^{o}$ the bulk of $\Gamma$. $\pp\Gamma$ the boundary of $\Gamma$. For the cases that two spin networks $\Gamma$ and $\tl{\Gamma}$ have identical boundary, we usually denote the boundary by $\cB$, i.e. $\cB=\pp\Gamma=\pp\tl{\Gamma}$.

\item[$-$] $\Upsilon$ loopy graph (one vertex plus boundary edges and likely self-loops). 

\item[$-$] $\cH_{\Gamma}$ spin network Hilbert space based on $\Gamma$. Relevant notations:
\begin{itemize}
\item[$\bullet$] $| \psi_{\Gamma} \ra$ spin network state. $\psi_{\Gamma}( \{ g_{e} \})$ spin network wave-function (depending on the holonomies) that corresponds to the $| \psi_{\Gamma} \ra$.

\item[$\bullet$] $| \Psi_{\Gamma,\{j_{e},I_{v}\}} \ra$ spin network basis state. $\Psi_{\Gamma,\{j_{e},I_{v}\}}(\{g_{e}\})$ spin network wave-function for the spin network basis state. For alleviation, we usually use $| \Psi_{\Gamma,\{I_{v}\}} \ra \equiv | \Psi_{\Gamma,\{j_{e},I_{v}\}} \ra$ since spin-labels are included in intertwiner-labels.

\item[$\bullet$] Scalar product for $\cH_{\Gamma}$, e.g. $\la \phi_{\Gamma} | \psi_{\Gamma} \ra$.
\end{itemize}

\item[$-$] $\cH_{\pp\Gamma}$ boundary Hilbert space for $\pp\Gamma$.
Relevant notations:
\begin{itemize}
\item[$\bullet$] Sometimes we talk about $\cH_{\cB}$, especially when we do not want to emphasize the bulk $\Gamma$, since there are situations that $\cB=\pp\Gamma=\pp\tl{\Gamma}$ but $\Gamma \neq \tl{\Gamma}$. Note that $\cH_{\cB} \cong \cH_{\pp\Gamma} \cong \cH_{\pp\tl{\Gamma}}$.

\item[$\bullet$] Scalar product for $\cH_{\cB}$, e.g. $\underline{ \la \phi_{\cB} | \psi_{\cB} \ra }$, for $| \phi_{\cB} \ra\,, | \psi_{\cB} \ra \in \cH_{\cB}$.

\end{itemize}
\item[$-$] $(\cH_{\pp\Gamma} )^{*}$ dual boundary Hilbert space, to $\cH_{\pp\Gamma}$.
Relevant notations:
\begin{itemize}
\item[$\bullet$] $|\psi_{\pp\Gamma}(\{g_{e}\}_{e\in\Gamma^{o}} )\ra \in \cH_{\Gamma}$ boundary state that is determined by spin network wave-function $\psi_{\Gamma}(\{g_{e}\}_{e\in\Gamma^{o}} )$. We also usually refer the $|\psi_{\pp\Gamma}(\{g_{e}\}_{e\in\Gamma^{o}} )\ra$ to `bulk-boundary map' as a member of dual boundary Hilbert space.

\item[$\bullet$] Scalar product for $( \cH_{\pp\Gamma} )^{*}$, e.g. $\la \phi_{\pp\Gamma} | \psi_{\pp\Gamma} \ra$ for bulk-boundary maps based on spin network states $| \phi_{\Gamma} \ra$ and $| \psi_{\Gamma} \ra$. Note that the holonomies have been integrated.

\item[$\bullet$] The dual boundary Hilbert space $(\cH_{\cB})^{*}$ is the direct sum Hilbert space over all $\Gamma$ with $\pp\Gamma=\cB$, see eq.(\ref{eq:DirectSum:DualBoundaryHilbert}).

\end{itemize}

\end{itemize}


\section{Dual boundary Hilbert space --- bulk-boundary maps} \label{Section:SpinNetwork-BoundaryMap}
We begin with a review for some relevant aspects of the techniques of bulk-boundary maps, focusing particularly on properties of unitarity for dual boundary Hilbert space.

\subsection{Spin networks with non-empty boundary} \label{subsection:SP-SNWF}
Loop quantum gravity proceeds to a canonical quantization of general relativity (see \cite{Thiemann:2007zz} for detailed lectures, or \cite{Ashtekar:2021kfp} for a recent overview), describing the evolution of a 3d (space-like) slice in time, thereby generating the four-dimensional space-time. It defines quantum states of geometry and describes their constrained evolution in time.
A state of geometry is defined as a wave-function $\psi$ of the Ashtekar-Barbero connection on the canonical hypersurface. Loop quantum gravity choose cylindrical wave-functions, that depend on the holonomies of that connection along the edges of a graph $\Gamma$. These holonomies are $\SU(2)$ group elements,  $g_{e}\in\SU(2)$ for each edge $e$ of the graph. And the wave-functions are required to be gauge-invariance under local $\SU(2)$ transformations, which act at every vertices of the graph.

For closed 3d spatial slices, without boundary, we consider closed graphs, i.e. without open links.
A wave-function $\psi$ on a closed oriented graph $\Gamma$ is a function of one $\SU(2)$ group element $g_{e}$ for each edge $e$, and is assumed to be invariant under the $\SU(2)$-action at each vertex $v$ of the graph:
\be
\label{eq:GaugeTransformation}
\begin{array}{rlcl}
\psi_{\Gamma}: 
&\SU(2)^{\times E}
&\longrightarrow
&\C \\
&\{g_{e}\}_{e\in\Gamma}
&\longmapsto
&\psi(\{g_{e}\}_{e\in\Gamma})
=
\psi
(\{h_{t(e)}g_{e}h_{s(e)}^{-1}\}_{e\in\Gamma})\,, \quad
\forall h_{v}\in\SU(2)\,
\end{array}
%
\ee
where $t(e)$ and $s(e)$ respectively refer to the target and source vertices of the edge $e$. We write $E$ and $V$ respectively for the number of edges and vertices of the considered graph $\Gamma$.
The scalar product between such wave-functions is given by the Haar measure on the Lie group $\SU(2)$:
\be \label{eq:ScalarProduct-SpinNetwork-Integration}
\la \psi_{\Gamma}
| \tl{\psi}_{{\Gamma} } \ra
=\int_{\SU(2)^{{\times E}}}\prod_{e}\rd g_{e}\,
\overline{\psi
(\{g_{e}\}_{e\in\Gamma}) }\,
\tl{\psi}
(\{g_{e}\}_{e\in{\Gamma} })
\,.
\ee
The Hilbert space of quantum states with support on the closed graph $\Gamma$ is thus realized as a space of square-integrable functions, $\cH_{\Gamma}=L^{2}(\SU(2)^{{\times E}}/\SU(2)^{{\times V}})$.

\smallskip

For a 3d slice with boundary, we consider graphs with open links puncturing the slice boundary. Those open links are connected to one vertex of the graph, while their other extremity is left loose. Without loss of generality, we can assume that all those open links are outward oriented, i.e. that their source vertex belongs to the graph, while their target vertex does not. Those open links are referred to as the boundary links.
We can use the same definition as above for a closed graph, considering wave-functions of one group element per link, including both the standard links in the interior and the boundary links. The difference is that gauge transformations will only act at the graph vertices and will not act on the open ends.

To proceed, we consider introducing spin networks on a bounded spatial slice and view spin networks as bulk-boundary maps. As advocated in \cite{Chen:2021vrc}, one explicitly partitions the graph set of edges into interior links and boundary links,
\be
\Gamma= \Gamma^{o}\sqcup \pp\Gamma\,.
\ee
The $\Gamma^{o}=\Gamma\setminus\pp\Gamma$ is referred to as the bulk or interior of the graph $\Gamma$, and $\pp\Gamma$ is referred to as the boundary or exterior of the graph $\Gamma$.
Each boundary edge $e\in\pp\Gamma$ carries a spin $j_{e}$ and a vector in the corresponding representation $| j_{e}, m_{e} \ra \in \cV_{j_{e}}$ (for inward oriented $| j_{e}, m_{e} ] \in \cV^{*}_{j_{e}}$). As explained in  \cite{Chen:2021vrc}, the boundary Hilbert space is the tensor product of spin states living on the open links
\begin{alignat}{2}
\cH_{\cB}
=
\cH_{\pp\Gamma}
=&\bigotimes_{e\in\pp\Gamma}\cH_{e}
\qquad
&\textrm{with}\quad
&\cH_{e}=\bigoplus_{j_{e}\in\f\N2}\cV_{j_{e}}\,,
\\
=&\bigoplus_{\{j_{e}\}}\cH^{\{j_{e}\}_{e\in\pp\Gamma}}_{\Gamma}
\qquad
&\textrm{with}\quad
&\cH^{\{j_{e}\}_{e\in\pp\Gamma}}_{\pp\Gamma}=\bigotimes_{e\in\pp\Gamma}\cV_{j_{e}}\,.
\end{alignat}
A spin network wave-function on the graph $\Gamma$ with boundary is still a function of group elements living on bulk edges $e\in\Gamma^{o}$, but is not anymore valued in the field $\C$ but into the boundary Hilbert space $\cH_{\pp\Gamma}$:
\be
\label{eq:Bulk-BoundaryMaps}
\begin{array}{rlcl}
\psi_{\Gamma}: 
&\SU(2)^{\{e\in \Gamma^{o}\}}
&\longrightarrow
&\cH_{\pp\Gamma}\,, \\
&\{g_{e}\}_{e\in\Gamma^{o}}
&\longmapsto
&| \psi_{\pp\Gamma} (\{g_{e}\}_{e\in\Gamma^{o}}) \ra \,\in\,\cH_{\pp\Gamma}\,.
\end{array}
\ee
That is, a spin network wave-function provides a bulk-boundary map to $\cH_{\pp\Gamma}$. We refer the interested reader to \cite{Chen:2021vrc}, and to the recent work \cite{Anza:2017dkd,Colafranceschi:2021acz,Colafranceschi:2022ual}, for more details on wave-functions for bounded regions and interesting work on typical bulk states.

\smallskip

Let us now briefly review spin network states and notations.
A basis of this Hilbert space can be constructed using the spin decomposition of $L^{2}$ functions on the Lie group $\SU(2)$ according to the Peter-Weyl theorem. A {\it spin} $j\in\f\N2$ defines an irreducible unitary representation of $\SU(2)$, with the action of $\SU(2)$ group elements realized on a $(2j+1)$-dimensional Hilbert space $\cV_{j}$. We use the standard orthonormal basis $|j,m\ra$, labeled by the spin $j$ and the magnetic index $m$ running by integer steps from $-j$ to $+j$, which diagonalizes the $\SU(2)$ Casimir $\vec{J}^{2}$ and the $\u(1)$ generator $J_{z}$. Group elements $g$ are then represented by the $(2j+1)\times (2j+1)$ Wigner matrices $D^{j}(g)$:
\be
D^{j}_{mm'}(g)=\la j,m|g|j,m'\ra\,,\qquad
\overline{D^{j}_{mm'}(g)}
=
D^{j}_{m'm}(g^{-1})
\,.
\ee
These Wigner matrices form an orthogonal basis of $L^{2}(\SU(2))$:
\be
\int_{\SU(2)}\rd g\,\overline{D^{j}_{ab}(g)}\,{D^{k}_{cd}(g)}
=
\f{\delta_{jk}}{2j+1}\delta_{ac}\delta_{bd}
\,,\qquad
\delta(g)
=\sum_{j\in\f\N2}(2j+1)\chi_{j}(g)
\,, \label{eq:Peter-Weyl}
\ee
where $\chi_{j}$ is the spin-$j$ character defined as the trace of the Wigner matrix, $\chi_{j}(g)=\tr D^{j}(g)=\sum_{m=-j}^{j}\la j,m|g|j,m\ra$.
Applying this to gauge-invariant wave-functions allows to build the {\it spin network} basis states of $\cH_{\Gamma}$, which depend on one spin $j_{e}$ on each edge and one intertwiner $I_{v}$ at each vertex:
\be
\Psi_{\Gamma,\{j_{e},I_{v}\}}(\{g_{e}\}_{e\in\Gamma})
=
\sum_{m_{e}^{t,s}}
\prod_{e}\sqrt{2j_{e}+1}\,\la j_{e}m_{e}^{t}|g_{e}|j_{e}m_{e}^{s}\ra
\,\prod_{v} \la \bigotimes_{e|\,v=s(e)} j_{e}m_{e}^{s}|\,I_{v}\,|\bigotimes_{e|\,v=t(e)}j_{e}m_{e}^{t}\ra
\,. \label{eq:SpinNetwork}
\ee
As illustrated on fig.\ref{fig:intertwiner}, an {\it intertwiner} is a $\SU(2)$-invariant  state -or singlet- living in the tensor product of the incoming and outgoing spins at the vertex $v$:
\be \label{eq:IntertwinerHilbertSpace}
I_{v}\in\textrm{Inv}_{\SU(2)}\Big{[}
\bigotimes_{e|\,v=s(e)} \cV_{j_{e}}
\otimes
\bigotimes_{e|\,v=t(e)} \cV_{j_{e}}^{*}
\Big{]}
\,.
\ee
The intertwiners are singlet states, recoupling $\SU(2)$ irreducible representations. The simplest cases are bivalent and trivalent intertwiner which are uniquely determined when spins are given, since there is an unique way to recouple two or three spins into a singlet state - null angular momentum. For higher valent vertex, to label a recoupling amongst spins, one needs to choose a channel to recouple them. This amounts to split a higher valent vertex into trivalent vertices and link them via a tree graph (no forming loop), e.g., fig.\ref{fig:Unfolding-5Valent}.
Once such channel, or tree, is chosen, an intertwiner basis state is defined by the assignment of a spin to each internal link, or intermediate spins. Two different intertwiner basis states that have different intermediate spins (i.e. at least one different intermediate spins) are mutually orthogonal. An generic intertwtiner state will then be a arbitrary superposition of those basis states.
There are various possible ways to choose a channel to label intertwiner basis states. The unitary map between basis associated to different channels is given by the $\{3nj\}$ spin recoupling symbols.

\smallskip

We will alleviate the notation for intertwiner.
An intertwiner $\vert \{ j_e\}_{e\ni v} \,, I_{v}^{ ( \{ j_e\}_{ e\ni v } ) } \ra$ is labeled by $\{ j_e\}_{e\ni v}$, the spins attached, and $I_{v}^{ ( \{ j_e\}_{ e\ni v } ) }$ the internal indices when attached spins $\{ j_e\}_{e\ni v}$ are given. For instance, a trivalent intertwiner only needs attached spins $\{ j_1,j_2,j_3 \}$ for labeling, while a four-valent intertwiner needs $\{ j_1,j_2,j_3,j_4\}$ for attached spins plus an internal index $j_{12}$ for recoupled spin of $\cV_{j_1} \otimes \cV_{j_2}$, likewise, for five-valent vertex unfolding in fig.\ref{fig:Unfolding-5Valent}, attached spins $\{ j_1,\cdots, j_5 \}$ and internal indices $\{ j_{12}, j_{45} \}$ are needed for labeling.
For each vertex $v$, the attached spins are implicitly expressed in internal indices $I_{v}^{ ( \{ j_e\}_{ e\ni v } ) }$, we don't need to explicitly specify the attached spins, and adopt $I_{v}$ simply instead of $I_{v}^{ ( \{ j_e\}_{ e\ni v } ) }$, unless in some necessary cases. Hence from now on, $\vert \{ j_e\}_{e\ni v} \,, I_{v}^{ ( \{ j_e\}_{ e\ni v } ) } \ra \equiv | I_{v} \ra$.
Under the alleviation, the scalar product between two spin network basis states on the same graph $\Gamma$ is then given by the product of the scalar products between their intertwiners:
\be \label{eq:ScarlarProduct-SpinNetwork-BasisStates}
\la \Psi_{ \Gamma, \{j_{e},I_{v}\} } | \Psi_{ \Gamma, \{\tilde{j}_{e},\tilde{I}_{v}\}} \ra
=\prod_{e}
\delta_{j_{e},\tilde{j}_{e}} \,
\prod_{v}
\la I_{v}|\tilde{I}_{v}\ra
\equiv
\prod_{v}
\la I_{v}|\tilde{I}_{v}\ra
\,.
\ee

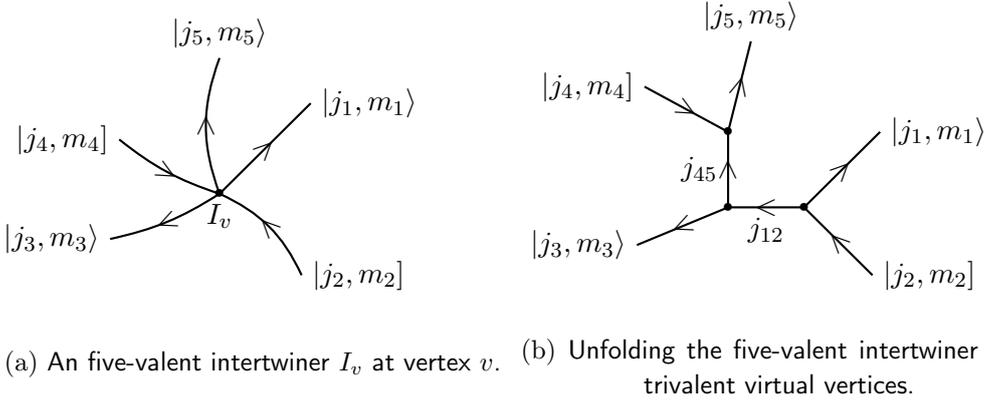
\begin{figure}[htb]
	\centering
\begin{subfigure}[t]{0.45\linewidth}
	\begin{tikzpicture} [scale=1.2]
\coordinate  (O) at (0,0);

\node[scale=0.7] at (O) {$\bullet$} node[below] {$I_v$};

\draw[thick] (O)  --  node[midway,sloped]{$>$} ++ (1,1) node[right] {$| j_1, m_1 \ra$};

\draw[thick] (O)  to[bend left=20]  node[midway,sloped]{$<$} ++ (0.9,-0.9) node[right] {$| j_2, m_2 ]$};

\draw[thick] (O)  to[bend left=20] node[midway,sloped]{$>$} ++ (0,1.5) node[above] {$| j_5, m_5 \ra$};

\draw[thick] (O)  to[bend left=10]  node[midway,sloped]{$<$} ++ (-1.2,-0.5) node[left] {$| j_3, m_3 \ra$};

\draw[thick] (O)  to[bend left=10]  node[midway,sloped]{$>$} ++ (-1.1,0.6) node[left] {$| j_4, m_4 ]$};

\end{tikzpicture}
\caption{An five-valent intertwiner $I_v$ at vertex $v$.}
\label{fig:intertwiner}
\end{subfigure}
\begin{subfigure}[t]{0.45\linewidth}
\begin{tikzpicture}[scale=1]

\coordinate  (O) at (0,0);
\coordinate  (A) at (-1,0);
\coordinate  (B) at (-1,1);

\node[scale=0.7] at (O) {$\bullet$};
\node[scale=0.7] at (A) {$\bullet$};
\node[scale=0.7] at (B) {$\bullet$};

\draw[thick] (O) --node[midway,sloped]{$<$} node[midway,below] {$j_{12}$} (A) --node[midway,sloped]{$>$} node[midway,left] {$j_{45}$} (B);

\draw[thick] (O)  --  node[midway,sloped]{$>$} ++ (1,1) node[right] {$| j_1, m_1 \ra$};

\draw[thick] (O)  --  node[midway,sloped]{$<$} ++ (0.9,-0.9) node[right] {$| j_2, m_2 ]$};

\draw[thick] (B)  -- node[midway,sloped]{$>$} ++ (0.3,1.2) node[above] {$| j_5, m_5 \ra$};

\draw[thick] (A)  -- node[midway,sloped]{$<$} ++ (-1.2,-0.5) node[left] {$| j_3, m_3 \ra$};

\draw[thick]  (B)  -- node[midway,sloped]{$>$} ++ (-1.1,0.6) node[left] {$| j_4, m_4 ]$};

\end{tikzpicture}

\caption{Unfolding the five-valent intertwiner with trivalent virtual vertices.
}
\label{fig:Unfolding-5Valent}
\end{subfigure}
\caption{The notation for a higher valent intertwiner $I_{v}$ in terms of virtual spins.
}
\label{fig:Indices-5Valent}
\end{figure}

In the next subsection, we build Hilbert space of bulk-boundary maps with scalar product of spin network Hilbert space.




\subsection{Scalar product for dual boundary Hilbert space} \label{subsection:SP-BMs}
In principle, it is free to view bulk-boundary maps as spin network wave-functions. This subsection is meant to clarify some details for establishing the Hilbert space of bulk-boundary maps. Especially, we deal with the scalar products of boundary Hilbert space and of bulk Hilbert space separately.

We first clarify the scalar product for boundary Hilbert space $\cH_{\pp\Gamma}$ or $\cH_{\cB}$. Now $\cH_{\pp\Gamma}$ is the tensor product of spin states living on the boundary edges. Suppose $\{ | j_{e} m_{e} \ra \}$ basis states for a $\cV_{j_{e}}$, then $\underline{ \la j_{e} m_{e} | j'_{e} m'_{e} \ra }=\delta_{j_{e} j'_{e} } \delta_{m_{e} m'_{e} }$. We adopt `$\underline{\la \, \ | \, \ \ra}$' to denote the scalar product for the $\cH_{\cB}$.

\smallskip

Bulk-boundary maps are considered as linear form living in dual Hilbert space $(\cH_{\pp\Gamma} )^{*}$. To see this, consider any boundary state $| \Phi_{\cB} \ra$. The boundary Hilbert space's scalar product for $| \Phi_{\cB} \ra$ and $| \psi_{\pp\Gamma} (\{g_{e}\}_{e\in\Gamma^{o}}) \ra$ defines a distribution on boundary states depending on the group elements, or holonomies,
\be
\forall\, | \Phi_{\cB} \ra \in \cH_{\cB}
\,, \qquad
\underline{
\la \Phi_{\cB} \vert \psi_{\pp\Gamma}
(\{g_{e}\}_{e\in\Gamma^{o}}) \ra
}
\in \C
\,.
\ee
One does not need to integrate any bulk holonomies. The complex number is evaluated by assigning every $g_{e}$. To illustrate, there are three situations:
\begin{itemize}
\item
Assume $| \Phi_{\cB} \ra$ does not depend on any holonomies at all. That is, $| \Phi_{\cB} \ra$ allows for a linear combination in terms of the basis vectors $\cH_{\cB}$. In that case, the coefficients of the linear combination are complex numbers without dependency of holonomies, thus
\be
| \Phi_{\cB} \ra=\bigotimes_{e\in\pp\Gamma} | j_e m_e \ra\,, \qquad
\underline{
\la \Phi_{\cB} | \psi_{\pp\Gamma} (\{g_{e}\}_{e\in\Gamma^{o}}) \ra
} \in \cF\big{[}\SU(2)^{\{e\in\Gamma^{o}\}}\big{]}\,,
\ee
i.e., the dependency of holonomies are all due to $| \psi_{\pp\Gamma} (\{g_{e}\}_{e\in\Gamma^{o}}) \ra$. 

\item
Assume $| \Phi_{\cB} \ra=| \phi_{\pp\Gamma} (\{g_{e}\}_{e\in\Gamma^{o}}) \ra$ is a boundary state that corresponds to another spin network wave-function $\phi_{\Gamma}$ based on the same graph. Again, the $| \Phi_{\cB} \ra$ allows for a linear combination in terms of the basis vectors $\cH_{\cB}$. In that case, the linear combination coefficients depend on holonomies, thus
\be \label{eq:FunctionsHolonomies-sameGraph}
| \Phi_{\cB} \ra=| \phi_{\pp\Gamma} (\{g_{e}\}_{e\in\Gamma^{o}}) \ra 
\,, \qquad
\underline{
\la \phi_{\pp\Gamma}(\{g_{e}\}_{e\in\Gamma^{o}}) | \psi_{\pp\Gamma} (\{g_{e}\}_{e\in\Gamma^{o}}) \ra
} \in \cF\big{[}\SU(2)^{\{e\in\cL\}}\big{]}\,,
\ee
where the $\cL$ is the set of loop edges, with less number of holonomies than the number of bulk edges. We will explain this in the next subsection.

\item
Assume $| \Phi_{\cB} \ra=| \phi_{\pp\tl{\Gamma}} (\{g_{ \tl{e} }\}_{\tl{e}\in\tl{\Gamma}^{o}}) \ra$ is a boundary state that corresponds to spin network wave-function $\phi_{\tl{\Gamma}}$ based on a different graph $\tl{\Gamma}$, i.e., $\tl{\Gamma} \neq \Gamma$. The number of holonomies is more than the number of $\Gamma^{o}$ or $\tl{\Gamma}^{o}$, i.e.,
\be
| \Phi_{\cB} \ra=| \phi_{\pp\tl{\Gamma}} (\{g_{ \tl{e} }\}_{\tl{e}\in\tl{\Gamma}^{o}}) \ra\,, \qquad
\underline{\la \phi_{\pp\tl{\Gamma}} (\{g_{ \tl{e} }\}_{\tl{e}\in\tl{\Gamma}^{o}}) | \psi_{\pp\Gamma} (\{g_{e}\}_{e\in\Gamma^{o}}) \ra
} \in \cF\big{[}\SU(2)^{\{e\in\Gamma^{o} \cup \tl{\Gamma}^{o} \}}\big{]}\,.
\ee
\end{itemize}
The $| \Phi_{\cB} \ra$ in the first situation is considered as `no bulk degree of freedom at all', while the $| \Phi_{\cB} \ra$ in the second and third situations are considered as carrying bulk degrees of freedom.

\smallskip

Overall, the scalar product for $\cH_{\cB}$ gives a distribution depending on holonomies. We are able to define the scalar product thus to establish the dual boundary Hilbert space $(\cH_{\cB})^{*}$ via integration over holonomies. 
We denote $| \psi_{\pp\Gamma} \ra$ the member of $(\cH_{\cB})^{*}$. A $| \psi_{\pp\Gamma} \ra$ corresponds to a boundary state $| \psi_{\pp\Gamma} (\{g_{e}\}_{e\in\Gamma^{o} }) \ra \in \cH_{\pp\Gamma}$ depending on holonomies, or equivalently, corresponds to spin network wave-function $\psi_{\Gamma} (\{g_{e}\}_{e\in\Gamma^{o} })$. For any two $| \phi_{\pp\tl{\Gamma} } \ra\,, | \psi_{\pp\Gamma} \ra \in (\cH_{\cB})^{*}$, the scalar product for the dual boundary Hilbert space $(\cH_{\cB})^{*}$ is:
\be \label{eq:Definition-ScalarProduct-BoundaryMaps}
\left\la \phi_{\pp\tl{\Gamma}} \middle\vert \psi_{\pp\Gamma} \right\ra
=
\left\{
\begin{array}{rlcl}
&
\displaystyle{
\int
\prod_{e\in\Gamma^{o}}\rd g_{e}\,
\underline{
\la \phi_{\pp\Gamma} (\{g_{e}\}_{e\in\Gamma^{o}}) \vert \psi_{\pp\Gamma} (\{g_{e}\}_{e\in\Gamma^{o} }) \ra
}
}
\,,
&
\Gamma=\tl{\Gamma}\,, 
\\
&
0
\,,
&
\pp\Gamma=\pp\tl{\Gamma}\,, \quad
\Gamma^{o} \neq \tl{\Gamma}^{o}\,.
\end{array}
\right.
\ee
The first equation of eq.(\ref{eq:Definition-ScalarProduct-BoundaryMaps}) is equivalent to the scalar product for $\cH_{\Gamma}$ defined by eq.(\ref{eq:ScalarProduct-SpinNetwork-Integration})
\beq
\left\la \phi_{\pp\Gamma} \middle\vert \psi_{\pp\Gamma} \right\ra
&=&
\int \prod_{e\in\Gamma^{o}}\rd g_{e}\,
\underline{
\la \phi_{\pp\Gamma} (\{g_{e}\}_{e\in\Gamma^{o}}) \vert \psi_{\pp\Gamma} (\{g_{e}\}_{e\in\Gamma^{o} }) \ra
}
\nn \\
&=&\int \prod_{e\in\Gamma}\rd g_{e}\,
\overline{\phi_{ \Gamma } (\{g_{e}\}_{e\in\Gamma}) }\,
\psi_{\Gamma} (\{g_{e}\}_{e\in{\Gamma} })
=
\left\la \phi_{\Gamma} \middle\vert \psi_{\Gamma} \right\ra
\,.
\eeq
Note that the integrals for boundary holonomies in the second line are equivalent to the scalar product for $\cH_{\pp\Gamma}$ in the first line. In fact, it does not matter to add boundary holonomies for the first line since they are boundary unitaries and $\underline{\la jm | g^{\dagger} g | j'm'\ra}=\underline{\la jm | j'm'\ra}=\delta_{jj'}\delta_{mm'}$.
The second equation of eq.(\ref{eq:Definition-ScalarProduct-BoundaryMaps}) is due to the graph orthogonality in LQG. Therefore, we define dual boundary Hilbert space $(\cH_{\cB})^{*}$ via decomposition
\be \label{eq:DirectSum:DualBoundaryHilbert}
(\cH_{\cB})^{*}
=
\bigoplus_{\Gamma |\, \pp\Gamma=\cB} (\cH_{\pp\Gamma})^{*}
\,.
\ee
The normalization of $| \psi_{\pp\Gamma} \ra$ is expressed as
\be
\left\la \psi_{\pp\Gamma} \middle\vert {\psi}_{\pp\Gamma} \right\ra
=
\int
\prod_{e\in\Gamma^{o}}\rd g_{e}\,
\underline{ \la \psi_{\Gamma} (\{g_{e}\}_{e\in\Gamma^{o}}) \vert \psi_{\Gamma} (\{g_{e}\}_{e\in\Gamma^{o}}) \ra}
=1
\,.
\ee

\smallskip

Up to now, we have defined the scalar product for dual boundary Hilbert space $(\cH_{\cB})^{*}$. Let us close this subsection by summing up the three scalar products:
\begin{itemize}
\item
The scalar product `$\underline{\la \, \ | \, \ \ra}$' for boundary Hilbert space $\cH_{\pp\Gamma}$ does not integrate bulk holonomies, so insensitive to the bulk's graph structure.

\item
The scalar product `$\left\la \phi_{\tl{\Gamma}} \middle\vert \psi_{\Gamma} \right\ra$' for spin network Hilbert space and scalar product `$\left\la \phi_{\pp\tl{\Gamma}} \middle\vert \psi_{\pp\Gamma} \right\ra$' for dual boundary Hilbert space are always equal, i.e.,
\be \label{eq:ScalarProducts-SPN-DBH}
\left\la \phi_{\tl{\Gamma}} \middle\vert \psi_{\Gamma} \right\ra=\left\la \phi_{\pp\tl{\Gamma}} \middle\vert \psi_{\pp\Gamma} \right\ra\,.
\ee
\end{itemize}

In principle, the scalar product for $\cH_{\pp\Gamma}$ can be also defined by the scalar product for $\cH_{\Gamma}$ by considering boundary holonomies \cite{Livine:2021sbf}. The advantage with using bulk-boundary maps method is that we analyze the boundary holonomies and bulk holonomies separately.
For instance, evaluating $\left\la \phi_{\pp\Gamma} \middle\vert \psi_{\pp\Gamma} \right\ra$ requires less number of integrals to holonomies, leading to coarse-graining. Let us explain in more details.


\subsection{Coarse-graining via gauge-fixing}
This subsection shows how the gauge invariance takes into account the boundary: the gauge invariance of wave-function with respect to bulk gauge transformations leads to covariant gauge transformations on the boundary. This leads to the definition of coarse-graining via gauge-fixing.

Consider local gauge transformations at vertices, which leads to gauge transformation along bulk edges in this way: $g_{e} \mapsto h_{t(e)}g_{e}h_{s(e)}^{-1}$. The boundary state determined by wave-function transformed covariantly via dressing the boundary edges
with boundary holonomies \cite{Chen:2021vrc}:
\be \label{eq:GaugeInvariance-BoundaryMaps}
|\psi_{\pp\Gamma}
(\{h_{t(e)}g_{e}h_{s(e)}^{-1}\}_{e\in\Gamma^{o}} )\ra
=
\left(\bigotimes_{e\in\pp\Gamma} h_{v(e)}^{\eps_{e}^{v}}\right)
|\psi_{\pp\Gamma}
(\{g_{e}\}_{e\in\Gamma^{o}} )\ra
\,,
\ee
where $v(e)$ for $e\in\pp\Gamma$ denotes the vertex to which the boundary edge is attached, and $\eps_{e}^{v}=-1$ is for the outward boundary edge $v(e)=s(e)$ while $\eps_{e}^{v}=1$ is for the inward boundary edge $v(e)=t(e)$.

\smallskip

The gauge-fixing is based on the covariance. Following the earlier work on spin networks \cite{Freidel:2002xb} and subsequent works \cite{Livine:2006xk,Livine:2007sy,Livine:2008iq,Chen:2021vrc,Livine:2013gna, Charles:2016xwc, Anza:2016fix}, we choose an arbitrary root vertex $u\in\Gamma$ and a maximal tree in the bulk graph $T\subset\Gamma^{o}$. A tree is a set of edges that never form any cycle (or loop). A maximal tree $T$ has $(V-1)$ edges. Furthermore, for any vertex $v\in\Gamma$, it defines a unique path of edges $P[u\rightarrow v]\subset T$ along the tree linking the root vertex $u$ to the vertex $v$. This allows to gauge-fix all the group elements along tree edges to the identity,  $g_{e\in T}\mapsto\id$, by choosing gauge transformations $h_{v}$ at every vertex but the root vertex as:
\be \label{eq:GaugeFixing}
h_{v}=\left(\overleftarrow{\prod_{\ell\in P[u\rightarrow v]}} g_{\ell}\right)^{-1}\,,
\ee
where the product of group elements is taken from right to left over $g_{\ell}$ if the edge $\ell$ is oriented in the same direction than the path $P[u\rightarrow v]$ and over its inverse $g_{\ell}^{-1}$ otherwise.
This maps all the group elements on tree edges to the identity, $h_{t(e)}g_{e}h_{s(e)}^{-1}=\id$ for $e\in T$. The remaining edges, which do not belong the tree actually correspond to a minimal generating set of loops (or cycles) on the bulk graph $\Gamma^{o}$. Indeed, each non-tree edge defines a loop from the root vertex to the edge and back,
\be
\cL=\{ e | e \in \Gamma^{o} \setminus T \} \,, \qquad
e \in \cL : u \underset{T}{\rightarrow}s(e)\underset{e}{\rightarrow}t(e)\underset{T}{\rightarrow}v_{0}
\,.\nn
\ee
The number of loops is $L=E^{o}-V+1$ where $E^{o}$ is the number of bulk edges. One can show that every cycle on the bulk graph $\Gamma^{o}$ can generate from those cycles.
For $e\in\cL$, the gauge transformation built above does not map the group element $g_{e}$ to the identity anymore but maps it to the holonomy around the corresponding loop,
\be
\forall e\in\cL\,,\qquad
h_{t(e)}g_{e}h_{s(e)}^{-1}
=
\overleftarrow{\prod_{\ell\in \cL_{e}}} g_{\ell}
\equiv
G_{e}
\,.\nn
\ee
As a consequence, we obtain the gauge-fixed boundary state in line with eq.(\ref{eq:GaugeInvariance-BoundaryMaps})
\be \label{eq:GaugeFixing-BoundaryStates}
| \psi_{\pp\Gamma} (\{g_{e}\}_{e\in\Gamma^{o}}) \ra
=
\left(\bigotimes_{e\in\pp\Gamma} h_{v(e)}^{\eps_{e}^{v}}\right)
| \psi_{\pp\Gamma}
(\{G_{e}\}_{e\in\cL},\{\id\}_{e\in T}) \ra
\,.
\ee
Actually, we can glue boundary edges along the maximal tree $T$ such that the spin network transforms to a loopy spin network $\Upsilon$ to which $L$ of loops are attached, i.e.,
\be
\Upsilon=\pp\Gamma \sqcup \cL \sqcup u\,, \qquad \pp\Upsilon=\pp\Gamma\,, \qquad \Upsilon^{o}=\cL \sqcup u
\,.
\ee
This is one of definitions for coarse-graining \cite{Livine:2006xk}. We can think of the $| \psi_{\pp\Gamma} (\{G_{e}\}_{e\in\cL},\{\id\}_{e\in T}) \ra$ in eq.(\ref{eq:GaugeFixing-BoundaryStates}) providing an identical bulk-boundary map as a loopy spin network based on $\Upsilon$.

Moreover, the gauge-fixing allows to implement channel transformation as the examples in Fig.\ref{fig:ChannelTransform-Loop}.
\begin{figure}
\begin{subfigure}[t]{1.0\linewidth}
\centering
	\begin{tikzpicture} [scale=0.7]
\coordinate  (O1) at (0,0);
\coordinate  (O2) at (1.5,0);

\draw[thick] (O1)  -- node[midway,sloped] {$>$} node[midway,above=2] {$j_{5}[g_5]$} (O2);

\draw[thick] (O1)  -- node[midway,sloped] {$>$} ++ (-0.75,0.75) node[left] {$j_1[g_1]$};
\draw[thick] (O1)  -- node[midway,sloped] {$>$} ++ (-0.75,-0.75) node[left] {$j_2[g_2]$};

\draw[thick] (O2)  -- node[midway,sloped] {$>$} ++ (0.75,0.75) node[right] {$j_4[g_4]$};
\draw[thick] (O2)  -- node[midway,sloped] {$>$} ++ (0.75,-0.75) node[right] {$j_3[g_3]$};

\draw (O2) ++ (4,0) node {$= \ \displaystyle{ \sum_{j_{6} } } \, \{6j\} $};

\coordinate  (D3) at (8.5,0.5);
\coordinate  (D4) at (8.5,-0.5);

\draw[thick] (D3)  -- node[midway,sloped] {$<$} node[midway,right=2] {$j_{6}[\id]$} (D4);

\draw[thick] (D3)  -- node[midway,sloped] {$>$} ++ (-0.75,0.75) ++ (135:0.65) node {$j_1[g_1]$};
\draw[thick] (D3)  -- node[midway,sloped] {$>$} ++ (0.75,0.75) ++(45:0.65) node {$j_4[g_4 g_5]$};

\draw[thick] (D4)  -- node[midway,sloped] {$>$} ++ (-0.75,-0.75) ++ (225:0.65) node {$j_2[g_2]$};
\draw[thick] (D4)  -- node[midway,sloped] {$>$} ++ (0.75,-0.65) ++ (-45:0.75) node {$j_3[g_3 g_5]$};

\end{tikzpicture}
\caption{The $6j$-symbol represents the unitary for the channel transformation.}
\label{fig:ChannelTransform-Tree}
\end{subfigure}
\begin{subfigure}[t]{1.0\linewidth}
\centering
\begin{tikzpicture} [scale=0.7]
\coordinate  (O1) at (-1,0);
\coordinate  (O2) at (1,0);

\draw[thick] (O1)  to [bend left=45] node[midway,sloped] {$>$} node[midway,above=2] {$j_{5}[g_5]$} (O2);
\draw[thick] (O1)  to [bend right=45] node[midway,sloped] {$>$} node[midway,below=2] {$j_{6}[g_6]$} (O2);

\draw[thick] (O1)  -- node[midway,sloped] {$>$} ++ (-0.75,0.75) node[left] {$j_1[g_1]$};
\draw[thick] (O1)  -- node[midway,sloped] {$>$} ++ (-0.75,-0.75) node[left] {$j_2[g_2]$};

\draw[thick] (O2)  -- node[midway,sloped] {$>$} ++ (0.75,0.75) node[right] {$j_4[g_4]$};
\draw[thick] (O2)  -- node[midway,sloped] {$>$} ++ (0.75,-0.75) node[right] {$j_3[g_3]$};

\draw (O2) ++ (3.5,0) node {$= \ \displaystyle{ \sum } \, \{6j\}$};

\coordinate  (A1) at (8,0);
\coordinate  (A2) at (10,0);
\coordinate  (A3) at (9,0);

\draw[thick] (A1)  -- node[midway,sloped] {$>$} node[midway,above=2] {$\id$} (A3) -- node[midway,sloped] {$>$} node[midway,above=2] {$\id$} (A2);

\draw[thick] (A1)  -- node[midway,sloped] {$>$} ++ (-0.75,0.75) node[left] {$j_1[g_1]$};
\draw[thick] (A1)  -- node[midway,sloped] {$>$} ++ (-0.75,-0.75) node[left] {$j_2[g_2]$};

\draw[thick] (A2)  -- node[midway,sloped] {$>$} ++ (0.75,0.75) node[right] {$j_4[g_4 g_5]$};
\draw[thick] (A2)  -- node[midway,sloped] {$>$} ++ (0.75,-0.75) node[right] {$j_3[g_3 g_5]$};

\draw[thick] (A3) -- node[midway,left] {$\id$} ++ (0,-0.75) coordinate (A4) ;

\draw[thick,in=+30,out=-30,scale=3,rotate=-90] (A4)  to [loop] node[very near start,sloped,rotate=-90] {$>$} node[midway,below=2,rotate=0] {$j_6[ g_5^{-1} g_6]$} (A4);

\end{tikzpicture}
\caption{The $6j$-symbol represents the unitary for the channel transformation. The right hand side is equivalent to a loopy spin network in the sense of bulk-boundary map.}
\label{fig:ChannelTransform-Loop}
\end{subfigure}
\caption{Channel transformations for spin networks with tree and loop. 
}
\label{fig:ChannelTransform}
\end{figure}

\smallskip

With the gauge-fixing technique, we can prove below proposition:
\begin{prop} \label{Prop:InnerProduct-BoundaryHilbertSpace}
Let $\psi_{\Gamma}$ and $\phi_{\Gamma}$ be any two spin networks on same graph $\Gamma$. The corresponding boundary states are $| \psi_{\pp\Gamma}(\{g_{e}\}_{e\in\Gamma^{o}}) \ra$ and $| \phi_{\pp\Gamma}(\{g_{e}\}_{e\in\Gamma^{o}}) \ra$, respectively.
Then the scalar product for boundary Hilbert space is a function depending only on $L$ of group elements, i.e.,
\be
\underline{
\la \psi_{\pp\Gamma}
(\{g_{e}\}_{e\in\Gamma^{o}})
\vert
\phi_{\pp\Gamma}
(\{g_{e}\}_{e\in\Gamma^{o}}) \ra
}
=
F_{\psi\phi}(G_{1},\cdots,G_{L})
\,.
\ee
Moreover, the function is invariant under conjugation:
\be \label{eq:ConjugateInvariance-BulkHolonomies}
F_{\psi\phi}(G_{1},..,G_{L})=
F_{\psi\phi}(h \, G_1 \, h^{-1},\cdots,h \, G_L \, h^{-1})
\,, \quad \forall \, h \in \SU(2)
\,.
\ee
\end{prop}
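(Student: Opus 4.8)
The plan is to reduce both statements to the gauge-fixing construction of eq.(\ref{eq:GaugeFixing-BoundaryStates}) and then exploit the remaining global gauge freedom at the root vertex. First I would fix a root vertex $u$ and a maximal tree $T\subset\Gamma^{o}$, so that the non-tree edges form the loop set $\cL$ with $|\cL|=L=E^{o}-V+1$. Applying eq.(\ref{eq:GaugeFixing-BoundaryStates}) to both $|\psi_{\pp\Gamma}\ra$ and $|\phi_{\pp\Gamma}\ra$ writes each boundary state as a boundary unitary $\bigotimes_{e\in\pp\Gamma}h_{v(e)}^{\eps_{e}^{v}}$ acting on a state that depends on the bulk holonomies \emph{only} through the $L$ loop holonomies $G_{e}$, since all tree group elements have been set to $\id$.

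Next I would compute the boundary scalar product $\underline{\la\psi_{\pp\Gamma}|\phi_{\pp\Gamma}\ra}$. Both states carry the \emph{same} dressing operator $\bigotimes_{e\in\pp\Gamma}h_{v(e)}^{\eps_{e}^{v}}$, because the $h_{v}$ of eq.(\ref{eq:GaugeFixing}) are built from the tree and path structure, not from $\psi$ or $\phi$. Since that operator is a tensor product of $\SU(2)$ group elements acting on the boundary spins, it is unitary for the boundary scalar product, so by the identity $\underline{\la jm|g^{\dagger}g|j'm'\ra}=\delta_{jj'}\delta_{mm'}$ already used below eq.(\ref{eq:Definition-ScalarProduct-BoundaryMaps}) the dressings cancel pairwise. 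What survives is the scalar product of the two gauge-fixed states $|\psi_{\pp\Gamma}(\{G_{e}\}_{e\in\cL},\{\id\}_{e\in T})\ra$ and $|\phi_{\pp\Gamma}(\{G_{e}\}_{e\in\cL},\{\id\}_{e\in T})\ra$, which depends on the bulk holonomies purely through $G_{1},\dots,G_{L}$. This defines the function $F_{\psi\phi}(G_{1},\dots,G_{L})$ and establishes the first claim.

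For the conjugation invariance eq.(\ref{eq:ConjugateInvariance-BulkHolonomies}), the key observation is that the gauge-fixing left one vertex, the root $u$, untouched: a residual global gauge transformation $h_{v}=h$ at \emph{every} vertex simultaneously is still available. Under eq.(\ref{eq:GaugeTransformation}) each loop holonomy, being a closed path based at $u$, transforms by conjugation $G_{e}\mapsto h\,G_{e}\,h^{-1}$, while the tree identities $\id$ are preserved. By eq.(\ref{eq:GaugeInvariance-BoundaryMaps}) this residual transformation acts on the boundary state again by a boundary unitary $\bigotimes_{e\in\pp\Gamma}h^{\eps_{e}^{v}}$; inserting it into $\underline{\la\psi_{\pp\Gamma}|\phi_{\pp\Gamma}\ra}$ for both states and using unitarity as above, the boundary factors cancel and leave the value unchanged. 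Hence $F_{\psi\phi}(G_{1},\dots,G_{L})=F_{\psi\phi}(hG_{1}h^{-1},\dots,hG_{L}h^{-1})$.

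I expect the main obstacle to be bookkeeping rather than conceptual: one must check carefully that a single global $h$ at all vertices conjugates \emph{every} loop holonomy by the same $h$ — this relies on all loops being based at the common root $u$ — and that the signs $\eps_{e}^{v}$ in the boundary dressing are handled consistently so the unitary cancellation is exact for both inward and outward boundary edges. Once the common-base-point structure of $\cL$ is made explicit, both the dependence on only $L$ variables and the conjugation covariance follow directly from the unitarity of the $\SU(2)$-action on the boundary Hilbert space.
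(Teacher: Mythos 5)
Your proof is correct and follows essentially the same route as the paper: gauge-fix along a maximal tree so that both boundary states acquire the \emph{same} boundary dressing, which then cancels under the boundary scalar product by unitarity, leaving a function of only the $L$ loop holonomies. In fact you go slightly further than the paper's own proof, which stops after establishing the $L$-variable dependence and leaves the conjugation invariance eq.(\ref{eq:ConjugateInvariance-BulkHolonomies}) unargued; your observation that the residual global gauge transformation at the root conjugates every loop holonomy by the same $h$ (because all loops in $\cL$ are based at the common root $u$) and acts on the boundary only by a cancellable unitary is exactly the missing step, and it is handled correctly.
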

\begin{proof}
Since spin networks $\psi_{\Gamma} $ and $\phi_{\Gamma} $ based on same graph $\Gamma$, we repeat the gauge-fixing implementation for $\psi_{\Gamma}$ to $\phi_{\Gamma}$.
The gauge-fixing leads to identical boundary holonomies with respect to eq.(\ref{eq:GaugeFixing-BoundaryStates}).
As a consequence, the scalar product for $\cH_{\pp\Gamma}$ depends only on these $L$ of group elements,
\be \label{eq:InnerProduct-BoundaryHilbertSpace-GaugeFixed}
\begin{aligned}
\underline{
\la \psi_{\pp\Gamma} (\{g_{e}\}_{e\in\Gamma^{o}}) | \phi_{\pp\Gamma} (\{g_{e}\}_{e\in\Gamma^{o}}) \ra
}
=
&
\underline{
\la \psi_{\pp\Gamma} (\{G_{e}\}_{e\in\cL},\{\id\}_{e\in T}) \vert \phi_{\pp\Gamma} (\{G_{e}\}_{e\in\cL},\{\id\}_{e\in T}) \ra
}
\\
\equiv
&
F_{\psi\phi}(G_{1},\cdots,G_{L})
\,,
\end{aligned}
\ee
since for every boundary edges, the boundary holonomies are undone by the scalar product for $\cH_{\pp\Gamma}$: $\underline{\la j_{e} m_{e} | h^{\dagger} h | j'_{e} m'_{e} \ra}=\underline{\la j_{e} m_{e} | j'_{e} m'_{e} \ra}=\delta_{j_{e} j'_{e} } \delta_{m_{e} m'_{e} }$.
\end{proof}
For spin networks $\psi_{\Gamma} $, $\phi_{\tl{\Gamma} } $ based on different graphs, the boundary holonomies cannot be undone by scalar product $\underline{\la \, \ | \, \ \ra}$. But thanks for graph orthogonality, the scalar product for $(\cH_{\cB})^{*}$ simply vanishes after integrating holonomies.


\subsection{Unitary for dual boundary Hilbert space} \label{subsection:Unitary-DualHilbertSpace}
This subsection is to read the boundary holonomies and channel transformations as unitary maps for $(\cH_{\cB})^{*}$. Thus the coarse-graining is viewed as a consequence of the unitarity of $(\cH_{\cB})^{*}$. It will be useful when comes to entanglement topic.

\smallskip

Following the analysis below eq.(\ref{eq:GaugeFixing-BoundaryStates}), there exists a loopy spin network state mapping identically as the gauge-fixed spin network state, i.e.
\be \label{eq:Correspondence-GaugeFixed,LoopySPN}
| \psi_{\pp\Gamma} (\{G_{e}\}_{e\in\cL},\{\id\}_{e\in T}) \ra
=
| \psi_{\pp\Upsilon} (\{G_{e}\}_{e\in\cL}) \ra
\,.
\ee
So eq.(\ref{eq:InnerProduct-BoundaryHilbertSpace-GaugeFixed}) can be written as
\be
\underline{
\la \psi_{\pp\Gamma}
(\{g_{e}\}_{e\in\Gamma^{o}})
\vert
\phi_{\pp\Gamma}
(\{g_{e}\}_{e\in\Gamma^{o}}) \ra
}
=
\underline{
\la \psi_{\pp\Upsilon}
(\{G_{e}\}_{e\in\cL})
\vert
\phi_{\pp\Upsilon}
(\{G_{e}\}_{e\in\cL}) \ra
}
\,,
\ee
and the scalar product for $(\cH_{\pp\Gamma})^{*} \subset (\cH_{\cB})^{*}$,
\be \label{eq:DualBoundary-SNWLoopy}
\begin{aligned}
\la \psi_{\pp\Gamma} \vert \phi_{\pp\Gamma} \ra
=&
\int
\prod_{e\in\Gamma^{o}}\rd g_{e}\,
\underline{
\la \psi_{\pp\Gamma}
(\{g_{e}\}_{e\in\Gamma^{o}})
\vert
\phi_{\pp\Gamma}
(\{g_{e}\}_{e\in\Gamma^{o}}) \ra
}
\\
=&
\int
\prod_{e\in\Upsilon^{o}}\rd G_{e}\,
\underline{
\la \psi_{\pp\Upsilon}
(\{G_{e}\}_{e\in\cL})
\,\vert\,
\phi_{\pp\Upsilon}
(\{G_{e}\}_{e\in\cL}) \ra
}
=
\la \psi_{\pp\Upsilon} \vert \phi_{\pp\Upsilon} \ra
\,.
\end{aligned}
\ee
From the second equality to the third, only $L$ number of integrals will be taken account, since the rest of integrals are trivial $\int_{\SU(2)} \rd g_e=1$.

\smallskip

Notably, the correspondence between $| \psi_{\pp\Gamma} (\{g_{e}\}_{e\in\Gamma^{o}}) \ra$ and $| \psi_{\pp\Upsilon} (\{G_{e}\}_{e\in\cL}) \ra$ is one-to-many, since there are many doable implementations for gauge-fixing. But the wanted relation $\la \psi_{\pp\Gamma} \vert \phi_{\pp\Gamma} \ra=\la \psi_{\pp\Upsilon} \vert \phi_{\pp\Upsilon} \ra$ is not affected by how to gauge-fix.

\smallskip

Eq.(\ref{eq:DualBoundary-SNWLoopy}) sheds light on the unitary maps for $(\cH_{\cB})^{*}$. Given a spin network state, we implement gauge-fix to obtain a loopy spin network state such that the two states provides identical bulk-boundary maps up to boundary holonomies, i.e., from $| \psi_{\pp\Gamma} \ra$ to $| \psi_{\pp\Upsilon} \ra$,
\be \label{eq:Unitary-GaugeFixed}
| \psi_{\pp\Gamma} (\{g_{e}\}_{e\in\Gamma^{o}}) \ra
=
\left(\bigotimes_{e\in\pp\Gamma} h_{v(e)}^{\eps_{e}^{v}}\right)
| \psi_{\pp\Upsilon}
(\{G_{e}\}_{e\in\Upsilon^{o}}) \ra
\equiv
\cG_{\cB} \act
| \psi_{\pp\Upsilon}
(\{G_{e}\}_{e\in\Upsilon^{o}}) \ra
\,,
\ee
where $\cG_{\cB}$ denotes the operation of boundary holonomies.
The reverse problem, i.e. how to recover $| \psi_{\pp\Gamma} \ra$ from $| \psi_{\pp\Upsilon} \ra$ is worth to study, which may appear elsewhere. At present stage, what one should bear in mind is that the reverse procedure is doable and is related to manipulating boundary holonomies.

\smallskip

In the present work, we also consider another type of unitary. The $\cH_{\Upsilon}$ is the intertwiner space for loopy vertex, and is defined according to (\ref{eq:IntertwinerHilbertSpace}),
\be \label{eq:HilbertSpace-Loopy}
\cH_{\Upsilon}
:=
\textrm{Inv}_{\SU(2)}\left[
\bigotimes_{e\in\pp \Upsilon |\,u=s(e)} \cV_{j_{e}}
\otimes
\bigotimes_{e\in\pp\Upsilon |\,u=t(e)} \cV_{j_{e}}^{*}
\otimes \bigotimes_{e'\in\cL} \big( \cV_{j_{e'}} \otimes \cV_{j_{e'} }^{*} \big)
\right]
\,.
\ee
Then we denote $\cU_{\Upsilon}$ the channel transformation for the loopy intertwiner. $\cU_{\Upsilon}$ is just for re-definition of internal labels for the intertwiner (e.g. Fig.\ref{fig:ChannelTransform}).

\smallskip

For the purpose of studying coarse-graining, we list the two classes of unitaries for the dual boundary Hilbert space $(\cH_{\cB})^{*}$:
\begin{enumerate}[label=(\alph*)]
\item
Boundary holonomies for boundary unitary $\cG_{\cB} : \cH_{\cB} \longrightarrow \cH_{\cB}$.

\label{Unitary:BoundaryHolonomies}

\item
Channel transformation for loopy intertwiner unitary $\cU_{\Upsilon} : \cH_{\Upsilon} \longrightarrow \cH_{\Upsilon}$.

\label{Unitary:LoopyIntertwiner}
\end{enumerate}
Both particular $\cG_{\cB}$ and $\cU_{\Upsilon}$ never change the boundary representations, i.e. change the boundary spins.
Moreover, they are mutually commutative, i.e., $\cG_{\cB} \circ \cU_{\Upsilon}=\cU_{\Upsilon} \circ \cG_{\cB}$.

\smallskip

Let us close the subsection by summing up above discussions:
\begin{prop} \label{Prop:Unitary:Graph-LoopyVertex}
Given boundary $\cB$, a unitary map for the dual boundary Hilbert space $(\cH_{\cB})^{*}$ preserves the scalar product
\be \label{eq:Unitary-ScalarProduct-BoundaryMaps}
\left\la \phi_{\pp\tl{\Gamma}} \middle\vert \psi_{\pp\Gamma} \right\ra
=
\left\{
\begin{array}{rlcl}
&
\left\la \phi_{\pp\Upsilon} \middle\vert \psi_{\pp\Upsilon} \right\ra
\,, \qquad
&
\Gamma=\tl{\Gamma}\,,
\\
&
0
\,, \qquad
&
\Gamma \neq \tl{\Gamma}\,.
\end{array}
\right.
\ee
We can compose boundary holonomies $\cG_{\cB} : \cH_{\cB} \longrightarrow \cH_{\cB}$ and channel transformation $\cU_{\Upsilon}: \cH_{\Upsilon} \longrightarrow \cH_{\Upsilon}$ to define the unitary map $\cU_{\cB} : (\cH_{\cB})^{*} \longrightarrow (\cH_{\cB} )^{*}$ via $\cU_{\cB}=\cG_{\cB} \circ \cU_{\Upsilon}$.

In particular, suppose $| \psi_{\pp\Gamma} (\{g_{e}\}_{e\in\Gamma^{o}}) \ra \in (\cH_{\pp\Gamma} )^{*}$ and $| \psi_{\pp\Upsilon} ( \{ G_{e} \}_{e\in \Upsilon^{o} } ) \ra \in (\cH_{\pp\Upsilon} )^{*}$. Let $\big\{ | \Psi_{ \Gamma, \{ I_{v} \} } \ra \big\}$ be orthonormal basis for $\cH_{\Gamma}$ and $\big\{ | \Psi_{\Upsilon, I_{u} } \ra \big\}$ be orthonormal basis for $\cH_{\Upsilon}$. Then according to eq.(\ref{eq:Correspondence-GaugeFixed,LoopySPN}), they are related by:
\begin{align}
| \psi_{\pp\Gamma} (\{g_{e}\}_{e\in\Gamma^{o}}) \ra
=&
\Big[ \cU_{\cB} \Big]^{\dagger}
\act
| \psi_{\pp\Upsilon } ( \{ G_{e} \}_{e\in \Upsilon^{o} } ) \ra
\,, \label{eq:Unitary-SchrodingerPicture}
\\
| \Psi_{\pp\Gamma, \{ I_{v} \} }(\{g_{e}\}_{e\in\Gamma^{o}}) \ra
=&
\sum_{ I_{u } }
\tensor{   {\Big[ \cU_{\cB} \Big] }   } {^{ I_{ u }  }  _{  \{ I_{v} \}_{v\in\Gamma}    }    }
\act
| \Psi_{\pp\Upsilon, I_{u} } ( \{ G_{e} \}_{e\in \Upsilon^{o} } ) \ra
\,. \label{eq:Unitary-HeisenbergPicture}
\end{align}
The first line expresses the Schr\"odinger picture for the unitary, and the second line expresses the Heisenberg picture for the unitary.
\end{prop}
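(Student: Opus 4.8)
The plan is to assemble the claim from the apparatus already established, since Proposition \ref{Prop:Unitary:Graph-LoopyVertex} essentially packages the preceding results into operator form. I would proceed in four steps. First, I would dispatch the scalar-product identity eq.(\ref{eq:Unitary-ScalarProduct-BoundaryMaps}). For $\Gamma=\tl\Gamma$ this is precisely eq.(\ref{eq:DualBoundary-SNWLoopy}): gauge-fixing the maximal tree to the identity via eq.(\ref{eq:GaugeFixing}) and invoking Proposition \ref{Prop:InnerProduct-BoundaryHilbertSpace} leaves the boundary integrand a function of the $L$ loop holonomies alone, so the $(V-1)$ tree integrals are trivial and the surviving integral is $\la\psi_{\pp\Upsilon}|\phi_{\pp\Upsilon}\ra$. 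For $\Gamma\neq\tl\Gamma$ sharing the boundary $\cB$, the vanishing is the second branch of eq.(\ref{eq:Definition-ScalarProduct-BoundaryMaps}), i.e. graph orthogonality.

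Second, I would construct $\cU_{\cB}=\cG_{\cB}\circ\cU_{\Upsilon}$ and check it is a well-defined unitary. The dressing $\cG_{\cB}$ is a tensor product of $\SU(2)$ group elements acting on the boundary representations, hence unitary; the channel map $\cU_{\Upsilon}$ is a $\{3nj\}$ recoupling, i.e. an orthonormal change of intertwiner basis on $\cH_{\Upsilon}$ eq.(\ref{eq:HilbertSpace-Loopy}), hence unitary. Because $\cG_{\cB}$ touches only boundary-edge factors while $\cU_{\Upsilon}$ reorganizes internal recoupling labels without altering boundary spins, the two commute, so the composite is unambiguous and unitary.

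Third, the Schr\"odinger relation eq.(\ref{eq:Unitary-SchrodingerPicture}) is the operator rewriting of eq.(\ref{eq:Unitary-GaugeFixed}): gauge-fixing dresses the loopy state by the boundary holonomies $\cG_{\cB}$, and any residual freedom in presenting the resulting loopy intertwiner in a chosen channel is absorbed into $\cU_{\Upsilon}$; collecting both into $\cU_{\cB}$ yields the claimed identity (the adjoint direction simply records that eq.(\ref{eq:Unitary-GaugeFixed}) maps the loopy state back to $|\psi_{\pp\Gamma}\ra$). For the Heisenberg relation eq.(\ref{eq:Unitary-HeisenbergPicture}) I would apply this to a basis state eq.(\ref{eq:SpinNetwork}): setting $g_{e\in T}=\id$ collapses the Wigner matrices on the tree edges to Kronecker deltas, which contracts the per-vertex intertwiners $\{I_{v}\}$ along the tree into a single loopy intertwiner $I_{u}$ at the root $u$; the contraction coefficients are by definition the matrix elements $\tensor{[\cU_{\cB}]}{^{I_u}_{\{I_v\}}}$, and the sum over $I_{u}$ reproduces $|\Psi_{\pp\Upsilon,I_{u}}\ra$ carrying the loop holonomies $G_{e}$.

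I expect the main obstacle to be the Heisenberg step: one must verify that the tree contraction is genuinely a unitary recoupling onto the loopy intertwiner basis of eq.(\ref{eq:HilbertSpace-Loopy}), and reconcile this with the one-to-many character of gauge-fixing noted below eq.(\ref{eq:DualBoundary-SNWLoopy}) --- different roots or trees give different matrices $\tensor{[\cU_{\cB}]}{^{I_u}_{\{I_v\}}}$ yet the same inner products. The clean resolution is that unitarity of $\cU_{\cB}$ together with the scalar-product identity eq.(\ref{eq:Unitary-ScalarProduct-BoundaryMaps}) pins down the Heisenberg matrix elements up to the residual conjugation freedom recorded in eq.(\ref{eq:ConjugateInvariance-BulkHolonomies}), under which all scalar products are left invariant.
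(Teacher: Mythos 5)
Your proposal is correct and follows essentially the same route as the paper, which offers no standalone proof for this proposition but explicitly presents it as a summary of the preceding discussion (gauge-fixing to the loopy state via eq.(\ref{eq:GaugeFixing-BoundaryStates}) and eq.(\ref{eq:Correspondence-GaugeFixed,LoopySPN}), the reduction of the integral to the $L$ loop holonomies in eq.(\ref{eq:DualBoundary-SNWLoopy}), graph orthogonality for the vanishing branch, and the unitarity of $\cG_{\cB}$ and $\cU_{\Upsilon}$). Your added care on the Heisenberg step and the one-to-many gauge-fixing ambiguity is consistent with the paper's own remarks and introduces no error.
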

We have clarified the unitary for dual boundary Hilbert space. The coarse-graining in this paper is now understood as a particular unitary map for the dual boundary Hilbert space.


\section{Spin network entanglement} \label{Section:SPNWsEntanglement}
This section aims to define the spin network entanglement and to present how to study the entanglement under the coarse-graining. The goal is to show that the coarse-graining exactly preserves the spin network entanglement at kinematical level. 

\subsection{Reduced density matrices on spin network} \label{Subsection:ReducedDensityMatrices}
This subsection is meant to define the spin network entanglement. A graph $\Gamma$ is partitioned by starting from partitioning the set of vertices. The set of vertices $\mathscr{V}=\{ v_1\,, v_2\,, \cdots \}$ is partitioned into subsets
\be
\mathscr{V}=\bigsqcup_{i=1}^{n} \mathscr{V}_{i}
\ee
such that the vertices of every subset $\mathscr{V}_{i}$ can be connected by a path. 
Then every $\mathscr{V}_{i}$ defines a subgraph $\Gamma_{i}$ by this way: (i) For the bulk of $\Gamma_{i}$, i.e. $\Gamma_{i}^{o}$, the set of vertices is $\mathscr{V}_{i}$, and the set of bulk edges consists of the edges in $\Gamma$ whose two-end vertices are both in $\mathscr{V}_{i}$. (ii) For those edges whose two-end vertices are in different subsets, they are split into two piecewise. For instance, suppose an edge $e$ whose $s(e) \in \mathscr{V}_{i}, \ t(e) \in \mathscr{V}_{j}$ and $i \neq j$, then $e$ is split into two piecewise $e_{i}, \ e_{j}$ such that $e=e_{i} \sqcup e_{j}$. (iii) All one-end edges whose source vertex or target vertex belongs to $\mathscr{V}_{i}$, define the boundary of $\Gamma_{i}$, i.e. $\pp\Gamma_{i}$.
Therefore, the graph $\Gamma$ is partitioned by
\be
\Gamma=\bigsqcup_{i=1}^{n} \Gamma_{i}\,, \qquad
\Gamma_{i}=\Gamma_{i}^{o} \sqcup \pp\Gamma_{i}\,, \qquad
\pp\Gamma_{i} \equiv \cB_{i}
\,.
\ee
Based on the partition, the spin network Hilbert space satisfies the set-relation
\be \label{eq:Hilbert-SPN-SPNSubN}
\cH_{\Gamma}
\subset \bigotimes_{i=1}^{n} \cH_{\Gamma_{i}}
\,.
\ee
Every $\cH_{\Gamma_{i}}$ is the spin sub-network Hilbert space based on the corresponding $\Gamma_{i}$.
Here the $\subset$ sign is due to the spin-matching constraint imposed amongst $\cH_{\pp\Gamma_{i}}$. 

Above definition can be viewed as a generalization for the particular situation that vertices are partitioned into sole vertex, i.e. every sole vertex and its attached edges make up a sub-graph with sole vertex,
\beq
\cH_{ \Gamma }
=
\bigoplus_{\{j_{e}\}_{e\in\Gamma}}
\cH_{v}^{\{j_{e}\}_{e\ni v}}
\subset \bigotimes_{v\in\Gamma } \, \cH_{v}
\,,
\eeq
where the vertex Hilbert spaces are defined as
\be
\cH_{v}^{\{j_{e}\}_{e\ni v}}
=
\textrm{Inv}_{\SU(2)}\Big{[}
\bigotimes_{e|\,v=s(e)} \cV_{j_{e}}
\otimes
\bigotimes_{e|\,v=t(e)} \cV_{j_{e}}^{*}
\Big{]}
\quad\textrm{and}\quad
\cH_{v}
=
\bigoplus_{ \{j_{e} \}_{e\ni v} }
\cH_{v}^{\{j_{e}\}_{e\ni v}}
\,.
\ee
Here the $\subset$ sign is due to the spin-matching constraint imposed amongst every bulk edge.

Spin-matching constraint introduces entanglement between the vertices or sub-networks to which it connects. The entanglement is introduced by spin-superposition. As done in \cite{Livine:2017fgq,Chen:2022rty}, a strategy is to consider spin networks as states in the larger Hilbert space $\bigotimes_{v\in\Gamma } \, \cH_{v}$ of tensor products of intertwiners without imposing the spin matching constraints along the bulk edges $e\in \Gamma^{o}$. The advantage with this starting point is that we are directly looking at correlations and entanglement between $\SU(2)$-gauge invariant excitations -the intertwiners- and that we do not have to worry about gauge breaking and correlations between non-gauge invariant observables (see e.g. \cite{Donnelly:2008vx,Donnelly:2016auv,Livine:2017fgq} for a discussion on this issue).

The present work will follow the strategy, and generalize the partition from $\bigotimes_{v\in\Gamma } \, \cH_{v}$ to $\bigotimes_{i=1}^{n} \, \cH_{\Gamma_{i} }$, i.e. from vertices to sub-networks.

\smallskip

To define the entanglement between sub-networks, we start from entanglement between vertices. A generic spin network state can be decomposed as a superposition over spin network basis states:
\beq  \label{eq:SPN-decomposition}
| \psi_{\Gamma} \ra
=
\sum_{\{I_v\}}
C_{ \Gamma } ( \{ I_{v} \} )
 \,
\bigotimes_{v\in\Gamma}
| \Psi_{ v, I_v } \ra
\,, \qquad \text{where} \quad
| \Psi_{ \Gamma, \{ I_v \} } \ra
=
\bigotimes_{v\in\Gamma}
| \Psi_{ v, I_v } \ra
\,.
\eeq
Here the intertwiner basis state $| \Psi_{ v, I_v } \ra \in \cH_{v}$ have definite spins and intertwiner, with spins and internal intertwiner indices packaged in the labels $I_{v}$. Then the coefficients $C_{ \Gamma } ( \{ I_{v} \} )$ for a general state allows for superpositions of both spins and intertwiners, thus leading to correlation between intertwiner states located at different vertices.

Since spin network basis states can be factorized as the tensor product of intertwiner basis state $| \Psi_{ v, I_v } \ra \in \cH_{v}$, we can group up the intertwiner basis states within every sub-network, which defines a factorization for spin network basis state $| \Psi_{ \Gamma, \{ I_v \} } \ra \in \cH_{\Gamma}$ with respect to spin sub-networks basis states $| \Psi_{ \Gamma_{i}, \{ I_{v} \}_{v\in\Gamma_{i}} } \ra \in \cH_{\Gamma_{i} }$,
\beq  \label{eq:SPSubN-decomposition-BasisState}
| \Psi_{ \Gamma, \{ I_v \} } \ra
&=&
\bigotimes_{i=1}^{n}
| \Psi_{ \Gamma_{i}, \{ I_v \}_{v\in\Gamma_{i}} } \ra
\,, \qquad \text{where} \quad
| \Psi_{ \Gamma_{i}, \{ I_v \}_{v\in\Gamma_{i}} } \ra
=
\bigotimes_{v\in\Gamma_{i} }
| \Psi_{ v, I_v } \ra
\,.
\eeq
It allows to re-group the coefficients $C_{ \Gamma } ( \{ I_{v} \} )=C_{ \Gamma } ( [ \{ I_{v} \}_{v\in\Gamma_{i}} ]_{i} )$ where every square bracket $[,]_{i}$ is adopted to cluster the vertices belonging to respect $\Gamma_{i}$.
In this way, the eq.(\ref{eq:SPN-decomposition}) can be also written as
\beq  \label{eq:SPSubN-decomposition}
| \psi_{\Gamma} \ra
=
\sum_{\{I_v\}_{v\in\Gamma_{i}} }
C_{ \Gamma } ( [\{ I_v \}_{v\in\Gamma_{i}}]_{i} )
 \,
\bigotimes_{i=1}^{n}
| \Psi_{ \Gamma_{i}, \{ I_v \}_{v\in\Gamma_{i}} } \ra
\,.
\eeq
The entanglement of $| \psi_{\Gamma} \ra$ between sub-graphs is encoded into the unfactorizability with respect to $\bigotimes_{i=1}^{n} \cH_{\Gamma_{i}}$, which can be studied via the formalism of density matrix.

\smallskip

Given any pure spin network state $| \psi_{\Gamma} \ra$, it corresponds to pure density matrix $\rho_{\Gamma}[\psi]=| \psi_{\Gamma} \ra\la \psi_{\Gamma} |$. It is straightforward to generalize the following procedure to the cases of mixed density matrix $\rho_{\Gamma}$, since any mixed density matrix allows decomposition $\rho=\sum_{k} W_{k} | \psi_{k} \ra\la \psi_{k} |$.
The reduced density matrix for sub-network $\Gamma_{i}$ is defined via partial trace over the complementary:
\be \label{eq:SPN-PartialTrace}
\rho_{\Gamma_{i}} [ \psi ]
=
\tr_{ \cH_{ \Gamma \setminus \Gamma_{i} } }\rho_{\Gamma} [ \psi ]
\quad \in \textrm{End} \Big[ \cH_{\Gamma_{i}} \Big]
\,, \qquad
\Gamma \setminus \Gamma_{i}=\bigsqcup_{j\neq i}^{n} \, \Gamma_{j}
\,.
\ee
The partial trace is implemented by choosing an orthonormal basis for $\cH_{\Gamma \setminus \Gamma_{i}}$,
\beq
| \Psi_{ \Gamma \setminus \Gamma_{i} }, \{ I_v \}_{v \notin \Gamma_{i} } \ra
=
\bigotimes_{j \neq i}^{n}
| \Psi_{ \Gamma_{j}, \{ I_v \}_{v\in\Gamma_{j}} } \ra
\,, \qquad \text{where} \quad
| \Psi_{ \Gamma_{j}, \{ I_v \}_{v\in\Gamma_{j}} } \ra
=
\bigotimes_{v\in\Gamma_{j} }
| \Psi_{ v, I_v } \ra
\,,
\eeq
therefore, eq.(\ref{eq:SPN-PartialTrace}) can be expressed by
\be \label{eq:SPN-PartialTrace-onBasis}
\rho_{\Gamma_{i}} [ \psi ]
=
\sum_{ \{ I_v \}_{v \notin \Gamma_{i} } }
\la \Psi_{ \Gamma \setminus \Gamma_{i} }, \{ I_v \}_{v \notin \Gamma_{i} } |
\, \rho_{\Gamma} [ \psi ] \,
| \Psi_{ \Gamma \setminus \Gamma_{i} }, \{ I_v \}_{v \notin \Gamma_{i} } \ra
\quad \in \textrm{End} \Big[ \cH_{\Gamma_{i}} \Big]
\,.
\ee
According to eq.(\ref{eq:SPSubN-decomposition}), the scalar product $\la \Psi_{ \Gamma \setminus \Gamma_{i} }, \{ I_v \}_{v \notin \Gamma_{i} } | \psi_{\Gamma} \ra$ is presented by
\begin{align}
&
| \psi_{\Gamma} \ra
=
\sum_{ \{ I_{v} \}_{v \in \Gamma_{i} } }
\sum_{ \{ I_{v'} \}_{v' \notin \Gamma_{i} } }
C_{ \Gamma } ( \{ I_v \}_{v\in\Gamma_{i}}, \{ I_{v'} \}_{v'\notin\Gamma_{i}} ) | \Psi_{ \Gamma_{i}, \{ I_v \}_{v\in\Gamma_{i}} } \ra
\otimes
| \Psi_{ \Gamma \setminus \Gamma_{i} }, \{ I_{v'} \}_{v' \notin \Gamma_{i} } \ra
\,,
\\
&
\la \Psi_{ \Gamma \setminus \Gamma_{i} }, \{ I_{v'} \}_{v' \notin \Gamma_{i} } | \psi_{\Gamma} \ra
=
\sum_{ \{ I_{v} \}_{v \in \Gamma_{i} } }
C_{ \Gamma } ( \{ I_v \}_{v\in\Gamma_{i}}, \{ I_{v'} \}_{v'\notin\Gamma_{i}} ) | \Psi_{ \Gamma_{i}, \{ I_v \}_{v\in\Gamma_{i}} } \ra
\,, \label{eq:SPN-PartialTrace-ScalarProduct}
\end{align}
thus the reduced density $\rho_{\Gamma_{i}} [ \psi ]$ in eq.(\ref{eq:SPN-PartialTrace-onBasis}) is expressed by
\begin{align} \label{eq:SPSubN-PartialTrace-onBasis}
\rho_{\Gamma_{i}} [ \psi ]
=
\sum_{ \{ I_{v'} \}_{v' \notin \Gamma_{i} } }
\sum_{ \{ I_{v} \}_{v \in \Gamma_{i} } }
\sum_{ \{ \tl{I}_{v} \}_{v \in \Gamma_{i} } }
&
C_{ \Gamma } ( \{ I_v \}_{v\in\Gamma_{i}}, \{ I_{v'} \}_{v'\notin\Gamma_{i}} )
\overline{ C_{ \Gamma } ( \{ \tl{I}_v \}_{v\in\Gamma_{i}}, \{ I_{v'} \}_{v'\notin\Gamma_{i}} ) }
\\
&
| \Psi_{ \Gamma_{i}, \{ I_v \}_{v\in\Gamma_{i}} } \ra\la \Psi_{ \Gamma_{i}, \{ \tl{I}_v \}_{v\in\Gamma_{i}} } |
\,. \nn
\end{align}
The reduced density matrix $\rho_{\Gamma_{i}} [ \psi ]$ encodes the information about intertwiners located in $\Gamma_{i}$, so the spin network entanglement is related to the intertwiner entanglement \cite{Livine:2017fgq}. In other words, the entanglement between spin sub-networks amounts to being entanglement between these `cluster-intertwiner'.

\smallskip

So far we have discussed the entanglement structure of spin networks, and the definition of reduced density matrices for spin sub-networks. The next subsection is to apply coarse-graining for the spin network entanglement.


\subsection{Entanglement coarse-graining} \label{Section:EntanglementCoarseGraining}
In this part, we investigate the coarse-graining for spin network entanglement. Suppose that $\Gamma$ is partitioned into $\Gamma=\bigsqcup_{i=1}^{n}\Gamma_{i}$. We will show that the entanglement between these sub-graphs $\{ \Gamma_{i} \}_{i=1}^{n}$, can be reflected in coarse-grained graph $\Gamma_{(R)}$ made up by loopy graphs $\{ \Upsilon_{i} \}_{i=1}^{n}$.


Let us apply the viewpoint of dual boundary Hilbert space for this goal. Any partition, needless to say, introduces boundaries. The dual boundary Hilbert spaces inherit the entanglement structure eq.(\ref{eq:Hilbert-SPN-SPNSubN})
\be \label{eq:DualBoundaryHilbert-SPN-SPNSubN}
( \cH_{\pp\Gamma} )^{*}
\subset
\bigotimes_{i=1}^{n} \,
(\cH_{\pp\Gamma_{i} } )^{*}
\,.
\ee
Here every $(\cH_{\pp\Gamma_{i} } )^{*}$ is the dual boundary Hilbert space associative with spin sub-network $\Gamma_{i}$. As the correspondence between $| \psi_{\pp\Gamma} \ra \in (\cH_{\pp\Gamma })^{*}$ and $| \psi_{\Gamma} \ra \in \cH_{\Gamma}$, the density matrix $\rho_{\pp\Gamma^{*}} \in \mathrm{End} \Big[ (\cH_{\pp\Gamma })^{*} \Big]$ corresponds to $\rho_{\Gamma} \in \mathrm{End} \Big[ \cH_{\Gamma} \Big]$ in same way. Due to eq.(\ref{eq:ScalarProducts-SPN-DBH}), the information they contain is identical. The next step is to check that the scalar product for $\cH_{\Gamma_{i}}$ and the scalar product for $( \cH_{\pp\Gamma_{i}} )^{*}$ are equivalent with respect to partial trace. This is again expected due to eq.(\ref{eq:ScalarProducts-SPN-DBH}), but notice the subtlety
\be \label{eq:ScalarProducts-SPN-DBH-PartialTrace}
\cH_{\Gamma_{i} } \ni
\la \Psi_{ \Gamma \setminus \Gamma_{i}, \{ I_v \}_{v \notin \Gamma_{i} } } | \psi_{\Gamma} \ra
\overset{!}{=}
\cG_{ \cB_{i} } \act
\la \Psi_{ \pp(\Gamma \setminus \Gamma_{i}), \{ I_v \}_{v \notin \Gamma_{i} } } | \psi_{\pp\Gamma} \ra
\in (\cH_{\pp\Gamma_{i} } )^{*}
\,.
\ee
The left hand side is computed from the spin network Hilbert space that does not care the boundary holonomies, and right hand side is computed from the dual boundary Hilbert space keeping the $\cG_{\cB_{i}}$. Here sign $\overset{!}{=}$ is adopted to indicate the equivalence but also to notice the slight subtlety.

Of course, the subtlety can be also understood from spin network wave-function: the holonomy is split in terms of group multiplication when the edge is cut, then partial trace removes one piecewise holonomy with integration. So $\cH_{\Gamma}$ and $(\cH_{\pp\Gamma})^{*}$ perspectives are entirely equivalent.

\smallskip

The boundary holonomies do not affect the scalar product for $(\cH_{\pp\Gamma_{i} } )^{*}$, thus they do not change the Schmidt eigenvalues of the reduced density matrix. Indeed, from the viewpoint of subsection \ref{subsection:Unitary-DualHilbertSpace}, the $\cG_{\cB_{i}}$ should be understood as a unitary for $(\cH_{\pp\Gamma_{i} } )^{*}$ for whole sub-network $\Gamma_{i}$, i.e. $\cG_{\cB_{i}}$ is a local unitary with respect to $\Gamma_{i}$. Therefore, $\cG_{\cB_{i}}$ can not affect the entanglement between spin sub-networks.

\begin{prop} \label{Prop:DensityMatrix-SPN-DBH}
Let $\rho_{\Gamma} \in  \mathrm{End} \Big[ \cH_{\Gamma} \Big]$ be the density matrix for spin network Hilbert space $\cH_{\Gamma}$, which allows decomposition $\rho_{\Gamma}=\sum_{k} W_{k} | \psi_{\Gamma}^{(k)} \ra\la \psi_{\Gamma}^{(k)} |$. Since every $| \psi_{\Gamma}^{(k)} \ra$ offers a bulk-boundary map $| \psi_{\pp\Gamma}^{(k)} \ra$, then in the dual boundary Hilbert space $( \cH_{\pp\Gamma ) } )^{*}$, the corresponding density matrix is $\rho_{\pp\Gamma^{*}} = \sum_{k} W_{k} | \psi_{\pp\Gamma}^{(k)} \ra\la \psi_{\pp\Gamma}^{(k)} | \in \mathrm{End} \Big[ (\cH_{\pp\Gamma})^{*} \Big]$.
The density matrices $\rho_{\Gamma}$ and $\rho_{\pp\Gamma^{*}}$ encode identical entanglement between spin sub-networks. 
\end{prop}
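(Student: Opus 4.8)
The plan is to exhibit the bulk-boundary correspondence $|\psi_{\Gamma}\ra \mapsto |\psi_{\pp\Gamma}\ra$ as a linear isometry and then argue that it intertwines the two partial-trace operations up to a \emph{local} unitary on each sub-network, so that the associated reduced density matrices carry identical spectra. First I would record that, by eq.(\ref{eq:ScalarProducts-SPN-DBH}), this assignment preserves scalar products, $\la\phi_{\Gamma}|\psi_{\Gamma}\ra = \la\phi_{\pp\Gamma}|\psi_{\pp\Gamma}\ra$, hence defines a linear isometry $\cH_{\Gamma} \to (\cH_{\pp\Gamma})^{*}$. Applying it to the decomposition $\rho_{\Gamma}=\sum_{k} W_{k} |\psi_{\Gamma}^{(k)}\ra\la\psi_{\Gamma}^{(k)}|$ produces $\rho_{\pp\Gamma^{*}}=\sum_{k} W_{k} |\psi_{\pp\Gamma}^{(k)}\ra\la\psi_{\pp\Gamma}^{(k)}|$ with identical matrix elements in corresponding orthonormal bases; in particular the two global density matrices are unitarily equivalent and the state is merely re-expressed.

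The heart of the argument is that the two tensor factorizations eq.(\ref{eq:Hilbert-SPN-SPNSubN}) and eq.(\ref{eq:DualBoundaryHilbert-SPN-SPNSubN}) are compatible factor-by-factor under this isometry: the image of $\cH_{\Gamma_{i}}$ is $(\cH_{\pp\Gamma_{i}})^{*}$, with the same spin-matching constraint in both pictures. Consequently the partial trace over the complement $\Gamma\setminus\Gamma_{i}$ should map to the partial trace over the complement in the dual boundary picture. The subtlety already flagged by the symbol $\overset{!}{=}$ in eq.(\ref{eq:ScalarProducts-SPN-DBH-PartialTrace}) is that the two prescriptions differ by the boundary-holonomy operation $\cG_{\cB_{i}}$: the bulk picture integrates away the holonomy pieces generated when an edge is cut, while the dual boundary picture retains $\cG_{\cB_{i}}$. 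I would make this precise using the covariance eq.(\ref{eq:GaugeInvariance-BoundaryMaps}), tracking how a cut edge splits its holonomy as a group product and confirming that each resulting factor is absorbed into the boundary data of a single sub-network.

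Finally I would invoke that $\cG_{\cB_{i}}$ is a \emph{local} unitary with respect to the factor $(\cH_{\pp\Gamma_{i}})^{*}$: as established in subsection \ref{subsection:Unitary-DualHilbertSpace}, it is a tensor product of $\SU(2)$ elements acting unitarily on the boundary spins of $\Gamma_{i}$ alone. Local unitaries leave the Schmidt coefficients of a multipartite state unchanged and merely conjugate the reduced state, so $\rho_{\Gamma_{i}}[\psi]$ and its dual-boundary counterpart are related by $\rho \mapsto \cG_{\cB_{i}}\,\rho\,\cG_{\cB_{i}}^{\dagger}$ and share the same eigenvalues. Since every entanglement measure between the sub-networks depends only on the spectra of the reduced density matrices, $\rho_{\Gamma}$ and $\rho_{\pp\Gamma^{*}}$ encode identical entanglement, as claimed.

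The step I expect to be the main obstacle is the factor-by-factor compatibility: one must verify that the discrepancy between the two partial traces is captured \emph{exactly} by the local operation $\cG_{\cB_{i}}$, and not by some operator correlating distinct sub-networks. Establishing this cleanly amounts to showing that, when an interior edge bridging $\Gamma_{i}$ and $\Gamma_{j}$ is split, the two holonomy pieces land in separate boundary factors and are each acted on only by the corresponding local boundary holonomies; once that is secured, the locality and unitarity of $\cG_{\cB_{i}}$ make the entanglement invariance immediate.
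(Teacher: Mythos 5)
Your proposal is correct and follows essentially the same route as the paper: the paper likewise reduces the claim to the observation that the bulk and dual-boundary partial traces agree up to the boundary-holonomy operation $\cG_{\cB_{i}}$ (obtained by splitting the holonomy of each cut edge as a group product whose factors land in separate sub-network boundaries), and then treats $\cG_{\cB_{i}}$ as a local unitary on $(\cH_{\cB_{i}})^{*}$ that cannot change the Schmidt spectrum or any entanglement measure. The "main obstacle" you flag is resolved in the paper exactly as you outline, by tracking the split holonomies edge by edge.
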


\medskip

In fact, any bulk-boundary map for $\pp\Gamma$ can be expressed by `gluing' $\bigutimes$ bulk-boundary maps associative with $\Gamma_{i}$,
\beq  \label{eq:DBHSub-decomposition}
| \psi_{\pp\Gamma} (\{ g_{e} \}_{e\in\Gamma^{o}})\ra
=
\sum_{\{I_v\}_{v\in\Gamma_{i}} }
C_{ \Gamma } ( [\{ I_v \}_{v\in\Gamma_{i}}]_{i} )
 \,
\bigutimes_{i=1}^{n}
\cG_{\pp\Gamma_{i} } \act
| \Psi_{ \pp\Gamma_{i}, \{ I_v \}_{v\in\Gamma_{i}} } (\{ g_{e} \}_{e\in\Gamma_{i}^{o}}) \ra
\,.
\eeq
Here notation $\bigutimes$ means gluing the sub-networks with boundary holonomies $\cG_{\pp\Gamma_{i}}$ along the interfacing edges. This is how we acquire a boundary state from a spin network state: every vertex and its edges make up a simplest open spin network, then bulk holonomies glue these vertices together, the rest of open edges make up the boundary Hilbert space. In this sense, spin network wave-function is referred to coarse-grainer.
We refer the interested reader to \cite{Livine:2021sbf} for the gluing operation for time boundaries.

\smallskip

The advantage with the viewpoint of dual boundary Hilbert is that we are able to coarse-grain sub-networks. To see this, let us revisit partial trace eq.(\ref{eq:SPN-PartialTrace-onBasis}) which requires to compute eq.(\ref{eq:SPN-PartialTrace-ScalarProduct}), but this time we compute it based on eq.(\ref{eq:DBHSub-decomposition}). Now, for any spin sub-network, the scalar product can be computed via eq.(\ref{eq:Unitary-ScalarProduct-BoundaryMaps}), i.e.,
\be
\la \phi_{\Gamma_{i} } | \psi_{\Gamma_{i} } \ra
=
\la \phi_{\pp\Gamma_{i} } | \psi_{\pp\Gamma_{i} } \ra
=
\la \phi_{\pp\Upsilon_{i} } | \psi_{\pp\Upsilon_{i} } \ra
\,,
\ee
where the left side is computed from spin sub-network Hilbert space $\cH_{\pp\Gamma_{i}}$, middle from dual boundary Hilbert space $(\cH_{\pp\Gamma_{i}})^{*}$, and right side from loopy dual boundary Hilbert space $(\cH_{\pp\Upsilon_{i}})^{*}$. 

\smallskip

Following the analysis, we coarse-grain spin sub-networks into loopy spin networks via gauge-fixing. Given graph $\Gamma$ and partition $\Gamma=\bigsqcup_{i=1}^{n} \Gamma_{i}$, the coarse-grained graph $\Gamma_{(R)}$ is obtained via gauge-fixing every $\Gamma_{i}$ to make every $\Gamma_{i}$ a loopy spin network $\Upsilon_{i}$, 
then glue $\Upsilon_{i}$ back and acquire the coarse-grained graph (e.g. Fig.\ref{fig:ReducingGraph})
\be
\Gamma_{(R)}
=
\bigsqcup_{i=1}^{n} \Upsilon_{i}
\,.
\ee

\smallskip

The coarse-grained graph preserves the entanglement between spin sub-networks. Indeed, one can understand the entanglement preservation from the perspective of Proposition \ref{Prop:Unitary:Graph-LoopyVertex}: gauge-fixing $\Gamma_{i}$ amounts to implementing a local unitary transformation with respect to $(\cH_{\cB_{i} })^{*}$. In general, in Heisenberg picture,
\begin{align}
| \Psi_{ \pp\Gamma, \{ I_{v} \} }(\{g_{e}\}_{e\in\Gamma^{o}}) \ra
=&
\sum_{ \{ I_{u_{i} } \}  }
\bigutimes_{i=1}^{n}
\left(
\tensor{   {\Big[ \cU_{\Upsilon_{i} } \Big] }   } {^{ I_{ u_{i} }  }  _{  \{ I_{v} \}_{v\in\Gamma_{i}}    }    }
\circ
\cG_{\cB_{i} } \act
| \Psi_{\pp\Upsilon_{i}, I_{u_{i} } } ( \{ G_{e} \}_{e\in \Upsilon_{i}^{o} } ) \ra
\right)
\nn \\
=&
\sum_{ \{ I_{u_{i} } \}  }
\prod_{i=1}^{n}
\tensor{   {\Big[ \cU_{\Upsilon_{i} } \Big] }   } {^{ I_{ u_{i} }  }  _{  \{ I_{v} \}_{v\in\Gamma_{i}}    }    }
\circ
\left(
\bigutimes_{i=1}^{n}
\cG_{\cB_{i} } \act
| \Psi_{\pp\Upsilon_{i}, I_{u_{i} } } ( \{ G_{e} \}_{e\in \Upsilon_{i}^{o} } ) \ra
\right)
\,. \label{eq:Unitary-SPN,CG-Gluing}
\end{align}
Here every $\cU_{\Upsilon_{i} }$ represents channel transformation for loopy intertwiner in $\cH_{\Upsilon}$, and every $\cG_{\cB_{i}}$ represents boundary holonomies for $\cH_{\pp\Gamma_{i} }$. They do not change the boundary spins, thus all spin-matching constraints hold. Every $\cU_{\Upsilon_{i} }$ and $\cG_{\cB_{i}}$ are local unitary for respect $(\cH_{\cB_{i} })^{*}$ so we factorize $\cU_{\Upsilon_{i} }$ and keep $\cG_{\cB_{i}}$ in gluing operation.
Notably, boundary holonomies on $\cB_{i}$ never change the scalar products for $\cH_{\cB_{i}}$ and $(\cH_{\cB_{i}})^{*}$. 
We then have transformation between density matrices:
\beq \label{eq:DensityMatrix-Transformation}
\rho_{\pp\Gamma^{*}}
=
\cU_{\cB_{1} }^{\dagger} \cdots \cU_{\cB_{n} }^{\dagger}
\, \rho_{\pp\Gamma_{(R)}^{*}} \, 
\cU_{\cB_{1} } \cdots \cU_{\cB_{n} }
\,.
\eeq
Again, every $\cU_{\cB_{i} }$ is to be interpreted as a local unitary transformations with respect to $(\cH_{\cB_{i}})^{*}$. Therefore, the $\rho_{\pp\Gamma^{*}}$ and $\rho_{\pp\Gamma_{(R)}^{*}}$ carry equal spin network entanglement with respect to the partition, since any entanglement measure is required to be invariant under local unitary transformations \cite{PhysRevA.68.042307,GUHNE20091}.
\begin{res}
Let $\cE$ be any entanglement measure.
Given a partition $\Gamma=\bigsqcup_{i=1}^{n}\Gamma_{i}$. Let $\Gamma_{(R)}=\bigsqcup_{i=1}^{n} \Upsilon_{i}$ be the coarse-grained graph for $\Gamma$ where $\Upsilon_{i}$ are loopy graphs for respect $\Gamma_{i}$. For any spin network state based on the $\Gamma$, the corresponding coarse-grained state based on the $\Gamma_{(R)}$ is defined via gauge-fixing. Then for density matrix $\rho_{\pp\Gamma^{*}}$ and the corresponding density matrix $\rho_{\pp\Gamma^{*}_{(R)} }$ for the coarse-grained state, the reduced density matrices $\rho_{ \pp\Gamma^{*}_{i} }$ and $\rho_{\pp\Upsilon^{*}_{i} }$ are related by
\beq \label{eq:ReducedDensityMatrix-GroupIntegralPartialTrace-ReducedGraph}
\rho_{ \pp\Gamma^{*}_{i} }
=
\cU_{\cB_{i} }^{\dagger} \,
\rho_{\pp\Upsilon^{*}_{i} }
\, \cU_{\cB_{i} }
\,.
\eeq
Here $\cU_{\cB_{i} }$ is unitary for $(\cH_{\cB_{i}})^{*}$. Moreover, $\rho_{ \pp\Gamma^{*} }$ and $\rho_{\pp\Gamma^{*}_{(R)} }$ carry equal spin network entanglement,
\be
\cE[ \rho_{\pp\Gamma^{*}} ]=\cE[ \rho_{\pp\Gamma^{*}_{(R)} } ]
\,.
\ee
Due to the Proposition \ref{Prop:DensityMatrix-SPN-DBH}, $\cE[ \rho_{ \Gamma } ]=\cE[ \rho_{ \Gamma_{(R)} } ]$.
\end{res}

\smallskip

At the end of the day, we have shown that spin network entanglement allows to be coarse-grained, thus one can study the entanglement by studying (loopy) intertwiner entanglement on coarser graph.

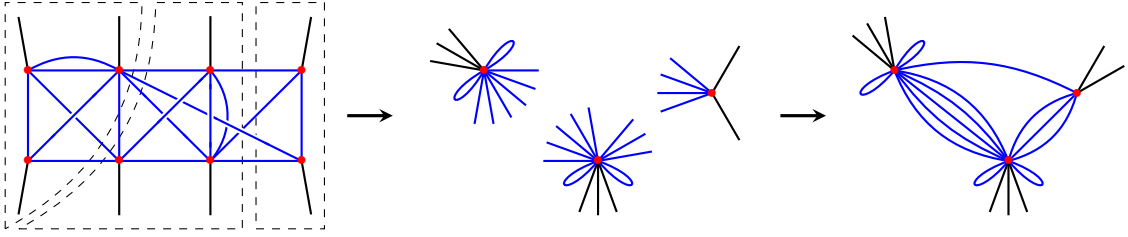
\begin{figure}[hbt!]
\centering
\begin{tikzpicture}[scale=0.6]

\coordinate (O1) at (0,-1);
\coordinate (O2) at (2,-1);
\coordinate (O3) at (4,-1);
\coordinate (O4) at (6,-1);
\coordinate (O5) at (0,1);
\coordinate (O6) at (2,1);
\coordinate (O7) at (4,1);
\coordinate (O8) at (6,1);

\draw[thick] (O1) --++(260:1.2);
\draw[thick] (O2) --++(-90:1.2);
\draw[thick] (O3) --++(-90:1.2);
\draw[thick] (O4) --++(-80:1.2);
\draw[thick] (O5) --++(100:1.2);
\draw[thick] (O6) --++(90:1.2);
\draw[thick] (O7) --++(90:1.2);
\draw[thick] (O8) --++(80:1.2);

\coordinate (G11) at (-0.5,-2.5);
\coordinate (G12) at (-0.5,2.5);
\coordinate (G13) at (2.5,2.5);

\draw[dashed] (G11) -- (G12) -- (G13) to [bend left=30] (G11);

\path (G11) ++ (0.3,0) coordinate (G21);
\coordinate (G22) at (4.7,-2.5);
\coordinate (G24) at (4.7,2.5);
\path (G13) ++ (0.3,0) coordinate (G23);

\draw[dashed] (G21) -- (G22) -- (G24) -- (G23) to [bend left=30] (G21);

\coordinate (G31) at (5,-2.5);
\coordinate (G32) at (6.5,-2.5);
\coordinate (G33) at (6.5,2.5);
\coordinate (G34) at (5,2.5);

\draw[dashed] (G31) -- (G32) -- (G33) -- (G34) -- (G31);

\draw[thick,blue] (O1) to [bend right=0] (O2)  to [bend right=0] (O3)  to [bend right=0]  (O4)  to [bend right=0]  (O8)  to [bend right=0]  (O7)  to [bend right=0]  (O6)  to [bend right=0]  (O5)  to [bend right=0] (O1);

\draw[thick,blue] (O5) to[bend left=30](O6);

\begin{knot}[
  consider self intersections=true,
  flip crossing=3,
]
\strand[thick,blue]
    (O1) -- (O6) --  (O3) -- (O8);
\strand[thick,blue]
(O5) -- (O2)--(O7) -- (O3);
\strand[thick,blue]
(O2)--(O6) -- (O4);
\strand[thick,blue]
(O3) to [bend right=40] (O7);
\end{knot}

\draw (O1) node[scale=0.7,red] {$\bullet$};
\draw (O2) node[scale=0.7,red] {$\bullet$};
\draw (O3) node[scale=0.7,red] {$\bullet$};
\draw (O4) node[scale=0.7,red] {$\bullet$};
\draw (O5) node[scale=0.7,red] {$\bullet$};
\draw (O6) node[scale=0.7,red] {$\bullet$};
\draw (O7) node[scale=0.7,red] {$\bullet$};
\draw (O8) node[scale=0.7,red] {$\bullet$};
\draw[->,>=stealth,very thick] (7,0) -- (8,0);
\coordinate (A1) at (10,1);
\coordinate (A2) at (12.5,-1);
\coordinate (A3) at (15,0.5);

\draw[thick] (A1) --++(130:1.2);
\draw[thick] (A1) --++(150:1.2);
\draw[thick] (A1) --++(170:1.2);
\draw[thick,blue] (A1) --++(0:1.2);
\draw[thick,blue] (A1) --++(-20:1.2);
\draw[thick,blue] (A1) --++(-40:1.2);
\draw[thick,blue] (A1) --++(-60:1.2);
\draw[thick,blue] (A1) --++(-80:1.2);
\draw[thick,blue] (A1) --++(-100:1.2);
\draw[blue,thick,in=60,out=30,scale=2.5,rotate=0] (A1)  to[loop] (A1);
\draw[blue,thick,in=240,out=210,scale=2.5,rotate=0] (A1)  to[loop] (A1);

\draw[thick] (A2) --++(250:1.2);
\draw[thick] (A2) --++(270:1.2);
\draw[thick] (A2) --++(290:1.2);
\draw[thick,blue] (A2) --++(140:1.2);
\draw[thick,blue] (A2) --++(160:1.2);
\draw[thick,blue] (A2) --++(180:1.2);
\draw[thick,blue] (A2) --++(120:1.2);
\draw[thick,blue] (A2) --++(100:1.2);
\draw[thick,blue] (A2) --++(10:1.2);
\draw[thick,blue] (A2) --++(50:1.2);
\draw[thick,blue] (A2) --++(30:1.2);
\draw[blue,thick,in=-50,out=-20,scale=2.5,rotate=0] (A2)  to[loop] (A2);
\draw[blue,thick,in=230,out=200,scale=2.5,rotate=0] (A2)  to[loop] (A2);

\draw[thick] (A3) --++(60:1.2);
\draw[thick] (A3) --++(-60:1.2);
\draw[thick,blue] (A3) --++(140:1.2);
\draw[thick,blue] (A3) --++(160:1.2);
\draw[thick,blue] (A3) --++(180:1.2);
\draw[thick,blue] (A3) --++(200:1.2);

\draw (A1) node[scale=0.7,red] {$\bullet$};
\draw (A2) node[scale=0.7,red] {$\bullet$};
\draw (A3) node[scale=0.7,red] {$\bullet$};
\draw[->,>=stealth,very thick] (16.5,0) -- (17.5,0);
\coordinate (B1) at (19,1);
\coordinate (B2) at (21.5,-1);
\coordinate (B3) at (23,0.5);

\draw[thick] (B1) --++(100:1.2);
\draw[thick] (B1) --++(140:1.2);
\draw[thick] (B1) --++(120:1.2);
\draw[thick] (B2) --++(270:1.2);
\draw[thick] (B2) --++(250:1.2);
\draw[thick] (B2) --++(290:1.2);
\draw[thick] (B3) --++(60:1.2);
\draw[thick] (B3) --++(30:1.2);

\draw[thick,blue] (B1) to [bend left=20] (B3);
\draw[thick,blue] (B1) to [bend left=30] (B2);
\draw[thick,blue] (B1) to [bend left=15] (B2);
\draw[thick,blue] (B1) to [bend left=0] (B2);
\draw[thick,blue] (B1) to [bend right=30] (B2);
\draw[thick,blue] (B1) to [bend right=15] (B2);

\draw[thick,blue] (B3) to [bend left=0] (B2);
\draw[thick,blue] (B3) to [bend right=30] (B2);
\draw[thick,blue] (B3) to [bend left=30] (B2);

\draw[blue,thick,in=60,out=30,scale=2.5,rotate=0] (B1)  to[loop] (B1);
\draw[blue,thick,in=230,out=200,scale=2.5,rotate=0] (B1)  to[loop] (B1);

\draw[blue,thick,in=-50,out=-20,scale=2.5,rotate=0] (B2)  to[loop] (B2);
\draw[blue,thick,in=230,out=200,scale=2.5,rotate=0] (B2)  to[loop] (B2);

\draw (B1) node[scale=0.7,red] {$\bullet$};
\draw (B2) node[scale=0.7,red] {$\bullet$};
\draw (B3) node[scale=0.7,red] {$\bullet$};

\end{tikzpicture}
\caption{Illustrations for coarse-graining: from fine-graph to coarse-grained graph. The coarse-graining preserves the spin network entanglement.
}
\label{fig:ReducingGraph}
\end{figure}

\section{Dynamics of entanglement coarse-graining: loop holonomy operators} \label{Section:DynamicsCoarseGraining}
So far we have shown the coarse-graining of spin network entanglement at kinematical level. In this section, we investigate how the coarse-graining method can be extended to dynamical level. Suppose $\rho_{\Gamma}$'s evolution and the dynamics of spin network entanglement, how to study the entanglement dynamics from coarse-grained graph $\Gamma_{(R)}$? We consider evolution generated by loop holonomy operator, and we show that the dynamics of spin network entanglement based on the $\Gamma$ is exactly reflected in the dynamics of spin network entanglement based on the coarse-grained graph $\Gamma_{(R)}$.

\subsection{Loop holonomy operator} \label{Section:LoopHolonomyOperator}
The dynamics of spin network states implement the flow generated by the Hamiltonian constraints on the embedded geometry of the canonical hypersurface. At the quantum level, the Hamiltonian constraint operators involve the holonomy operator and geometric observables, such as areas and volumes. The holonomy operator is analogous to the Wilson-loop in QCD. It corresponds to the quantization of the curvature in the polymer quantization scheme used in loop quantum gravity, where one does not access to point-like excitations, but only to gauge-invariant observables smeared along 1d structures. It is a non-local operator that excites non-local correlation and entanglement \cite{Chen:2022rty}. 

\smallskip

Let us analyze the action of the holonomy operator on spin network basis states, along the lines of \cite{Bonzom:2009zd,Borja:2010gn}, or \cite{Chen:2022rty} for a recent work. Let us look at the holonomy operator with spin-$\ell$ acting on a single edge $e$. This operator takes the tensor product of the spin-$\ell$ with the spin-$k_{e}$ carried by the edge, and its action can be expressed in terms of Clebsch-Gordan coefficients decomposing this tensor product $\ell\otimes k_{e}$ into irreducible representations, i.e. $\cV_{\ell} \otimes \cV_{k_{e} } = \bigoplus_{K_{e}=| k_{e}-\ell | }^{ k_{e}+\ell } \cV_{K_{e} }$. Indeed the Wigner matrices for the $\SU(2)$ group element $g_{e}$ carrying the holonomy along the edge $e$ satisfies the following algebraic relations:
\beq
\widehat{ D^{\ell}_{a_{e} b_{e} } }
\act
\Big[ D^{k_{e} }_{m_{e} n_{e}}(g_{e}) \Big]
&=&
D^{\ell}_{a_{e} b_{e} } (g_{e} ) \,
D^{k_{e} }_{m_{e} n_{e}}(g_{e} )
\,, \label{eq:HolonomyOperator-TensorProduct}
\\
&=&
\sum_{K_{e}=| k_{e}-\ell | }^{k_{e} + \ell } \,
\sum_{ M_{e}, N_{e}=-K_{e} }^{K_{e} } \,
(-1)^{M_{e}-N_{e} }
(2K_{e}+1)
\nn
\\
&&
\times
\begin{pmatrix}
   \ell & k_{e} & K_{e} \\
   a_{e} & m_{e} & -M_{e}
  \end{pmatrix}
\overline{ \begin{pmatrix}
   \ell & k_{e} & K_{e} \\
   b_{e} & n_{e} & -N_{e}
  \end{pmatrix} } \,
D^{K_{e} }_{M_{e} N_{e} }(g_{e})
  \,, \label{eq:HolonomyOperator-3j}
\eeq
where the recoupled spin-$K_{e}$ is bounded by the triangular inequalities $| k_{e}-\ell | \leq K_{e} \leq k_{e} +\ell$.

\smallskip

The holonomy operator along a single edge spoils the gauge invariance. In order to produce a gauge-invariant
holonomy operator, one must consider a closed loop on the graph $\Gamma$ underlying the spin network state. Consider a loop $W \subset \Gamma$ with $n$ edges, and assume the simplifying condition that it does not go through a vertex more than that once. The oriented loop $W$ can be described as the path $W[v_1 \overset{ e_{1} }{\to} \cdots \overset{ e_{p-1} }{\to} v_p \overset{ e_{p} }{\to} v_1]$ such that the edge $e_{\alpha}$ links the vertices $v_{\alpha}$ to $v_{\alpha+1}$, with $\alpha=1,\cdots, p$ and the implicit convention $p+1\equiv 1$. The loop holonomy operator is defined as a multiplicative operator on the wave-functions:
\beq \label{eq:LoopHolonomyOperator-BoundaryMap}
\left( \wh{\chi_{\ell} } \, \act_{W} \psi_{\Gamma} \right) ( \{g_e\}_{e\in\Gamma}  )
=
\chi_{\ell} (G_{W} ) \psi_{\Gamma} ( \{g_e\}_{e\in\Gamma}  )
\,, \qquad \text{with} \quad 
G_{W}=\overleftarrow{\prod_{ e_{\alpha} \in W } g_{ e_{\alpha} }   }
\,,
\eeq
where $\chi_{\ell}(g)=\tr D^{\ell}(g)$ is the character of the spin-$\ell$ representation. We take the inverse of a group element if the edge is oriented in the opposite direction than the loop. Since the factor $\chi_{\ell}(G_{W})$ is gauge invariant function, the resulting wave-function is still gauge-invariant. Thus the map $\left( \wh{\chi_{\ell} } \, \act_{W} \right)$ acts legitimately on the Hilbert space $\cH_{\Gamma}$ and we can write its action on the spin network basis:
\beq \label{eq:LoopHolonomyOperator-SpinNetwork}
\wh{\chi_{\ell} } \, \act_{W} \,
| \Psi_{ \Gamma, \{I_{v}\} } \ra
=
\sum_{ \{I'_v\} }
\tensor{   {\Big[ \tensor{Z(\Gamma) }{ _{\tensor{\chi}{_{\ell} }\, \act_{W}    }   } \Big] }   } {^{ \{ I'_{v} \} } _{ \{ I_{v} \} } }
\,
| \Psi_{ \Gamma, \{I'_{v}\} } \ra
\,,
\eeq
where the matrix elements $\tensor{Z(\Gamma) }{ _{\tensor{\chi}{_{\ell} }\, \act_{W}    }    }$ are given by the following integrals:
\beq
\tensor{   {\Big[ \tensor{Z(\Gamma) }{ _{\tensor{\chi}{_{\ell} }\, \act_{W}    }    } \Big] }   } {^{ \{ I'_{v} \} } _{ \{ I_{v} \} } }
=
\int
\prod_{e\in\Gamma}\rd g_{e} \,
\overline{ \Psi_{\Gamma, \{I'_{v}\} } (\{ g_{e} \}_{e\in\Gamma}) }
\chi_{\ell} (G_{W})
\Psi_{\Gamma, \{I_{v}\} } ( \{ g_{e} \}_{e\in\Gamma}  )
\,. \label{eq:TransitionMatrix}
\eeq
This matrix $\tensor{Z(\Gamma) }{ _{\tensor{\chi}{_{\ell} }\, \act_{W}    }    }$ satisfies a composition rule:
\beq
\sum_{ \{ I'_{v} \} }
\tensor{   {\Big[ \tensor{Z(\Gamma) }{   _{\tensor{\chi}{_{\ell_{1} } }\, \act_{W}    }   } \Big] }   } {^{ \{ I''_{v} \} } _{ \{ I'_{v} \} } }
\,
\tensor{   {\Big[ \tensor{Z(\Gamma) }{ _{\tensor{\chi}{_{ \ell_{2} } }\, \act_{W}    }  } \Big] }   } {^{ \{ I'_{v} \} } _{ \{ I_{v} \} } }
&=&
\sum_{ s=| {\ell}_{1} - {\ell}_{2} | }^{ {\ell}_{1} + {\ell}_{2} }
\tensor{   {\Big[ \tensor{Z(\Gamma) }{    _{\tensor{\chi}{_s}\, \act_{W}    }    } \Big] }   } {^{ \{ I''_{v} \} } _{ \{ I_{v} \} } }
\nn
\\
&=&
\tensor{   {\Big[ \tensor{Z(\Gamma) }{    _{ ( \tensor{\chi}{_{ {\ell}_{1} } } \cdot \tensor{\chi}{_{ {\ell}_{2} } } ) \, \act_{W}    }    } \Big] }   } {^{ \{ I''_{v} \} } _{ \{ I_{v} \} } }
\,, \label{eq:Composition-Amplitudes}
\eeq
which is inherited from the character recoupling formula $\chi_{ {\ell}_{1} } \cdot \chi_{ {\ell}_{2} } = \sum_{ s=| {\ell}_1- {\ell}_2 | }^{ {\ell}_1+{\ell}_2 } \, \chi_{s}$. An interesting fact to keep in mind is that the matrices $\tensor{Z(\Gamma) }{ _{\tensor{\chi}{_{\ell_1} }\, \act_{W}    }    }$ and $\tensor{Z(\Gamma) }{ _{\tensor{\chi}{_{\ell_2} }\, \act_{W}    }    }$ commute with each other with arbitrary spins $\ell_1$ and $\ell_2$.

\smallskip

The transition matrix $Z$ can be expressed in terms of the $\{6j\}$ symbols of spin recoupling, where every $6j$-symbol associates with a corresponding vertex along the loop $W$, and is expressed in terms of spins along the loop and the bouquet spin, as illustrated on fig.\ref{fig:BouquetSpins}. The following lemma gives the expression for the matrix $Z$ \cite{Chen:2022rty}. 
\begin{lemma} \label{lemma:Amplitudes-6jsymbols}
Given an oriented loop $W[v_1 \overset{ e_{1} }{\to} \cdots \overset{ e_{p-1} }{\to} v_p \overset{ e_{p} }{\to} v_1]$ on the graph $\Gamma$, the loop holonomy operator $\wh{\chi}_{ {\ell} } \, \act_{W}$ acts on the spin network basis, labeled by the spins $k_{\alpha}$ on the loop edges and the bouquet spins $j_{\alpha}$ on the loop vertices, by following transition matrix:
\begin{align}
\tensor{   {\Big[ Z(\Gamma)_{\tensor{\chi}{_{\ell} }\, \act_{W}   } \Big] }   } {^{ \{ J_{\alpha}, K_{\alpha} \} } _{ \{ j_{\alpha}, k_{\alpha} \} } }
=
(-1)^{\sum_{\alpha=1}^{p} (j_{\alpha} +k_{\alpha} +K_{\alpha} +{\ell} ) }
\,
&
\prod_{\alpha=1}^{p} \sqrt{ (2K_{\alpha}+1)(2k_{\alpha}+1) } \, \delta_{J_{\alpha} j_{\alpha}}
\nn
\\
& \times
\overleftarrow{ \prod_{\alpha=1}^{p} \,
\begin{Bmatrix}
j_{\alpha} & K_{\alpha} & K_{\alpha-1} \\
\ell & k_{\alpha-1} & k_{\alpha}
\end{Bmatrix}
 }
\,.
\label{eq:Amplitudes-6jsymbols}
\end{align}
The $\delta_{J_{\alpha} j_{\alpha}}$ is because loop holonomy operator does not change the bouquet spins.
\end{lemma}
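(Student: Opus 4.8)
The plan is to evaluate the defining integral eq.~\eqref{eq:TransitionMatrix} directly, by expanding the loop character, decomposing the holonomy along each loop edge via eq.~\eqref{eq:HolonomyOperator-3j}, and collapsing the group integrals with Peter--Weyl orthogonality eq.~\eqref{eq:Peter-Weyl} into a product of local recoupling coefficients, one per loop vertex. First I would write the loop character as a cyclic trace over spin-$\ell$ magnetic indices, $\chi_{\ell}(G_{W})=\sum_{a_{1},\dots,a_{p}}\prod_{\alpha=1}^{p}D^{\ell}_{a_{\alpha}a_{\alpha+1}}(g_{e_{\alpha}})$ (with the inverse group element on edges oriented against $W$, and the cyclic convention $a_{p+1}\equiv a_{1}$), so that the index $a_{\alpha}$ sits at the loop vertex $v_{\alpha}$ and threads the two adjacent edges $e_{\alpha-1},e_{\alpha}$. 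Along each loop edge the spin-network wave-function eq.~\eqref{eq:SpinNetwork} contributes the factor $D^{k_{\alpha}}_{m_{\alpha}n_{\alpha}}(g_{e_{\alpha}})$, so $\chi_{\ell}(G_{W})\,\Psi_{\Gamma,\{I_{v}\}}$ carries the product $D^{\ell}_{a_{\alpha}a_{\alpha+1}}(g_{e_{\alpha}})\,D^{k_{\alpha}}_{m_{\alpha}n_{\alpha}}(g_{e_{\alpha}})$ on each $e_{\alpha}$. Applying eq.~\eqref{eq:HolonomyOperator-3j} replaces this product by a sum over the recoupled spin $K_{\alpha}$ with $|k_{\alpha}-\ell|\le K_{\alpha}\le k_{\alpha}+\ell$, weighted by two $3j$-symbols carrying the index pairs $(a_{\alpha},m_{\alpha})$ and $(a_{\alpha+1},n_{\alpha})$, times a single Wigner matrix $D^{K_{\alpha}}_{M_{\alpha}N_{\alpha}}(g_{e_{\alpha}})$.

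Then I would perform the Haar integrals edge by edge. For every edge away from the loop, bra and ket each contribute one Wigner matrix and orthogonality forces their spins to coincide; these integrals merely enforce spin-matching, leave the off-loop intertwiner structure untouched, and in particular fix the bouquet spins so that $J_{\alpha}=j_{\alpha}$, which is the asserted $\delta_{J_{\alpha}j_{\alpha}}$. For each loop edge the surviving $D^{K_{\alpha}}$ is integrated against the conjugate Wigner matrix $\overline{D^{K_{\alpha}}}$ carried by the bra spin network --- which must therefore support spin $K_{\alpha}$ on that edge --- producing a factor $1/(2K_{\alpha}+1)$ that combines with the normalizations $\sqrt{2K_{\alpha}+1}$ and $\sqrt{2k_{\alpha}+1}$ from eq.~\eqref{eq:SpinNetwork} to yield exactly $\sqrt{(2K_{\alpha}+1)(2k_{\alpha}+1)}$.

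The decisive step is the local contraction at each loop vertex $v_{\alpha}$. After integration, what remains there is a sum over all magnetic indices meeting the vertex --- the threaded $\ell$-index $a_{\alpha}$, the loop-edge indices $n_{\alpha-1},m_{\alpha},N_{\alpha-1},M_{\alpha}$, and the bouquet index --- of the two $3j$-symbols from the adjoining edges $e_{\alpha-1},e_{\alpha}$, together with the ket intertwiner coupling $(k_{\alpha-1},k_{\alpha},j_{\alpha})$ and the bra intertwiner coupling $(K_{\alpha-1},K_{\alpha},j_{\alpha})$, each viewed as a $3j$-symbol. I would identify this as the cyclic contraction of four $3j$-symbols over their four triangles $(\ell,k_{\alpha-1},K_{\alpha-1})$, $(\ell,k_{\alpha},K_{\alpha})$, $(k_{\alpha-1},k_{\alpha},j_{\alpha})$ and $(K_{\alpha-1},K_{\alpha},j_{\alpha})$, which is by definition the Wigner symbol $\begin{Bmatrix} j_{\alpha} & K_{\alpha} & K_{\alpha-1} \\ \ell & k_{\alpha-1} & k_{\alpha} \end{Bmatrix}$. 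The sharing of $a_{\alpha}$ between the factors of $v_{\alpha-1}$ and $v_{\alpha}$ as it is passed cyclically around the loop is precisely what forces the ordered product $\overleftarrow{\prod}_{\alpha}$. Finally I would collect the overall phase $(-1)^{\sum_{\alpha}(j_{\alpha}+k_{\alpha}+K_{\alpha}+\ell)}$ from the standard sign conventions incurred in passing from contracted $3j$-symbols to $6j$-symbols.

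The main obstacle is exactly this last bookkeeping: tracking every magnetic-index summation and every sign through the $3j$-to-$6j$ reduction, verifying that the two edge $3j$-symbols plus the two intertwiner $3j$-symbols assemble into a single $6j$-symbol with the stated arguments and not some reflected or permuted variant, and checking that the ordering of the per-vertex factors faithfully follows the orientation of $W$. All the group-theoretic content is routine once the index flow is pinned down; the delicacy is purely combinatorial and conventional.
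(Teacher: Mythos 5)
Your proposal is correct and follows the standard derivation that the paper itself defers to its reference \cite{Chen:2022rty} (the lemma is stated without proof here, "along the lines of" \cite{Bonzom:2009zd,Borja:2010gn}): expand $\chi_{\ell}(G_{W})$ as a cyclic trace, recouple $D^{\ell}\otimes D^{k_{\alpha}}$ on each loop edge via eq.~(\ref{eq:HolonomyOperator-3j}), integrate with Peter--Weyl, and recognize the per-vertex contraction of four $3j$-symbols as a $6j$-symbol. Your bookkeeping of the multiplicative factors checks out --- the $\sqrt{2k_{\alpha}+1}$ from the ket, the $(2K_{\alpha}+1)$ from the recoupling, the $\sqrt{2K_{\alpha}+1}$ from the bra and the $1/(2K_{\alpha}+1)$ from orthogonality indeed combine to $\sqrt{(2k_{\alpha}+1)(2K_{\alpha}+1)}$ --- and the $\delta_{J_{\alpha}j_{\alpha}}$ and the ordered product along $W$ are correctly accounted for. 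The only piece you assert rather than verify is the overall phase $(-1)^{\sum_{\alpha}(j_{\alpha}+k_{\alpha}+K_{\alpha}+\ell)}$, and implicitly you are using that each (generically higher-valent) loop vertex has already been unfolded so that its coupling to the two loop edges is the single trivalent node $(k_{\alpha-1},k_{\alpha},j_{\alpha})$ with $j_{\alpha}$ the bouquet spin; both points are consistent with the lemma's labeling conventions, so neither constitutes a gap in the approach.
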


\smallskip

The next question is to relate the action on $\Gamma$ to the action on $\Gamma_{(R)}$.
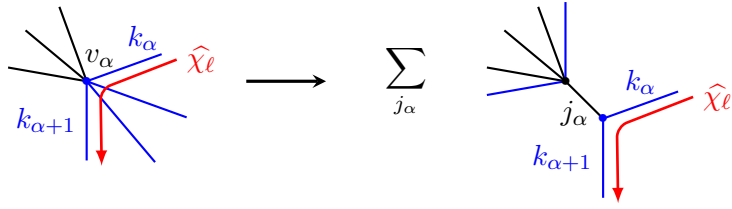
\begin{figure}
\centering
\begin{tikzpicture}[scale=0.7]

\coordinate (O) at (0,0);
\coordinate (O3) at (-30:0.3);

\draw[thick] (O) -- ++(170:1.5);
\draw[thick] (O) -- ++(110:1.5);
\draw[thick] (O) -- ++(140:1.5);

\begin{knot}[
  consider self intersections=true,
  flip crossing=3,
]
\strand[thick,blue] (O) -- ++ (-50:1.5) --  ++ (-50:0.5);
\strand[thick,blue] (O) -- ++(-20:1.5) -- ++ (-20:0.5);
\strand[thick,blue] (O) -- node[midway,left] {$k_{\alpha+1}$} ++(-90:1.5) ++ (-20:0.3) coordinate (O1);
\strand[thick,blue] (O) -- node[near end,above] {$k_{\alpha}$} ++(20:1.5) ++ (-20:0.3) coordinate (O2);
\strand [line width=1pt,red,->,>=latex,rounded corners] (O2) -- (O3) -- (O1);
\end{knot}

\draw (O) node[scale=0.7,blue] {$\bullet$} ++(60:0.5)node {$v_{\alpha}$};

\draw (O2) node[right,red] {$\widehat{\chi_{\ell} }$};

\draw[->,>=stealth,very thick] (O) ++(3,0) -- ++ (1.5,0);

\coordinate (A) at (9,0);

\draw (A)  ++ (-3,0) node {$ \displaystyle{ \sum_{ j_{\alpha} } }$};

\draw[thick] (A) -- ++(170:1.5);
\draw[thick] (A) -- ++(110:1.5);
\draw[thick] (A) -- ++(140:1.5);

\draw[thick,blue] (A) -- ++(90:1.5);
\draw[thick,blue] (A) -- ++(190:1.5);

\draw[thick] (A) -- node[below,near start] {$j_{\alpha}$} ++(-45:1) coordinate (A1);

\path (A1) ++ (-30:0.3) coordinate (A4);

\draw[thick,blue] (A1) -- node[midway,left] {$k_{\alpha+1}$} ++(-90:1.5) ++ (-20:0.3) coordinate (A2);
\draw[thick,blue] (A1) -- node[midway,above] {$k_{\alpha}$} ++(20:1.5) ++ (-20:0.3)coordinate (A3);

\draw [line width=1pt,red,->,>=latex,rounded corners] (A3) -- (A4) -- (A2);

\draw (A) node[scale=0.7] {$\bullet$};
\draw (A1) node[scale=0.7,blue] {$\bullet$};

\draw (A3) node[right,red] {$\widehat{\chi_{\ell} }$};

\end{tikzpicture}

\caption{For a vertex $v_{\alpha}$ living along the $W$, the bouquet spin-$j_{\alpha}$ is the recoupled spin of those spins that are not living along the $W$.
}
\label{fig:BouquetSpins}
\end{figure}
%


\subsection{Coarse-graining of loop holonomy operator}
One of the goals of this part is to show that the action of loop holonomy operator on a graph can be exactly mapped to the action on its coarse-grained graph. 
We present the transformation rule for the two actions. Based on that, we show that the dynamics of spin network entanglement can be exactly studied from the coarse-grained graph.

Coarse-graining the loop $W$ is done by keeping the edges that are not coarse-grained, then by gluing them into loop to be referred to $W_{(R)}$, e.g. Fig.\ref{fig:ReducingGraph-OneLoop}.

Two trivial situations are not to be considered: (i) $W$ is completely isolated in certain $\Gamma_{i}$. (ii) Trivial coarse-graining $W_{(R)}=W$.

According to Lemma \ref{lemma:Amplitudes-6jsymbols}, the evolution on these loopy spin sub-networks $\{ \Upsilon_{i} \}$ is determined by the bouquet spins and the spins on the loop edges. So the transition matrix $Z$ can be obtained from studying the type of particular graphs that the bouquet spins are represented by boundary spins, such as Fig.\ref{fig:ReducingGraph-OneLoop}.

Now consider two evolutions: (a) Evolution $| \psi_{\Gamma} \ra \to | \psi_{\Gamma}(t) \ra$ generated by loop holonomy operator acting on $W \subset \Gamma$. (b) Evolution $| \psi_{\Gamma_{(R)}} \ra \to | \psi_{\Gamma_{(R)}}(t) \ra$ generated by loop holonomy operator action on $W_{(R)} \subset \Gamma_{(R)}$.
Then we arrive the following result:
\begin{figure}
\centering
\begin{tikzpicture}[scale=0.6]

\coordinate (O) at (3,0);
\coordinate (P) at (4.5,0);
\coordinate (A) at (-6,0);

\draw [thick,domain=0:360,blue] plot ({-6+1.75 * cos(\x)}, {1.75 * sin(\x)});
\draw[thick] (A) ++(-0.6,0);
\draw[thick] (A) ++(0:1.75)  node[scale=0.7,blue] {$\bullet$}--++ (0:1);
\draw[thick] (A) ++(60:1.75) node[scale=0.7,blue] {$\bullet$} --++ (60:1) ++(60:0.35);
\draw[thick] (A) ++(120:1.75) node[scale=0.7,blue] {$\bullet$}--++ (120:1) ++(120:0.35);
\draw[thick] (A) ++(180:1.75) node[scale=0.7,blue] {$\bullet$}--++ (180:1) ++(180:0.35);
\draw[thick] (A) ++(240:1.75) node[scale=0.7,blue] {$\bullet$} --++ (240:1);
\draw[thick] (A) ++(300:1.75) node[scale=0.7,blue] {$\bullet$}--++ (300:1);


\draw[dashed] (A) ++(-3,-3) rectangle (-6.2,3);
\draw (A) ++ (-2.5,-2.5) node[above]{$\Gamma_{1}$};
\draw[dashed] (A) ++(0.2,-3) rectangle (-3,3);
\draw (A) ++ (2.5,-2.5) node[above]{$\Gamma_{2}$};

\draw[->,>=stealth,very thick] (-2,0) -- (-0.5,0);

\draw [thick,domain=90:120,blue] plot ({3+1.75 * cos(\x)}, {1.75 * sin(\x)});
\draw [thick,domain=240:270,blue] plot ({3+1.75 * cos(\x)}, {1.75 * sin(\x)});
\draw [thick,domain=120:240,green] plot ({3+1.75 * cos(\x)}, {1.75 * sin(\x)});
\draw [thick,domain=60:90,blue] plot ({4.5+1.75 * cos(\x)}, {1.75 * sin(\x)});
\draw [thick,domain=-60:60,green] plot ({4.5+1.75 * cos(\x)}, {1.75 * sin(\x)});
\draw [thick,domain=-90:-60,blue] plot ({4.5+1.75 * cos(\x)}, {1.75 * sin(\x)});

\draw[thick] (P) ++(0:1.75)  node[scale=0.7,blue] {$\bullet$}--++ (0:1);
\draw[thick] (P) ++(60:1.75) node[scale=0.7,blue] {$\bullet$} --++ (60:1) ++(60:0.35);
\draw[thick] (O) ++(120:1.75) node[scale=0.7,blue] {$\bullet$}--++ (120:1) ++(120:0.35);
\draw[thick] (O) ++(180:1.75) node[scale=0.7,blue] {$\bullet$}--++ (180:1) ++(180:0.35);
\draw[thick] (O) ++(240:1.75) node[scale=0.7,blue] {$\bullet$} --++ (240:1);
\draw[thick] (P) ++(300:1.75) node[scale=0.7,blue] {$\bullet$}--++ (300:1);

\draw[->,>=stealth,very thick] (8,0) -- (9.5,0);

\coordinate (A1) at (11.5,0);
\coordinate (A2) at (13.5,0);

\draw (A1) ++ (0,-0.75) node {$u_1$};
\draw (A2) ++ (0,-0.75) node {$u_2$};

\draw (A1) ++ (1,-2) node {$\Gamma_{(R)}$};

\draw[thick] (A1) -- ++ (135:1);
\draw[thick] (A1) -- ++ (180:1);
\draw[thick] (A1) -- ++ (225:1);

\draw[thick] (A2) -- ++ (45:1);
\draw[thick] (A2) -- ++ (0:1);
\draw[thick] (A2) -- ++ (-45:1);


\draw[blue,thick,in=115,out=65,rotate=0] (A1) to (A2) node[scale=0.7] {$\bullet$} to[out=245,in=-65] (A1) node[scale=0.7] {$\bullet$};

\end{tikzpicture}
%
%
\caption{The sub-networks $\Gamma_{1}$ and $\Gamma_{2}$ are coarse-grained to $\Upsilon_{1}$ and $\Upsilon_{2}$ for $\Gamma_{(R)}$ via gauge-fixing along the maximal trees $T_{1}$ and $T_{2}$ ({\color{green}{green}}). The $W$ and coarse-grained loop $W_{(R)}$ are colored in {\color{blue}{blue}}.
}
\label{fig:ReducingGraph-OneLoop}
\end{figure}
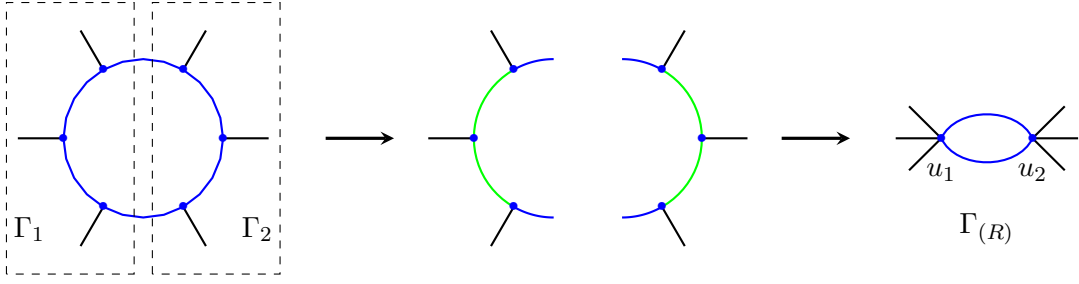

\begin{res} \label{Prop:CG-LoopHolonomyOperator}
Suppose the oriented loop $W[v_1 \overset{ e_{1} }{\to} \cdots \overset{ e_{p-1} }{\to} v_p \overset{ e_{p} }{\to} v_1]$ is partitioned by $n$ sub-networks $\{ \Gamma_{i} \}_{i=1}^{n}$, i.e. $W=\bigsqcup_{i=1}^{n} W_{i}$ where $W_{i} \subset \Gamma_{i}$.
For every $\Gamma_{i}$, we relabel the bouquet spins $j_{\alpha}$ and bulk spins $k_{\alpha}$ along the $W_{i}$ and denote them by
\be
\{ j^{(i)}_{\alpha} \} \equiv \{ j^{(i)}_{1}\,, \cdots \,, j^{(i)}_{p_{i}} \}\,,
\qquad
\{ k_{\alpha}^{(i)} \} \equiv \{ k^{(i)}_{0} \,, k^{(i)}_{1}\,, \cdots \,, k^{(i)}_{p_{i}} \}
=
\{ k_{\alpha}^{o(i)} \} \sqcup
\{ k_{\alpha}^{\pp(i)} \}
\nn
\ee
with the implicit convention $k^{(i)}_{p_{i}}=k^{(i+1)}_{0}$ due to the spin-matching constraint, where
\be
\{ k_{\alpha}^{o(i)} \} \equiv \{ k^{(i)}_{1}\,, \cdots \,, k^{(i)}_{p_{i}-1} \}\,,
\qquad
\{ k_{\alpha}^{\pp(i)} \} \equiv \{ k^{(i)}_{0}\,, k^{(i)}_{p_{i}} \}\,,
\nn
\ee
because $\{ k_{\alpha}^{o(i)} \}$ are bulk spins for $\Gamma_{i}$, and $\{ k_{\alpha}^{\pp(i)} \}$ are part of boundary spins for $\Gamma_{i}$. Spins $\{ j_{\alpha}^{(i)}, k_{\alpha}^{\pp(i)} \}$ define cluster intertwiners
\be
I_{u_{i}} \in\mathrm{Inv}_{\SU(2)}\Big{[}
\cV_{k_{0}^{(i)} } \otimes
\cV_{k_{p_{i}}^{(i)} } \otimes \
\bigotimes_{\alpha=1}^{ p_{i} } \cV_{j_{\alpha} }
\Big{]}
\,.
\ee
Now the cluster intertwiner is labeled by $\{ j_{\alpha}^{(i)}, k_{\alpha}^{(i)} \}$, and recoupling spins $\{ j_{\alpha}^{(i)}, k_{\alpha}^{\pp(i)} \}$ leads to the representation $\cU_{\Upsilon_{i}} : \cH_{\Upsilon_{i}} \longrightarrow \cH_{\Upsilon_{i}}$ for channel transformation:
\be
\tensor{   {\Big[ \, \cU_{\Upsilon_{i} }^{ \{ j_{\alpha}^{(i)}, k_{\alpha}^{\pp(i)} \} } \, \Big] }   } {^{ I_{u_{i}}^{o} }    }_{ \{ k_{\alpha}^{o(i)} \} }
=
\la \Psi_{ \Upsilon_{i}, I_{u_{i}} }
| \Psi_{ \Upsilon_{i}, \{ j_{\alpha }^{(i)}, k_{\alpha}^{(i)} \} } \ra
\,.
\ee
Here $I_{u_{i}}=\{ j_{\alpha }^{(i)}, k_{\alpha}^{\pp(i)} \} \cup I_{u_{i}}^{o} $ equivalently labels the cluster intertwiner $\{ j_{\alpha}^{(i)}, k_{\alpha}^{(i)} \}$ (illustrated as Fig.\ref{fig:ClusterIntertwiner}). 
Tensoring representations leads to
\be
\tensor{   {\Big[ \, \cU_{\Gamma_{(R)} } \, \Big] }   } {^{ \{ I_{u } \} }  _{ \{ j_{\alpha}, k_{\alpha} \}   }    }
=
\prod_{i=1}^{n}
\tensor{   {\Big[ \, \cU_{\Upsilon_{i} }^{ \{ j_{\alpha}^{(i)}, k_{\alpha}^{\pp(i)} \} } \, \Big] }   } {^{ I_{u_{i}}^{o} }    }_{ \{ k_{\alpha}^{o(i)} \} }
\,.
\ee
Then the transition matrices based on $\Gamma$ and $\Gamma_{(R)}$ satisfy 
\beq
\tensor{   {\Big[ Z(\Gamma)_{\tensor{\chi}{_{\ell} }\, \act_{W}   } \Big] }   } {^{ \{ J_{\alpha}, K_{\alpha} \} } _{ \{ j_{\alpha}, k_{\alpha} \} } }
&=&
\sum_{ \{ I_{u_{i}}^{o} \} } \sum_{ \{ \tl{I}_{u_{i}}^{o} \} }
\overline{
\tensor{   {\Big[ \, \cU_{\Gamma_{(R)} } \, \Big] }   } {^{ \{ \tl{I}_{u_{i} } \} }  _{ \{ J_{\alpha}, K_{\alpha} \}   }    }
}
\tensor{   {\Big[ \tensor{ Z( \Gamma_{(R)} ) }{ _{\tensor{\chi}{_{\ell} }\, \act_{W_{(R)} }    }    } \Big] }   } {^{ \{ \tl{I}_{u_{i} } \} } _{ \{ I_{u_{i} } \} } }
\nn \\
&& \phantom{ \sum_{ \{ I_{u_{i}}^{o} \} } \sum_{ \{ \tl{I}_{u_{i}}^{o} \} } }
\tensor{   {\Big[ \, \cU_{\Gamma_{(R)} } \, \Big] }   } {^{ \{ I_{u_{i} } \} }  _{ \{ j_{\alpha}, k_{\alpha} \}   }    }
\prod_{\alpha=1}^{p} \delta_{ J_{\alpha} j_{\alpha} }
\,, \label{eq:LoopHolonomyOperator-CG-GammaLoopy}
\eeq
where $I_{u_{i}}=\{ j_{\alpha }^{(i)}, k_{\alpha}^{\pp(i)} \} \cup I_{u_{i}}^{o} $ and $\tl{I}_{u_{i}}=\{ J_{\alpha }^{(i)}, K_{\alpha}^{\pp(i)} \} \cup \tl{I}_{u_{i}}^{o} $, and the transition matrix $Z( \Gamma_{(R)} ) $ is given by
\beq
&&
\tensor{   {\Big[ \tensor{ Z( \Gamma_{(R)} ) }{ _{\tensor{\chi}{_{\ell} }\, \act_{W_{(R)} }    }    } \Big] }   } {^{ \{ \tl{I}_{u } \} } _{ \{ I_{u } \} } }
\nn
\\
&=&
\int
\prod_{e\in W_{(R)} }\rd g_{e} \,
\underline{
\la \Psi_{\pp\Gamma_{(R)}, \{ \tl{I}_{u} \} } (\{ g_{e} \}_{e\in W_{(R)} })
\vert \chi_{\ell} ( G_{ W_{(R)} } ) \vert
\Psi_{\pp\Gamma_{(R)}, \{ I_{u } \} } (\{ g_{e} \}_{e\in W_{(R)} }) \ra
}
\,.
\eeq
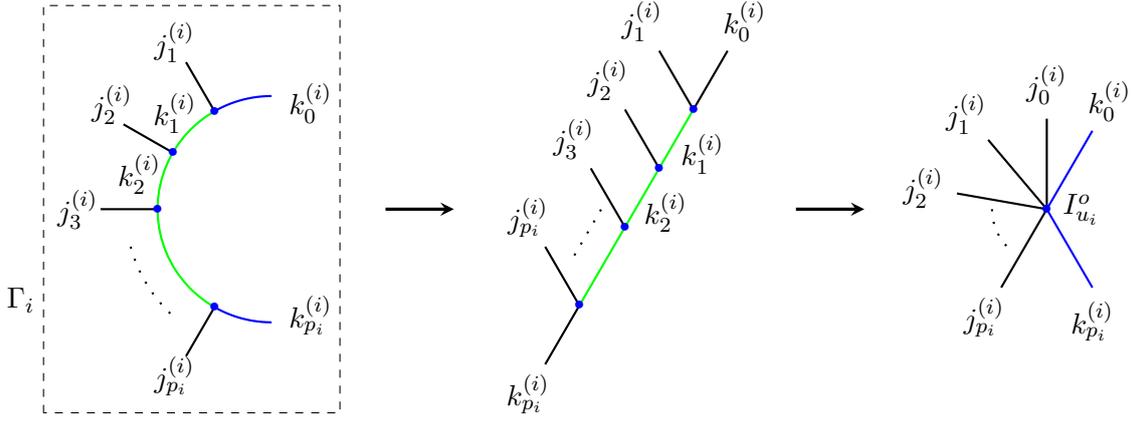
\begin{figure}
\centering
\begin{tikzpicture}[scale=0.6]

\coordinate (O) at (0,0);

\draw[dashed] (O) ++(-5,-4.5) rectangle (1.5,4.5);
\draw (O) ++ (-5.5,-2.5) node[above]{$\Gamma_{i}$};

\draw [thick,domain=90:120,blue] plot ({0+2.5 * cos(\x)}, {2.5 * sin(\x)});
\draw [thick,domain=240:270,blue] plot ({0+2.5 * cos(\x)}, {2.5 * sin(\x)});
\draw [thick,domain=120:240,green] plot ({0+2.5 * cos(\x)}, {2.5 * sin(\x)});

\draw[thick] (O) ++(120:2.5) node[scale=0.7,blue] {$\bullet$}--++ (120:1.25) ++(120:0.5) node {$j_{1}^{(i)}$};
\draw[thick] (O) ++(150:2.5) node[scale=0.7,blue] {$\bullet$}--++ (150:1.25) ++(120:0.5) node {$j_{2}^{(i)}$};
\draw[thick] (O) ++(180:2.5) node[scale=0.7,blue] {$\bullet$}--++ (180:1.25) ++(180:0.5) node {$j_{3}^{(i)}$};
\draw[thick] (O) ++(240:2.5) node[scale=0.7,blue] {$\bullet$} --++ (240:1.25) ++(240:0.5) node {$j_{p_{i}}^{(i)}$};

\draw [thick, loosely dotted,domain=195:230] plot ({0+3.2 * cos(\x)}, {3.2 * sin(\x)});

\draw (O) ++ (70:2.5) node {$k_{0}^{(i)}$};
\draw (O) ++ (290:2.5) node {$k_{p_{i} }^{(i)}$};
\draw (O) ++ (135:3) node {$k_{1}^{(i)}$};
\draw (O) ++ (165:3) node {$k_{2}^{(i)}$};

\draw[->,>=stealth,very thick] (2.5,0) -- (4,0);

\coordinate (P) at (10,3.5);

\path (P) ++ (240:1.5) coordinate (P1) ++ (240:1.5) coordinate (P2) ++ (240:1.5) coordinate (P3) ++ (240:2) coordinate (P4);

\draw[thick] (P1) --++ (120:1.5) ++(120:0.75) node {$j_{1}^{(i)}$};
\draw[thick] (P2) --++ (120:1.5) ++(120:0.75) node {$j_{2}^{(i)}$};
\draw[thick] (P3) --++ (120:1.5) ++(120:0.75) node {$j_{3}^{(i)}$};
\draw[thick] (P4) --++ (120:1.5) ++(120:0.75) node {$j_{p_{i}}^{(i)}$};

\draw[thick] (P) -- (P1);

\draw[thick] (P4) --++(240:1.5) ++ (240:0.75) node {$k_{p_{i}}^{(i)}$};

\path (P3) ++ (120:0.75) ++ (240:0.3) coordinate (O3);
\path (P4) ++ (120:0.75) ++ (60:0.3) coordinate (O4);
\draw [thick, loosely dotted] (O3) -- (O4);

\draw[thick,green] (P1) --node[near end,right=1,black] {$k_{1}^{(i)}$} (P2) --node[near end,right,black] {$k_{2}^{(i)}$} (P3) -- (P4);

\draw (P) ++ (60:0.75) node {$k_{0}^{(i)}$};
\draw (P1) node[scale=0.7,blue] {$\bullet$};
\draw (P2) node[scale=0.7,blue] {$\bullet$};
\draw (P3) node[scale=0.7,blue] {$\bullet$};
\draw (P4) node[scale=0.7,blue] {$\bullet$};

\draw[->,>=stealth,very thick] (11.5,0) -- (13,0);

\coordinate (Q) at (17,0);

\draw[thick] (Q) --++ (90:2) ++(90:0.75) node {$j_{0}^{(i)}$};
\draw[thick] (Q) --++ (130:2) ++(130:0.75) node {$j_{1}^{(i)}$};
\draw[thick] (Q) --++ (170:2) ++(170:0.75) node {$j_{2}^{(i)}$};
\draw[thick] (Q) --++ (240:2) ++(240:0.75) node {$j_{p_{i}}^{(i)}$};

\draw[thick,blue] (Q) --++ (60:2) ++(60:0.75) node[black] {$k_{0}^{(i)}$};
\draw[thick,blue] (Q) --++ (-60:2) ++(270:0.75) node[black] {$k_{p_{i}}^{(i)}$};

\draw [thick, loosely dotted,domain=180:230] plot ({17+1.2 * cos(\x)}, {0+1.2 * sin(\x)});

\draw (Q) node[scale=0.7,blue] {$\bullet$};
\draw (Q) ++ (0:0.75) node {$I_{u_{i}}^{o}$};

\end{tikzpicture}
%
%
\caption{The illustration for recoupling spins to acquire cluster intertwiner $I_{u_{i}}$.
}
\label{fig:ClusterIntertwiner}
\end{figure}
\end{res}
\begin{proof}
The transition matrix $Z$ eq.(\ref{eq:Amplitudes-6jsymbols}) can be computed from the type of graph $\Gamma=W \sqcup \cE^{\pp}$ where $\cE^{\pp}=\pp\Gamma$ is the set of $p$ boundary edges $\{ e^{\pp}_{1}, \cdots , e^{\pp}_{p} \}$, and $\Gamma^{o}=W$.
Let us consider the simplest case $n=2$. The proof is straightforward to generalize to arbitrary $n$.
Suppose bipartition $\Gamma=\Gamma_{1} \sqcup \Gamma_{2}$,
\be
\Gamma_{1}=T_{1} \sqcup \{ e^{\pp}_{1}, \cdots , e^{\pp}_{q} , e_{p}^{(1)}, e_{q}^{(1)} \} 
\,, \qquad
\Gamma_{2}=T_{2} \sqcup \{ e^{\pp}_{q+1}, \cdots , e^{\pp}_{p} , e_{p}^{(2)}, e_{q}^{(2)} \} \,. 
\ee
where $e_{p}^{(1)}$, $e_{q}^{(1)}$, $e_{p}^{(2)}$, $e_{q}^{(2)}$ are due to the partition that splits edges $e_{p}$ and $e_{q}$ with $e_{p}=e_{p}^{(1)} \sqcup e_{p}^{(2)}$ and $e_{q}=e_{q }^{(1)} \sqcup e_{q}^{(2)}$.
The bulks of $\Gamma_{1}$ and $\Gamma_{2}$ are $\Gamma_{1}^{o}=T_{1}[v_1 \overset{ e_{1} }{\to} \cdots \overset{ e_{q-2} }{\to} v_{q-1} \overset{ e_{q-1} }{\to} v_q]$ and $\Gamma_{2}^{o}=T_2[v_{q+1} \overset{ e_{q+1} }{\to} \cdots \overset{ e_{p-2} }{\to} v_{p-1} \overset{ e_{p-1} }{\to} v_p]$, respectively. In particular, $W_{1}=T_{1} \sqcup \{ e_{p}^{(1)}, e_{q}^{(1)} \}$ and $W_{2}=T_{2} \sqcup \{ e_{p}^{(2)}, e_{q}^{(2)} \}$ so $W=W_{1} \sqcup W_{2}$.
An example is illustrated by Fig.\ref{fig:ReducingGraph-OneLoop}.
We start by noticing
\beq
&&
\tensor{   {\Big[ Z(\Gamma)_{\tensor{\chi}{_{\ell} }\, \act_{W}   } \Big] }   } {^{ \{ J_{\alpha}, K_{\alpha} \} } _{ \{ j_{\alpha}, k_{\alpha} \} } }
\nn
\\
&=&
\int
\prod_{e\in W }\rd g_{e} \,
\underline{ \la
\Psi_{\pp\Gamma, \{J_{\alpha},K_{\alpha} \} } (\{ g_{e} \}_{e\in W })
\vert \chi_{\ell} ( G_{W} ) \vert
\Psi_{\pp\Gamma, \{j_{\alpha},k_{\alpha} \} } (\{ g_{e} \}_{e\in W }) \ra
}
\label{eq:TMonGraph-CGHolonomy}
\\
&=&
\int
\prod_{e\in W \setminus (T_{1} \sqcup T_{2}) }\rd h_{e} \,
\underline{
\la \Psi_{\pp\Gamma, \{J_{\alpha},K_{\alpha} \} } (\{h_{e}\}_{e\in W \setminus (T_{1} \sqcup T_{2}) }, \{ \id \}_{e\in T_{1} \sqcup T_{2} }) \vert \chi_{\ell} ( G_{W} )
}
\nn
\\
&&\phantom{ \int \prod_{e\in W \setminus (T_{1} \sqcup T_{2}) }\rd h_{e} \, }
\underline{
\vert
\Psi_{\pp\Gamma, \{j_{\alpha},k_{\alpha} \} } (\{h_{e}\}_{e\in W \setminus (T_{1} \sqcup T_{2}) }, \{ \id \}_{e\in T_{1} \sqcup T_{2} }) \ra
}
\,. \label{eq:TMonGraph-GaugeFixing-CGHolonomy}
\eeq
Let us break down this step: we start from eq.(\ref{eq:TransitionMatrix}). Recall eq.(\ref{eq:DBHSub-decomposition}), we can express the bulk-boundary map $| \Psi_{\pp\Gamma, \{j_{\alpha},k_{\alpha} \} } (\{ g_{e} \}_{e\in W }) \ra$ in terms of 'gluing operation' $\utimes$ for bulk-boundary maps $| \Psi_{\pp\Gamma_{1}, \{j_{\alpha}^{(1)}, k_{\alpha}^{(1)} \} } (\{ g_{e'} \}_{e'\in T_{1} }) \ra$ and $| \Psi_{\pp\Gamma_{2}, \{j_{\alpha}^{(2)}, k_{\alpha}^{(2)} \} } (\{ g_{e''} \}_{e''\in T_{2} }) \ra$. We then gauge-fix the two bulk-boundary maps such that the holonomies along $T_{1}$ and $T_{2}$ are gauge-fixed to $\id$ in line with eq.(\ref{eq:GaugeFixing-BoundaryStates}), i.e.,
\beq
&&| \Psi_{\pp\Gamma, \{j_{\alpha},k_{\alpha} \} } (\{ g_{e} \}_{e\in W }) \ra
\nn
\\
&=&
\left[
h_{ e_{p}^{(1)} } \otimes
h_{e_{q}^{(1)} } \otimes
\left(\bigotimes_{e'\in\pp\Gamma_{1} \setminus ( e_{p}^{(1)} \sqcup e_{q}^{(1)} )  } h_{v(e')}^{\eps_{e'}^{v}}\right)
| \Psi_{ \pp\Gamma_{1}, \{j_{\alpha}^{(1)},k_{\alpha}^{(1)} \} } (\{ \id \}_{e'\in T_{1} }) \ra
\right]
\nn
\\
&&
\bigutimes
\left[
h_{ e_{p}^{(2)} } \otimes
h_{e_{q}^{(2)} } \otimes
\left(\bigotimes_{e''\in\pp\Gamma_{2} \setminus ( e_{p}^{(2)} \sqcup e_{q}^{(2)} ) } h_{v(e'')}^{\eps_{e''}^{v}}\right)
| \Psi_{ \pp\Gamma_{2}, \{j_{\alpha}^{(2)},k_{\alpha}^{(2)} \} } (\{ \id \}_{e''\in T_{2} }) \ra
\right]
\label{eq:TMonGraph-GF-CGHolonomy}
\\
&=&
\left(\bigotimes_{e\in\pp\Gamma } h_{v(e)}^{\eps_{e}^{v}}\right)
\vert \Psi_{\pp\Gamma, \{j_{\alpha},k_{\alpha} \} } (\{h_{e}\}_{e\in W \setminus (T_{1} \sqcup T_{2}) }, \{ \id \}_{e\in T_{1} \sqcup T_{2} }) \ra
\,. \label{eq:TMonGraph-GFTrees-CGHolonomy}
\eeq
The advantage with the gauge-fixing is that the action of loop holonomy operator is only nontrivial along piecewise edges $e_{p}^{(1)}$, $e_{q}^{(1)}$, $e_{q}^{(2)}$, $e_{p}^{(2)}$ (interfacing edges between $\Gamma_{1}$ and $\Gamma_{2}$), and is trivial along other piecewise edges. Notably, gauge-fixing does not change the loop holonomy, i.e.,
\be
G_{W}=h_{e_{p}^{(2)} } \cdot h_{e_{q}^{(2)} } \cdot h_{e_{q}^{(1)} } \cdot h_{e_{p}^{(1)} }=G_{W_{(R)}}
\,.
\ee
We put eq.(\ref{eq:TMonGraph-GFTrees-CGHolonomy}) back to the eq.(\ref{eq:TMonGraph-CGHolonomy}), and note that the boundary holonomies are erased by the scalar product `$\underline{\la \, \ | \, \ \ra}$' for $\cH_{\pp\Gamma}$ due to $h^{\dagger} h=\id$, while the holonomies $h_{ e_{p}^{(1)} }$, $h_{ e_{p}^{(2)} }$, $h_{ e_{q}^{(1)} }$, $h_{ e_{q}^{(2)} }$ are not erased by the `$\underline{\la \, \ | \, \ \ra}$'.

To handle eq.(\ref{eq:TMonGraph-GF-CGHolonomy}), we follow the spirit of eq.(\ref{eq:Correspondence-GaugeFixed,LoopySPN}). We glue boundary edges along the $T_{1}$ and $T_{2}$. The resulting intertwiners allow for decomposition: 
\begin{equation}
\begin{aligned}
| \Psi_{ \pp\Gamma_{1}, \{j_{\alpha}^{(1)},k_{\alpha}^{(1)} \} } (\{ \id \}_{e'\in T_{1} }) \ra
=&
\sum_{I_{u_{1}}^{o} }
\tensor{   {\Big[ \, \cU_{\Upsilon_{1} }^{ \{ j_{\alpha}^{(1)}, k_{\alpha}^{\pp(1)} \} } \, \Big] }   } {^{ I_{u_{1}}^{o} }    }_{ \{ k_{\alpha}^{o(1)} \} }
| \Psi_{ \pp\Upsilon_{1}, I_{u_{1} } } \ra
\,,
\\
| \Psi_{ \pp\Gamma_{2}, \{j_{\alpha}^{(2)},k_{\alpha}^{(2)} \} } (\{ \id \}_{e''\in T_{2} }) \ra
=&
\sum_{I_{u_{2}}^{o} }
\tensor{   {\Big[ \, \cU_{\Upsilon_{2} }^{ \{ j_{\alpha}^{(2)}, k_{\alpha}^{\pp(2)} \} } \, \Big] }   } {^{ I_{u_{2}}^{o} }    }_{ \{ k_{\alpha}^{o(2)} \} }
| \Psi_{ \pp\Upsilon_{2}, I_{u_{2} } } \ra
\,.
\end{aligned} \label{eq:TMonGraph-GF-Gluing-CGHolonomy}
\end{equation}
Here the $\Upsilon_{1}$ and $\Upsilon_{2}$ are coarse-grained graphs of respect $\Gamma_{1}$ and $\Gamma_{2}$. With eq.(\ref{eq:TMonGraph-GF-Gluing-CGHolonomy}), the boundary state in eq.(\ref{eq:TMonGraph-GFTrees-CGHolonomy}) is rewritten as:
\beq
&&
| \Psi_{\pp\Gamma, \{j_{\alpha},k_{\alpha} \} } (\{h_{e}\}_{e\in W \setminus (T_{1} \sqcup T_{2}) }, \{ \id \}_{e\in T_{1} \sqcup T_{2} }) \ra
\nn
\\
&=&
\left[
h_{ e_{p}^{(1)} } \otimes
h_{e_{q}^{(1)} } \otimes
\sum_{I_{u_{1}}^{o} }
\tensor{   {\Big[ \, \cU_{\Upsilon_{1} }^{ \{ j_{\alpha}^{(1)}, k_{\alpha}^{\pp(1)} \} } \, \Big] }   } {^{ I_{u_{1}}^{o} }    }_{ \{ k_{\alpha}^{o(1)} \} }
| \Psi_{ \pp\Upsilon_{1}, I_{u_{1} } } \ra
\right]
\nn
\\
&&
\bigutimes
\left[
h_{ e_{p}^{(2)} } \otimes
h_{e_{q}^{(2)} } \otimes
\sum_{I_{u_{2}}^{o} }
\tensor{   {\Big[ \, \cU_{\Upsilon_{2} }^{ \{ j_{\alpha}^{(2)}, k_{\alpha}^{\pp(2)} \} } \, \Big] }   } {^{ I_{u_{2}}^{o} }    }_{ \{ k_{\alpha}^{o(2)} \} }
| \Psi_{ \pp\Upsilon_{2}, I_{u_{2} } } \ra
\right]
\,. \label{eq:GaugeFixing-CGHolonomy}
\eeq
We put eq.(\ref{eq:GaugeFixing-CGHolonomy}) back to eq.(\ref{eq:TMonGraph-GaugeFixing-CGHolonomy}), obtaining the transition matrix along $W_{(R)}$,
\beq
&&
\int
\bigg[
( h_{ e_{p}^{(1)} } \otimes
h_{e_{q}^{(1)} } ) \act
| \Psi_{ \pp\Upsilon_{1}, \tl{I}_{u_{1} } } \ra
\bigutimes \,
( h_{ e_{p}^{(2)} } \otimes
h_{e_{q}^{(2)} } ) \act
| \Psi_{ \pp\Upsilon_{2}, \tl{I}_{u_{2} } } \ra
\bigg]^{\dagger}
\chi_{\ell} ( h_{ e_{p}^{(2)} } \cdot h_{ e_{q}^{(2)} }  \cdot h_{ e_{q}^{(1)} }  \cdot h_{ e_{p}^{(1)} } )
\nn
\\
&&\quad
\bigg[
( h_{ e_{p}^{(1)} } \otimes
h_{e_{q}^{(1)} } ) \act
| \Psi_{ \pp\Upsilon_{1}, I_{u_{1} } } \ra
\bigutimes \,
( h_{ e_{p}^{(2)} } \otimes
h_{e_{q}^{(2)} } ) \act
| \Psi_{ \pp\Upsilon_{2}, I_{u_{2} } } \ra
\bigg]
\, \rd h_{ e_{p}^{(1)} } \rd h_{e_{q}^{(1)} } \rd h_{ e_{p}^{(2)} } \rd h_{e_{q}^{(2)} }
\nn \\
&=&
\int
\prod_{e\in W_{(R)} }\rd g_{e} \,
\underline{
\la \Psi_{\pp\Gamma_{(R)}, \{ \tl{I}_{u} \} } (\{ g_{e} \}_{e\in W_{(R)} })
\vert \chi_{\ell} ( G_{ W_{(R)} } ) \vert
\Psi_{\pp\Gamma_{(R)}, \{ I_{u } \} } (\{ g_{e} \}_{e\in W_{(R)} }) \ra
}
\nn \\
&=&
\tensor{   {\Big[ \tensor{ Z( \Gamma_{(R)} ) }{ _{\tensor{\chi}{_{\ell} }\, \act_{W_{(R)} }    }    } \Big] }   } {^{ \{ \tl{I}_{u } \} } _{ \{ I_{u } \} } }
\, \prod_{\alpha=1}^{p} \delta_{ J_{\alpha} j_{\alpha} }
\,. \label{eq:TransitionMatrix-CGGraph}
\eeq
Here $I_{u_{i}}=\{ j_{\alpha }^{(i)}, k_{\alpha}^{\pp(i)} \} \cup I_{u_{i}}^{o}$ and $\tl{I}_{u_{i}}=\{ J_{\alpha }^{(i)}, K_{\alpha}^{\pp(i)} \} \cup \tl{I}_{u_{i}}^{o} $.
The $\delta_{J_{\alpha} j_{\alpha} }$ is imposed by scalar product `$\underline{\la \, \ | \, \ \ra}$'.
So eq.(\ref{eq:TransitionMatrix-CGGraph}) actually represents the transition matrix on the coarse-grained graph $\Gamma_{(R)}$. Therefore, eq.(\ref{eq:TMonGraph-GaugeFixing-CGHolonomy}) leads to the transformation for particular case $n=2$,
\beq
&&
\tensor{   {\Big[ Z(\Gamma)_{\tensor{\chi}{_{\ell} }\, \act_{W}   } \Big] }   } {^{ \{ J_{\alpha}, K_{\alpha} \} } _{ \{ j_{\alpha}, k_{\alpha} \} } }
\nn
\\
&=&
\sum_{ \{ I_{u_{i}}^{(o)} \} } \sum_{ \{ \tl{I}_{u_{i}}^{(o)} \} }
\overline{
\tensor{   {\Big[ \, \cU_{\Upsilon_{1} }^{ \{ J_{\alpha}^{(1)}, K_{\alpha}^{\pp(1)} \} } \, \Big] }   } {^{ \tl{I}_{u_{1}}^{o} }    }_{ \{ K_{\alpha}^{o(1)} \} }
}
\overline{
\tensor{   {\Big[ \, \cU_{\Upsilon_{2} }^{ \{ J_{\alpha}^{(2)}, K_{\alpha}^{\pp(2)} \} } \, \Big] }   } {^{ \tl{I}_{u_{2}}^{o} }    }_{ \{ K_{\alpha}^{o(2)} \} }
}
\tensor{   {\Big[ \tensor{ Z( \Gamma_{(R)} ) }{ _{\tensor{\chi}{_{\ell} }\, \act_{W_{(R)} }    }    } \Big] }   } {^{ \{ \tl{I}_{u_{i}} \} } _{ \{ I_{u_{i}} \} } }
\nn
\\
&&
\phantom{ \sum_{ \{ I_{u_{i}}^{(o)} \} } \sum_{ \{ \tl{I}_{u_{i}}^{(o)} \} } }
\tensor{   {\Big[ \, \cU_{\Upsilon_{1} }^{ \{ j_{\alpha}^{(1)}, k_{\alpha}^{\pp(1)} \} } \, \Big] }   } {^{ I_{u_{1}}^{o} }    }_{ \{ k_{\alpha}^{o(1)} \} }
\tensor{   {\Big[ \, \cU_{\Upsilon_{2} }^{ \{ j_{\alpha}^{(2)}, k_{\alpha}^{\pp(2)} \} } \, \Big] }   } {^{ I_{u_{2}}^{o} }    }_{ \{ k_{\alpha}^{o(2)} \} }
\, \prod_{\alpha=1}^{p} \delta_{ J_{\alpha} j_{\alpha} }
\,. \label{eq:LoopHolonomyOperator-Transformation-GammaLoopy-Bipartition}
\eeq
\end{proof}

Fig.\ref{fig:LoopHolonomyOperator-ChannelSwitch} is a snapshot for the proof. The unitary $\cU_{\cB}$ for dual boundary Hilbert space is implemented by first gauge-fixing the holonomy along $e_{\alpha+1}$ such that the holonomy operator acts trivially on $e_{\alpha+1}$, then gluing two bouquet edges and switching channel, thus the $e_{\alpha+1}$ is coarse-grained during the action of holonomy operator.

\smallskip

A directly corollary of the Result \ref{Prop:CG-LoopHolonomyOperator}, is that we can compute expectation $\la \wh{ \chi_{\ell} }\, \act_{W} \ra$ from loopy spin network \cite{Chen:2022rty}.

\smallskip

Now that we can see, the exponential evolution that is generated by loop holonomy operator $\wh{\chi_{\ell} } \, \act_{W}$ on $\Gamma$, is related to the exponential evolution that is generated by loop holonomy operator $\wh{\chi_{\ell} } \, \act_{W(R)}$ on $\Gamma_{(R)}$ via following transformation:
\beq
\exp\left [ - \ri t \, \tensor{Z(\Gamma) }{ _{\tensor{\chi}{_{\ell} }\, \act_{W}    }   } \right]
=
\cU_{\Upsilon_{1} }^{\dagger} \cdots \cU_{\Upsilon_{n} }^{\dagger}
\exp\left [ - \ri t \, \tensor{Z(\Gamma_{(R)}) }{ _{\tensor{\chi}{_{\ell} }\, \act_{W_{(R)} }    }   } \right]
\cU_{\Upsilon_{1} } \cdots \cU_{\Upsilon_{n} }
\,.
\eeq
Every $\cU_{\Upsilon_{i} }$ is a unitary for internal space of intertwiner at $u_{i}$, thus they do not affect the spin-matching constraints between interfacing edges. Instead, they are to be interpreted as local unitaries at $u_{i}$, i.e., these unitaries do not affect spin entanglement. Indeed, following transformation eq.(\ref{eq:DensityMatrix-Transformation}), the density matrix $\rho_{\pp\Gamma^{*}}$'s evolution is now given by
\beq
\rho_{\pp\Gamma^{*}}(t)
&=&
\cU_{\cB_{1} }^{\dagger} \cdots \cU_{\cB_{n} }^{\dagger}
\, \rho_{\pp\Gamma_{(R)}^{*} }(t) \,
\cU_{\cB_{1} } \cdots \cU_{\cB_{n} }
\,,
\\
\rho_{\pp\Gamma_{(R)}^{*} }(t)
&=&
\exp
\left [ - \ri (t-t_{0}) \, \tensor{Z(\Gamma) }{ _{\tensor{\chi}{_{\ell} }\, \act_{W_{(R)} }    }   } \right]
\, \rho_{\pp\Gamma_{(R)}^{*} }(t_0) \,
\exp
\left [ \ri (t-t_{0}) \, \tensor{Z(\Gamma) }{ _{\tensor{\chi}{_{\ell} }\, \act_{W_{(R)} }    }   } \right]
\,.
\eeq
Partial tracing $\rho_{\Gamma}$ over sub-networks $\Gamma \setminus \Gamma_{i}$ is equivalent to partial tracing $\rho_{\pp\Gamma^{*}}$ over $(\cH_{ \cB_{i}^{c} })^{*}$ where $\cB_{i}^{c} \equiv \pp(\Gamma \setminus \Gamma_{i})$. It leads to the reduced density matrix for  $(\cH_{\cB_{i} } )^{*}$,
\beq \label{eq:DensityMatrix-Transformation-inProp}
\rho_{\pp\Gamma_{i}^{*} }(t)
=
\tr_{ (\cH_{ \cB_{i}^{c} })^{*} } \Big[ \rho_{\pp\Gamma^{*}}(t) \Big]
=
\cU_{\cB_{i} }^{\dagger}
\, \rho_{\pp\Upsilon_{i}^{*} }(t) \,
\cU_{\cB_{i} }
 \,.
\eeq
Therefore, the reduced density matrices $\rho_{\pp\Gamma_{i}^{*} }(t)$ and $\rho_{\pp\Upsilon_{i}^{*} }(t)$ are equivalent up to a unitary for dual boundary Hilbert space $(\cH_{ \cB_{i} })^{*}$.

\begin{res} \label{prop:IntertwinerEntanglement-ReducedGraph-Graph}
Given a partition $\Gamma=\bigsqcup_{i=1}^{n}\Gamma_{i}$, and an oriented loop $W$ on $\Gamma$. Let the coarse-grained graph for $\Gamma$ be $\Gamma_{(R)}=\bigsqcup_{i=1}^{n} \Upsilon_{i}$ where $\Upsilon_{i}$ are loopy graphs for respect $\Gamma_{i}$, and the coarse-grained loop $W_{(R)}$ for $W$ on $\Gamma_{(R)}$.
Let $\cE$ be any entanglement measure. Consider evolutions generated by loop holonomy operators $\wh{\chi_{\ell} } \, \act_{W}$ and the corresponding $\wh{\chi_{\ell} } \, \act_{W_{(R)}}$.
Then the dynamics of spin network entanglement between $\Gamma_{i}$ is identical to the dynamics of intertwiner entanglement between $\Upsilon_{i}$, i.e.,
\be
\cE[ \rho_{\Gamma}(t) ]=\cE[ \rho_{\Gamma_{(R)} }(t) ]
\,.
\ee
\end{res}
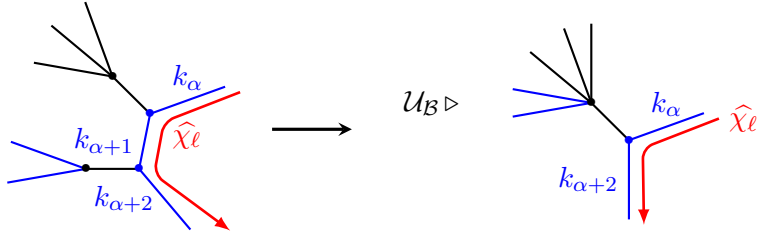
\begin{figure}
\centering
\begin{tikzpicture}[scale=0.7]

\coordinate (O) at (0,0.5);

\coordinate (P) at (-0.5,-1.25);

\draw[thick] (O) -- ++(170:1.5);
\draw[thick] (O) -- ++(110:1.5);
\draw[thick] (O) -- ++(140:1.5);

\draw[thick,blue] (P) -- ++(160:1.5);
\draw[thick,blue] (P) -- ++(190:1.5);

\draw[thick] (O) -- ++(-45:1) coordinate (O1);

\draw[thick] (P) -- ++(0:1) coordinate (P1);

\path (O1) ++ (-30:0.3) coordinate (O4);

\path (P1) ++ (-20:0.3) coordinate (P2);
\draw (P2) ++(-40:1.5) ++ (-20:0.3) coordinate (P3);

\draw[thick,blue] (P1) -- ++(-50:1.5) node[midway,left] {$k_{\alpha+2}$};

\draw[thick,blue] (P1) -- node[left,midway] {$k_{\alpha+1}$} (O1);

\draw[thick,blue] (O1) -- node[midway,above] {$k_{\alpha}$} ++(20:1.5) ++ (-20:0.3)coordinate (O3);

\draw [line width=1pt,red,->,>=latex,rounded corners] (O3) -- (O4) -- (P2) -- (P3);

\draw (O) node[scale=0.7] {$\bullet$};
\draw (O1) node[scale=0.7,blue] {$\bullet$};

\draw (P) node[scale=0.7] {$\bullet$};
\draw (P1) node[scale=0.7,blue] {$\bullet$};

\draw (O4) ++ (-30:0.5) node[red] {$\widehat{\chi_{\ell} }$};

\draw[->,>=stealth,very thick] (O) ++(3,-1) -- ++ (1.5,0);

\coordinate (A) at (9,0);

\draw (A)  ++ (-3,0) node {$\cU_{\cB} \, \act$};

\draw[thick] (A) -- ++(90:1.5);
\draw[thick] (A) -- ++(110:1.5);
\draw[thick] (A) -- ++(140:1.5);

\draw[thick,blue] (A) -- ++(170:1.5);
\draw[thick,blue] (A) -- ++(190:1.5);

\draw[thick] (A) -- ++(-45:1) coordinate (A1);

\path (A1) ++ (-30:0.3) coordinate (A4);

\draw[thick,blue] (A1) -- node[midway,left] {$k_{\alpha+2}$} ++(-90:1.5) ++ (-20:0.3) coordinate (A2);
\draw[thick,blue] (A1) -- node[midway,above] {$k_{\alpha}$} ++(20:1.5) ++ (-20:0.3)coordinate (A3);

\draw [line width=1pt,red,->,>=latex,rounded corners] (A3) -- (A4) -- (A2);

\draw (A) node[scale=0.7] {$\bullet$};
\draw (A1) node[scale=0.7,blue] {$\bullet$};

\draw (A3) node[right,red] {$\widehat{\chi_{\ell} }$};

\end{tikzpicture}

\caption{The illustration for unitary transformation on loop holonomy operator: in the left side one is able to gauge fix the holonomy associated with spin $k_{\alpha+1}$ into $\id$, then implement gluing and channel transformation to coarse-grain $k_{\alpha+1}$.}
\label{fig:LoopHolonomyOperator-ChannelSwitch}
\end{figure}

At the end of the day, we have shown that spin network entanglement allows to be coarse-grained at dynamical level at least for the evolution generated by loop holonomy operator, thus one can study the dynamics of entanglement from coarse-grained graph.


\section{Examples} \label{Section:Examples}
We would like to conclude this paper with explicit examples of spin network entanglement coarse-graining, looking at entanglement excitation by loop holonomy operator.


\subsection{Triangle graph}
We consider the holonomy operator acting on the loop of triangle graph Fig.\ref{fig:triGraph-canGraph}. We compute explicitly the bipartite entanglement between $\cH_{A}$ and $\cH_{B}\otimes\cH_{C}$. Then we show that the bipartite entanglement can be studied from coarse-grained graph.
\begin{figure}[htb]
\centering
\begin{tikzpicture}[scale=0.7]

\coordinate (O) at (3,0);
\coordinate (P) at (4.5,0);
\coordinate (A) at (-3.5,0);

\draw[thick] (A) ++(60:1.2) coordinate(B3) node[scale=0.7,blue] {$\bullet$} node[right]{$C$} --++ (60:1) node[right] {$j_3$};
\draw[thick] (A) ++(180:1.2) coordinate(B1) node[scale=0.7,blue] {$\bullet$} node[above]{$A$}--++ (180:1) node[above] {$j_1$};
\draw[thick] (A) ++(300:1.2) coordinate(B2) node[right]{$B$} --++ (300:1) node[right] {$j_2$};

\draw[blue,thick] (B1) -- node[below] {$k_1$} (B2) node[scale=0.7,blue] {$\bullet$} -- node[right] {$k_2$} (B3) node[scale=0.7,blue] {$\bullet$} --node[above=2,midway] {$k_3$} (B1) node[scale=0.7,blue] {$\bullet$};

\draw[->,>=stealth,very thick] (-1.5,0) -- (0,0);

\draw (1,0) node {$\sum_{j_{23}}$};

\coordinate (O1) at (3,0);
\coordinate (O2) at (5,0);

\draw[thick] (O1) -- ++ (180:1) node[above] {$j_1$};

\draw[thick] (O2) -- ++ (1,0) node[above,midway] {$j_{23}$} -- ++ (45:1) node[above]{$j_3$};
\draw[thick] (O2) ++(1,0) node[scale=0.7] {$\bullet$} -- ++ (-45:1) node[below] {$j_2$};

\draw[blue,thick,in=115,out=65,rotate=0] (O1) to node[above,midway] {$k_3$} (O2) node[scale=0.7] {$\bullet$} to [out=245,in=-65] node[below] {$k_1$} (O1) node[scale=0.7] {$\bullet$};

\draw[->,>=stealth,very thick] (7.25,0) -- (8.75,0);

\coordinate (A1) at (10.5,0);
\coordinate (A2) at (12.5,0);

\draw[thick] (A1) -- ++ (180:1) node[above] {$j_1$};

\draw[thick] (A2) -- ++ (45:1) node[above]{$j_3$};
\draw[thick] (A2) -- ++ (-45:1) node[below] {$j_2$};

\draw[blue,thick,in=115,out=65,rotate=0] (A1) to node[midway,above] {$k_3$} (A2) node[scale=0.7] {$\bullet$} to[out=245,in=-65] node[midway,below] {$k_1$} (A1) node[scale=0.7] {$\bullet$};

\draw (A1) ++ (-90:0.5) node[scale=0.8] {$A$};
\draw (A2) ++ (-90:0.5) node[scale=0.8] {$B'$};

\end{tikzpicture}
\caption{Coarse-graining triangle graph to candy graph.}
\label{fig:triGraph-canGraph}
\end{figure}
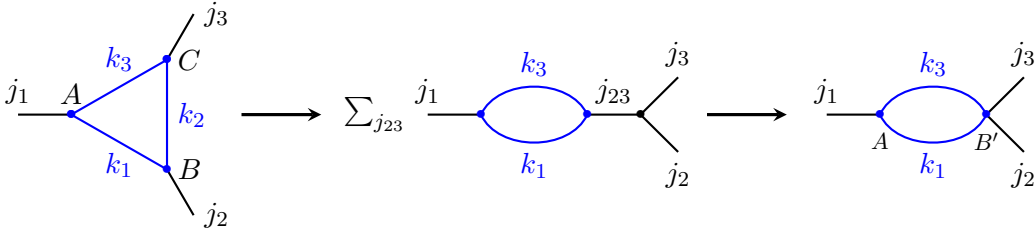

Spin network states can involve superpositions of the bulk spins $K_1$, $K_2$, $K_3$ while keeping boundary spins $j_1$, $j_2$, $j_3$. Such bulk spin superposition naturally induces a superposition of intertwiners. If this superposition carries correlations between the three vertices, this will be reflected in the entanglement between the four vertices.

Starting with an initial spin network basis state $| \Psi_{tri, \{j_i,k_i\} } \ra=| j_1,k_1,k_3 \ra_{A} \otimes | j_2,k_2,k_1 \ra_{B} \otimes | j_3, k_3,k_2\ra_{C}$, we consider the evolution generated by the loop holonomy operator $\wh{\chi_{\ell} }$,
\be
\wh{\chi_{\ell} }
:\bigotimes_{i=1}^{3}
\textrm{Inv}_{\SU(2)} \Big( \cV_{j_i} \otimes \cV_{k_{i} } \otimes \cV_{k_{i-1}} \Big)
\to
\bigoplus_{ \{K_i\} }
\bigotimes_{i=1}^{3}
\textrm{Inv}_{\SU(2)} \Big( \cV_{j_i} \otimes \cV_{K_{i}} \otimes \cV_{K_{i-1} } \Big)
\,.
\nn
\ee
For  infinitesimal time $t \to 0$, the unitarity evolution operator is $e^{ - \ri t \, \wh{\chi_{\ell} } } = \id - \ri  t \, \wh{\chi_{\ell} } + \cO(t^2)$, which acts as
 \beq
e^{ - \ri t \, \widehat{\chi_{\ell} } } | \Psi_{tri,\{j_i,k_i\}} \ra
&=&
e^{ - \ri t \, \widehat{\chi_{\ell} } } | j_1,k_1,k_3 \ra_{A} \otimes | j_2,k_2,k_1 \ra_{B} \otimes | j_3, k_3,k_2\ra_{C}
\nn \\
&=&
 \sum_{K_i=| k_i-{\ell} |}^{k_i+{\ell} }
  \left( \delta^{K_1}_{k_1}\delta^{K_2}_{k_2}\delta^{K_3}_{k_3}  - \ri t
  {[} Z(tri)_{\ell} {]}^{ \{j_{i}, K_{i} \} }_{ \phantom{ \{j_{i}, K_{i} \} } \{j_{i}, k_{i} \} } \, \right)
 | j_1,K_1,K_3 \ra_{A}
\nn \\
&&
\phantom{  \sum_{K_i=| k_i-{\ell} |}^{k_i+{\ell} } }
 \otimes | j_2,K_2,K_1 \ra_{B} \otimes | j_3, K_3,K_2\ra_{C}
 + \cO(t^2)
\,, \label{eq:ShortTime-BasisState-triGraph}
\eeq
where $| j_1,K_1,K_3 \ra_{A} \in \cH_{A}$, $| j_2,K_2,K_1 \ra_{B} \in \cH_{B}$, $| j_3,K_3,K_2 \ra_{C} \in \cH_{C}$, denote the intertwiners living at respect trivalent vertex.
According to the general formula (\ref{eq:Amplitudes-6jsymbols}), the transition matrix $[Z(tri)_{\ell}]$ is expressed in terms of $6j$-symbols for the action of loop holonomy operator $\wh{ \chi_{\ell} }$
\beq
{[} Z(tri)_{\ell} {]}^{ \{j_{i}, K_{i} \} }_{ \phantom{ \{j_{i}, K_{i} \} } \{j_{i}, k_{i} \} }
&=&
(-1)^{\sum_{i=1}^{3}( j_i+k_i+K_i + \ell)} \begin{Bmatrix}
   j_1 & k_1 & k_3 \\
   \ell &  K_3 & K_1
  \end{Bmatrix}\begin{Bmatrix}
   j_2 & k_2 & k_1 \\
   \ell &  K_1 & K_2
  \end{Bmatrix} \begin{Bmatrix}
   j_3 & k_3 & k_2 \\
   \ell & K_2 & K_3
  \end{Bmatrix}
  \nn
  \\
  &&\times
  \prod_{i=1}^{3} \sqrt{ (2k_i+1)(2K_i+1) }
  \,.
\eeq
The matrix elements are all real numbers. We take special care in properly normalizing the truncated state,
\be
| \Psi_{tri,\{j_i,k_i\}} (\ell,t) \ra
=
\f{ | \Psi_{tri,\{j_i,k_i\}} \ra - \ri t
\sum_{ \{K_i\} }
{[} Z(tri)_{\ell} {]}^{ \{j_{i}, K_{i} \} }_{ \phantom{ \{j_{i}, K_{i} \} } \{j_{i}, k_{i} \} }
| \Psi_{tri,\{j_i,K_i\}} \ra }
{ \sqrt{ 1+t^2 \sum_{s=0}^{2\ell} {[} Z(tri)_{s} {]}^{ \{j_{i}, k_{i} \} }_{ \phantom{ \{j_{i}, k_{i} \} } \{j_{i}, k_{i} \} } }   }
\,. \label{eq:TruncatedState-triGraph}
\ee
The normalization factor, at the denominator, can be computed explicitly using the composition rule of (\ref{eq:Composition-Amplitudes}).
The normalization factor can then be evaluated to
\begin{align}
&N_{tri, \{j_{i},k_{i}\} }(\ell,t)
\nn \\
=&
1+t^2 \sum_{s=0}^{2\ell} (-1)^{\sum_{i=1}^{3}( j_i+2k_i + s)} \begin{Bmatrix}
   j_1 & k_1 & k_3 \\
   s &  k_3 & k_1
  \end{Bmatrix}\begin{Bmatrix}
   j_2 & k_2 & k_1 \\
   s &  k_1 & k_2
  \end{Bmatrix} \begin{Bmatrix}
   j_3 & k_3 & k_2 \\
   s & k_2 & k_3
  \end{Bmatrix}
  \prod_{i=1}^{3} (2k_i+1)
  \\
  =&
  1+t^2 \sum_{\{K_{i}\}} \begin{Bmatrix}
   j_1 & k_1 & k_3 \\
   \ell &  K_3 & K_1
  \end{Bmatrix}^2 \begin{Bmatrix}
   j_2 & k_2 & k_1 \\
   \ell &  K_1 & K_2
  \end{Bmatrix}^2 \begin{Bmatrix}
   j_3 & k_3 & k_2 \\
   \ell & K_2 & K_3
  \end{Bmatrix}^2 \prod_{i=1}^{3} (2k_i+1)(2K_i+1)
  \,.
\end{align}

\smallskip

We now look at the entanglement from the truncated state eq.(\ref{eq:TruncatedState-triGraph}). The initial spin network state is a basis state, thus is fully separable with vanishing entanglement. The final state is given by the density matrix $\rho_{tri}(t)=| \Psi_{tri,\{j_i,k_i\}} (\ell,t) \ra\la \Psi_{tri,\{j_i,k_i\}} (\ell,t) |$. Consider bipartition $\cH_A$ and $\cH_B \otimes \cH_C$. The reduced density matrix $\rho_{tri_{A} }(t) \in \textrm{End} (\cH_{A})$ is obtained via partial tracing over $\cH_{B} \otimes \cH_{C}$ with orthonormal basis $ | j_2,K_2,K_1 \ra_{B} \otimes | j_3,K_3,K_2 \ra_{C} \equiv | \{j_2,j_3,K_1,K_2,K_3\} \ra_{BC}$:
\beq
\rho_{tri_{A} }(t)
&=&
\sum_{K_1,K_2,K_3} \la \{j_2,j_3,K_1,K_2,K_3\} | \rho_{tri}(t) | \{j_2,j_3,K_1,K_2,K_3\} \ra_{BC}
\nn \\
&=& \bigg[ t^2 \sum_{K_1, K_2,K_3}
\Big( {[} Z(tri)_{\ell} {]}^{ \{j_{i}, K_{i} \} }_{ \phantom{ \{j_{i}, K_{i} \} } \{j_{i}, k_{i} \} } \Big)^2
| j_1,K_1,K_3 \ra \la j_1,K_1,K_3 |_{A}
\nn \\
&&
\phantom{ \bigg( }
+ | j_1,k_1,k_3 \ra \la j_1,k_1,k_3 |_{A}
\bigg] \times \f{1}{ N_{ tri, \{j_{i},k_{i}\} }(\ell,t) }
\,.
\eeq
The eigenvalues of $\rho_{tri_{A} }(t)$ can be read off directly from this formula since the reduced density matrix is diagonal in the $\big\{| j_1,K_1,K_3 \ra \big\}$ basis for $\cH_{A}$,
\beq
&&\lambda_{ \rho_{tri_A } }[K_1,K_3]
\nn
\\
&=&
\f{\delta^{K_1}_{k_1} \, \delta^{K_3}_{k_3}
+ t^2 \sum_{K_2}
\begin{Bmatrix}
   j_1 & k_1 & k_3 \\
   \ell &  K_3 & K_1
  \end{Bmatrix}^2
  \begin{Bmatrix}
   j_2 & k_2 & k_1 \\
   \ell &  K_1 & K_2
  \end{Bmatrix}^2
  \begin{Bmatrix}
   j_3 & k_3 & k_2 \\
   \ell & K_2 & K_3
  \end{Bmatrix}^2 \, \prod_{i=1}^{3} (2K_i+1)(2k_i+1)
  }{
  1+t^2 \sum_{s=0}^{2\ell} (-1)^{\sum_{i=1}^{3}( j_i+2k_i + s)} \begin{Bmatrix}
   j_1 & k_1 & k_3 \\
   s &  k_3 & k_1
  \end{Bmatrix}\begin{Bmatrix}
   j_2 & k_2 & k_1 \\
   s &  k_1 & k_2
  \end{Bmatrix} \begin{Bmatrix}
   j_3 & k_3 & k_2 \\
   s & k_2 & k_3
  \end{Bmatrix}
  \prod_{i=1}^{3} (2k_i+1)
  }
\,. \nn
\\
\label{eq:eigenvalues-Truncated-triGraph}
\eeq

\smallskip

Now we look at the entanglement excitation on the coarse-grained graph. For bipartition $\cH_A$ and $\cH_B \otimes \cH_C$, we gauge fix $g_2 \to \id$, then contract vertices $B,C$ along $e_{2}$, acquiring a 4-valent vertices with spins $j_2,j_3,k_1,k_3$ as Fig.\ref{fig:triGraph-canGraph}. The spin network coarse-grained state from the triangle graph to the candy graph is given by
\be
| \psi_{ tc, \{ j_i,k_i\} } \ra
=
| j_1,k_1,k_3 \ra_{A}
\otimes
\sum_{j_{23} }
\cU_{ \{j,k\}_{B'}}^{k_2,j_{23} } | \{ j_2, j_3, k_3, k_1 \}, j_{23} \ra_{B'}
\,.
\label{eq:triGraph-canGraph}
\ee
As initial state based on triangle graph, the initial state $| \psi_{ tc, \{ j_i,k_i\} } \ra$ based on the coarse-grained graph is a product state. The intertwiner at $B'$ is a superposition of basis states $| \{ j_2, j_3, k_3, k_1 \}, j_{23} \ra_{B'}$ with respect to internal spin $j_{23}$. The unitary matrix
\be
\cU_{ \{j,k\}_{B'}}^{k_2,j_{23} } 
=
(-1)^{2k_2+k_3-k_1+j_2+j_3} \sqrt{ (2k_2+1)(2j_{23}+1) }
\begin{Bmatrix}
j_2 & j_3 & j_{23} \\
k_3 & k_1 & k_2 
\end{Bmatrix}
\ee
is account for the channel transformation, and the matrix elements are all real numbers, and the unitarity is verified by the orthogonality of $6j$-symbols,
\be
\sum_{j_{23} } \cU_{ \{j,k\}_{B'}}^{k_2,j_{23} }  \cU_{ \{j,k\}_{B'}}^{\tk_2,j_{23} } 
=
\sum_{j_{23} } \sqrt{ (2k_2+1) (2\tk_2+1) }(2j_{23}+1)
\begin{Bmatrix}
j_2 & j_3 & j_{23} \\
k_3 & k_1 & k_2 
\end{Bmatrix}
\begin{Bmatrix}
j_2 & j_3 & j_{23} \\
k_3 & k_1 & \tk_2 
\end{Bmatrix}
=\delta_{k_2 \tk_2}
\,.
\ee
The unitarity requires the two $\cU$ having same $\{ j , k \}$ labeling since they are external labels for the intertwiner. On the other hand, the $k_2,j_{23}$ are internal labels for the intertwiner, so we also have
\be \label{eq:unitary-tc}
\sum_{k_2 } \cU_{ \{j,k\}_{B'}}^{k_2,j_{23} }  \cU_{ \{j,k\}_{B'}}^{k_2,\tl{j}_{23} } 
=\delta_{j_{23} \tl{j}_{23} }
\,.
\ee
The unitary $\cU_{B'}$ links the transition matrices $Z(tri)$ and $Z(tc)$,
\be
{[} Z(tri)_{\ell} {]}^{ \{j_{i}, K_{i} \} }_{ \phantom{ \{j_{i}, K_{i} \} } \{j_{i}, k_{i} \} }
=
\sum_{  j_{23}, J_{23} }
\cU_{ \{j,K\}_{B'}}^{K_2,J_{23} } 
[ Z(tc)_{\ell} ]^{ j_1,J_{23}, K_1,K_3 }_{ \phantom{ J_{1}, J_{23}, K_1,K_3 } j_{1}, j_{23}, k_1, k_3 }
\cU_{ \{j,k\}_{B'}}^{k_2,j_{23} } 
  \,, \label{eq:Amplitudes-tri,tc}
\ee
where the transition matrix for $Z(tc)$ is given by
\begin{align}
[ Z(tc)_{\ell} ]^{ J_1,J_{23}, K_1,K_3 }_{ \phantom{ J_{1}, J_{23}, K_1,K_3 } j_{1}, j_{23}, k_1, k_3 }
=&
(-1)^{j_{1}+j_{23}+K_1+k_1+K_3+k_3+2\ell}
\begin{Bmatrix}
   j_{1} & k_1 & k_3 \\
   \ell &  K_3 & K_1
  \end{Bmatrix}
\begin{Bmatrix}
   j_{23} & k_1 & k_3 \\
   \ell &  K_3 & K_1
  \end{Bmatrix}
\nn  \\
  & \times
  \sqrt{(2k_1+1)(2K_1+1)(2k_3+1)(2K_3+1)} \delta_{j_1 J_1} \delta_{j_{23} J_{23}}
  \,. \label{eq:Amplitudes-tcGraph}
\end{align}
Explicitly, the transformation is expressed in terms of $6j$-symbols:
\begin{align}
&
(-1)^{\sum_{i=1}^{3}( j_i+k_i+K_i + \ell)} \begin{Bmatrix}
   j_1 & k_1 & k_3 \\
   \ell &  K_3 & K_1
  \end{Bmatrix}\begin{Bmatrix}
   j_2 & k_2 & k_1 \\
   \ell &  K_1 & K_2
  \end{Bmatrix} \begin{Bmatrix}
   j_3 & k_3 & k_2 \\
   \ell & K_2 & K_3
  \end{Bmatrix}
\nn
\\
=&
\sum_{  j_{23} }
(-1)^{2\ell + j_{1}+j_{23} + 2K_1 + 2k_3 +2k_2-2K_2 } (2j_{23}+1)
\begin{Bmatrix}
j_2 & j_3 & j_{23} \\
K_3 & K_1 & K_2 
\end{Bmatrix}
\begin{Bmatrix}
   j_{1} & k_1 & k_3 \\
   \ell &  K_3 & K_1
  \end{Bmatrix}
  \nn
\\
&\phantom{ \sum_{  j_{23} } }
\times
\begin{Bmatrix}
   j_{23} & k_1 & k_3 \\
   \ell &  K_3 & K_1
  \end{Bmatrix}
\begin{Bmatrix}
j_2 & j_3 & j_{23} \\
k_3 & k_1 & k_2 
\end{Bmatrix}
\,. \label{eq:Amplitudes-tri,tc-6j}
\end{align}
As done in \cite{Chen:2022rty}, this identity is related to Biedenharn-Elliot identity \cite{Bonzom:2009zd}.

\smallskip

Let us compute the Schmidt eigenvalues from $| \psi_{ tc, \{ j_i,k_i\} } \ra$. Similarly, starting with initial coarse-grained spin network state $| \psi_{ tc, \{ j_i,k_i\} } \ra$ as eq.(\ref{eq:triGraph-canGraph}), we consider the evolution generated by the loop holonomy operator $\wh{\chi_{\ell} }$,
\beq
\wh{\chi_{\ell} } :
&&
\textrm{Inv}_{\SU(2)} \Big( \cV_{j_1} \otimes \cV_{k_{1} } \otimes \cV_{k_{3}} \Big)
\otimes
\textrm{Inv}_{\SU(2)} \Big( \cV_{j_2} \otimes \cV_{j_3} \otimes \cV_{k_{3} } \otimes \cV_{k_{1}} \Big)
\nn \\
&&\to
\bigoplus_{ K_1,K_3 }
\textrm{Inv}_{\SU(2)} \Big( \cV_{j_1} \otimes \cV_{K_{1} } \otimes \cV_{K_{3}} \Big)
\otimes
\textrm{Inv}_{\SU(2)} \Big( \cV_{j_2} \otimes \cV_{j_3} \otimes \cV_{K_{3} } \otimes \cV_{K_{1}} \Big)
\,.
\nn
\eeq
Following the same logic as with the triangle graph, we compute the evolution of the state, truncated to leading order and properly normalized,
\be
| \Psi_{tc,\{j_i,k_i\}} (\ell,t) \ra
=
\f{ | \Psi_{tc,\{j_i,k_i\}} \ra - \ri t
\sum_{ K_1,K_3 }\sum_{ j_{23} }
[ Z(tc)_{\ell} ]^{ j_1,j_{23}, K_1,K_3 }_{ \phantom{ j_{1}, J_{23}, K_1,K_3 } j_{1}, j_{23}, k_1, k_3 }
| \Psi_{tc,\{j_i,K_i\}} \ra }
{ \sqrt{ 1+t^2 \sum_{s=0}^{2\ell} \sum_{ j_{23} } [ Z(tc)_{s} ]^{ j_1,j_{23}, k_1,k_3 }_{ \phantom{ j_{1}, j_{23}, k_1,k_3 } j_{1}, j_{23}, k_1, k_3 } (\cU_{ \{j,k\}_{B'}}^{k_2,j_{23} } )^2 }   }
\,. \label{eq:TruncatedState-triGraph}
\ee
Readers should notice that notation $| \Psi_{tc,\{j_i,K_i\}} \ra \equiv | \Psi_{sc,\{j_i,K_1,k_2,K_3\}} \ra$, i.e. the spin $k_2$ is now fixed under the action of $\wh{\chi_{\ell} }$ on the candy graph.
The transition matrix is given by eq.(\ref{eq:Amplitudes-tcGraph}), and the normalization factor $N_{tc, \{j_{i},k_{i}\} }(\ell,t)=N_{tri, \{j_{i},k_{i}\} }(\ell,t)$ due to eq.(\ref{eq:Amplitudes-tri,tc-6j}).

\smallskip

Repeat the same procedure for entanglement $| \Psi_{tc,\{j_i,k_i\}} (\ell,t) \ra$. The density matrix for final state on the coarse-grained graph is $\rho_{tc}(t)=| \Psi_{tc,\{j_i,k_i\}} (\ell,t) \ra\la \Psi_{tc,\{j_i,k_i\}} (\ell,t) |$. The reduced density matrix $\rho_{tc_{A} }(t) \in \textrm{End} (\cH_{A})$ is obtained via partial tracing over $\cH_{B'}$, which can be done via choosing $\big\{ | \{ j_2, j_3, k_3, k_1 \}, j_{23} \ra_{B'} \big\}$ orthonormal basis. 
So the reduced density matrix $\rho_{tc_{A} }(t)$ reads:
\beq
\rho_{tc_{A} }(t)
&=&
\sum_{K_1,K_3,j_{23} } \la \{ j_2, j_3, K_3, K_1 \}, j_{23} | \rho_{tc}(t)  | \{ j_2, j_3, K_3, K_1 \}, j_{23} \ra_{B'}
\nn \\
&=&
\bigg(
t^2 \sum_{K_1, K_3,j_{23} }
\Big([ Z(tc)_{\ell} ]^{ j_1,j_{23}, K_1,K_3 }_{ \phantom{ j_{1}, J_{23}, K_1,K_3 } j_{1}, j_{23}, k_1, k_3 } \cU_{ \{j,k\}_{B'}}^{k_2,j_{23} }  \Big)^2
| j_1,K_1,K_3 \ra \la j_1,K_1,K_3 |_{A}
\nn
\\
&&\phantom{ \bigg( }
+| j_1,k_1,k_3 \ra \la j_1,k_1,k_3 |_{A}
\bigg) \times \f{1}{ N_{ tc, \{j_{i},k_{i}\} }(\ell,t) }
\label{eq:ReducedDM-tcGraph}
\eeq
where the $\Big( \cU_{ \{j,k\}_{B'}}^{k_2,j_{23} } \Big)^2$ is the probability distribution for $j_{23}$ seen from another channel to the intertwiner $| \{ j_2,j_3,k_3,k_1 \}, k_2 \ra_{B'}$,
\be
p(j_{23})=
\Big( \cU_{ \{j,k\}_{B'}}^{k_2,j_{23} } \Big)^2
=
(2k_2+1)(2j_{23}+1)
\begin{Bmatrix}
j_2 & j_3 & j_{23} \\
k_3 & k_1 & k_2 
\end{Bmatrix}^2
\,.
\ee
Again, reduced density matrix (\ref{eq:ReducedDM-tcGraph}) is a diagonal matrix with respect to $\{ | j_1,K_1,K_3 \ra\}$ basis for $\cH_{A}$, thus the eigenvalues of $\rho_{tc_{A} }(t)$ are read
\beq
\lambda_{ \rho_{tc_A } }[K_1,K_3]
&=&
\bigg(
t^2 \sum_{ j_{23} }
  \begin{Bmatrix}
   j_{23} & j_2 & j_3 \\
   k_2 &  k_3 & k_1
  \end{Bmatrix}^2
\begin{Bmatrix}
   j_1 & k_1 & k_3 \\
   \ell &  K_3 & K_1
  \end{Bmatrix}^2
  \begin{Bmatrix}
   j_{23} & k_1 & k_3 \\
   \ell &  K_3 & K_1
  \end{Bmatrix}^2 \prod_{i=1}^{3}(2k_i+1)
  \nn \\
  &&
  \phantom{\bigg(}\times
   (2K_1+1)(2K_3+1)(2j_{23}+1)
   +\delta^{K_1}_{k_1} \, \delta^{K_3}_{k_3}
   \bigg) \times \f{ 1 }{ N_{tc, \{j_{i}\}, \{k_{i}\} }(\ell,t) }
\,. \nn
\\
\label{eq:eigenvalues-Truncated-tcGraph}
\eeq
One can show $\lambda_{ \rho_{tri_A } }[K_1,K_3]=\lambda_{ \rho_{tc_A } }[K_1,K_3]$ with eq.(\ref{eq:Amplitudes-tri,tc-6j}). Hence the entanglement excitation is preserved under the coarse-graining.


\subsection{Square graph}
We consider the holonomy operator acting on the loop of square graph Fig.\ref{fig:squGraph-canGraph}. We compute explicitly the bipartite entanglement between $\Gamma_{1}$ and $\Gamma_{2}$ where $\Gamma_{1}$ is made of vertices $A,D$ and their adjacent edges, $\Gamma_{2}$ is made of vertices $B,C$ and their adjacent edges. Then we show that the bipartite entanglement can be studied by its coarse-grained graph as Fig.\ref{fig:squGraph-canGraph}.
\begin{figure}[htb]
\centering
\begin{tikzpicture}[scale=0.7]

\coordinate (A) at (-0.75,0.75);
\coordinate (B) at (0.75,0.75);
\coordinate (C) at (0.75,-0.75);
\coordinate (D) at (-0.75,-0.75);

\draw[thick] (A) --++(135:0.7) ++ (135:0.35) node {$j_1$};
\draw[thick] (B) --++(45:0.7) ++ (45:0.35) node {$j_2$};
\draw[thick] (C) --++(-45:0.7) ++ (-45:0.35) node {$j_3$};
\draw[thick] (D) --++(-135:0.7) ++ (-135:0.35) node {$j_4$};

\draw[blue,thick] (A) -- node[above=1,midway] {$k_1$} (B) node[scale=0.7,blue] {$\bullet$} -- node[right=1,midway] {$k_2$} (C) node[scale=0.7,blue] {$\bullet$} --node[below=1,midway] {$k_3$} (D) node[scale=0.7,blue] {$\bullet$} -- node[left=1,midway] {$k_4$} (A) node[scale=0.7,blue] {$\bullet$};

\draw[->,>=stealth,very thick] (2,0) -- (3.5,0);

\draw (4.75,0) node {$\sum_{j_{14},j_{23}}$};

\coordinate (O1) at (7.5,0);
\coordinate (O2) at (9.5,0);

\draw[thick] (O1) -- ++ (-1,0) node[above,midway] {$j_{14}$} -- ++ (135:1) node[above]{$j_1$};
\draw[thick] (O1) ++(-1,0) node[scale=0.7] {$\bullet$} -- ++ (-135:1) node[below] {$j_4$};

\draw[thick] (O2) -- ++ (1,0) node[above,midway] {$j_{23}$} -- ++ (45:1) node[above]{$j_2$};
\draw[thick] (O2) ++(1,0) node[scale=0.7] {$\bullet$} -- ++ (-45:1) node[below] {$j_3$};

\draw[blue,thick,in=115,out=65,rotate=0] (O1) to node[above,midway] {$k_1$} (O2) node[scale=0.7] {$\bullet$} to [out=245,in=-65] node[below] {$k_3$} (O1) node[scale=0.7] {$\bullet$};

\draw[->,>=stealth,very thick] (11.5,0) -- (13,0);

\coordinate (A1) at (14,0);
\coordinate (A2) at (16,0);

\draw[thick] (A1) -- ++ (135:1) node[above]{$j_1$};
\draw[thick] (A1) -- ++ (-135:1) node[below] {$j_4$};

\draw[thick] (A2) -- ++ (45:1) node[above]{$j_2$};
\draw[thick] (A2) -- ++ (-45:1) node[below] {$j_3$};

\draw[blue,thick,in=115,out=65,rotate=0] (A1) to node[midway,above] {$k_1$} (A2) node[scale=0.7] {$\bullet$} to[out=245,in=-65] node[midway,below] {$k_3$} (A1) node[scale=0.7] {$\bullet$};

\draw (A1) ++ (-90:0.5) node[scale=0.8] {$A'$};
\draw (A2) ++ (-90:0.5) node[scale=0.8] {$B'$};

\end{tikzpicture}
\caption{Coarse-graining square graph to candy graph.}
\label{fig:squGraph-canGraph}
\end{figure}
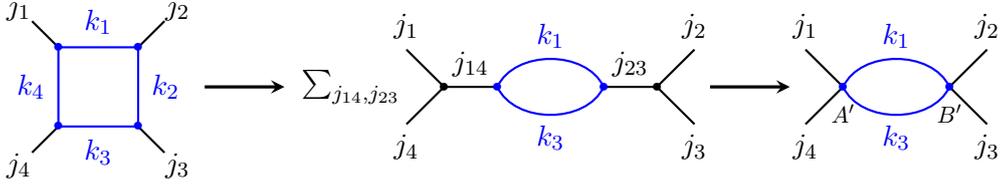

Starting with an initial spin network basis state $| \Psi_{squ, \{j_i,k_i\} } \ra=| j_i,k_4,k_1 \ra_{A} \otimes | j_2,k_1,k_2 \ra_{B} \otimes | j_3, k_2,k_3\ra_{C} \otimes | j_4,k_3,k_4\ra_{D}$, we consider the evolution generated by the loop holonomy operator $\wh{\chi_{\ell} }$,
\be
\wh{\chi_{\ell} }
:\bigotimes_{i=1}^{4}
\textrm{Inv}_{\SU(2)} \Big( \cV_{j_i} \otimes \cV_{k_{i} } \otimes \cV_{k_{i-1} } \Big)
\to
\bigoplus_{ \{K_i\} }
\bigotimes_{i=1}^{4}
\textrm{Inv}_{\SU(2)} \Big( \cV_{j_i} \otimes \cV_{K_{i}} \otimes \cV_{K_{i-1} } \Big)
\,.
\nn
\ee
For  infinitesimal time $t \to 0$, the unitarity evolution operator is $e^{ - \ri t \, \wh{\chi_{\ell} } } = \id - \ri  t \, \wh{\chi_{\ell} } + \cO(t^2)$, which acts as
 \beq
&&
e^{ - \ri t \, \widehat{\chi_{\ell} } } | \Psi_{squ,\{j_i,k_i\}} \ra
=
e^{ - \ri t \, \widehat{\chi_{\ell} } } | j_1,k_1,k_4 \ra_{A} \otimes | j_2,k_2,k_1 \ra_{B} \otimes | j_3,k_3,k_2 \ra_{C} \otimes | j_4,k_4,k_3 \ra_{D}
\nn \\
&=&
 \sum_{K_i=| k_i-{\ell} |}^{k_i+{\ell} }
 \left( \delta^{K_1}_{k_1}\delta^{K_2}_{k_2}\delta^{K_3}_{k_3}\delta^{K_4}_{k_4}  - \ri t {[} Z(squ)_{\ell} {]}^{ \{j_{i}, K_{i} \} }_{ \phantom{ \{j_{i}, K_{i} \} } \{j_{i}, k_{i} \} } \right) | j_1,K_1,K_4 \ra_{A} \otimes | j_2,K_2,K_1 \ra_{B}
\nn \\
&&
\phantom{  \sum_{K_i=| k_i-{\ell} |}^{k_i+{\ell} } }
\otimes | j_3,K_3,K_2 \ra_{C} \otimes | j_4,K_4,K_3 \ra_{D}
+ \cO(t^2)
\,, \label{eq:ShortTime-BasisState-squGraph}
\eeq
where $| j_i,K_1,K_4 \ra_{A} \in \cH_{A}$, $| j_2,K_2,K_1 \ra_{B} \in \cH_{B}$, $| j_3,K_3,K_2 \ra_{C} \in \cH_{C}$, $| j_4,K_4,K_3 \ra_{D} \in \cH_{D}$ denote the intertwiners living at respect trivalent vertex.
The transition matrix $[Z(squ)_{\ell}]$ is expressed in terms of $6j$-symbols according to the general formula (\ref{eq:Amplitudes-6jsymbols}) for the action of loop holonomy operator $\wh{ \chi_{\ell} }$
\begin{align}
{[} Z(squ)_{\ell} {]}^{ \{j_{i}, K_{i} \} }_{ \phantom{ \{j_{i}, K_{i} \} } \{j_{i}, k_{i} \} }
=&
(-1)^{\sum_{i=1}^{4}( j_i+k_i+K_i + \ell)} \begin{Bmatrix}
   j_1 & k_1 & k_4 \\
   \ell &  K_4 & K_1
  \end{Bmatrix}\begin{Bmatrix}
   j_2 & k_2 & k_1 \\
   \ell &  K_1 & K_2
  \end{Bmatrix}
  \nn
  \\
  &\times
  \begin{Bmatrix}
   j_3 & k_3 & k_2 \\
   \ell & K_2 & K_3
  \end{Bmatrix}
  \begin{Bmatrix}
   j_4 & k_4 & k_3 \\
   \ell & K_3 & K_4
  \end{Bmatrix}
  \prod_{i=1}^{4} \sqrt{ (2k_i+1)(2K_i+1) }
  \,.
\end{align}
Again, the matrix elements are all real numbers. Following the same logic as with the triangle graph, we compute the evolution of the state, truncated to leading order and properly normalized,
\be
| \Psi_{squ,\{j_i,k_i\}} (\ell,t) \ra
=
\f{ | \Psi_{squ,\{j_i,k_i\}} \ra - \ri t
\sum_{ \{K_i\} }
{[} Z(squ)_{\ell} {]}^{ \{j_{i}, K_{i} \} }_{ \phantom{ \{j_{i}, K_{i} \} } \{j_{i}, k_{i} \} }
| \Psi_{squ,\{j_i,K_i\}} \ra }
{ \sqrt{ 1+t^2 \sum_{s=0}^{2\ell} {[} Z(squ)_{s} {]}^{ \{j_{i}, k_{i} \} }_{ \phantom{ \{j_{i}, k_{i} \} } \{j_{i}, k_{i} \} } }   }
\,, \label{eq:NormalizedShortTimeState-candygraph}
\ee
and the normalization factor can then be evaluated to
\begin{align}
N_{ squ, \{j_{i}\}, \{k_{i}\} }(\ell,t)
=&
1+t^2 \sum_{s=0}^{2\ell} (-1)^{\sum_{i=1}^{4}( j_i+2k_i + s)} \begin{Bmatrix}
   j_1 & k_4 & k_1 \\
   s &  k_1 & k_4
  \end{Bmatrix}\begin{Bmatrix}
   j_2 & k_1 & k_2 \\
   s &  k_2 & k_1
  \end{Bmatrix} \begin{Bmatrix}
   j_3 & k_2 & k_3 \\
   s & k_3 & k_2
  \end{Bmatrix}
  \nn
  \\
  &\phantom{1+}\times
  \begin{Bmatrix}
   j_4 & k_3 & k_4 \\
   s & k_4 & k_3
  \end{Bmatrix}
  \prod_{i=1}^{4} (2k_i+1)
  \,.
\end{align}

\smallskip

We now look at the reduced state from the pure state (\ref{eq:NormalizedShortTimeState-candygraph}). The initial spin network state is a basis state, thus is fully separable with vanishing entanglement. The final state is given by density matrix $\rho_{squ}(t)=| \Psi_{squ,\{j_i,k_i\}} (\ell,t) \ra\la \Psi_{squ,\{j_i,k_i\}} (\ell,t) |$. The reduced density matrix $\rho_{squ_{AD} } \in \textrm{End} (\cH_{A} \otimes \cH_{D} )$ is obtained via partial tracing over $\cH_{B} \otimes \cH_{C}$, which is done via choosing orthonormal basis $| \{j_2,j_3,K_1,K_2,K_3\} \ra_{BC} \equiv | j_2,K_1,K_2 \ra_{B} \otimes | j_3,K_2,K_3 \ra_{C}$ to implement $\sum_{K_1,K_2,K_3} \la \{j_2,j_3,K_1,K_2,K_3\} | \rho_{squ}(t) | \{j_2,j_3,K_1,K_2,K_3\} \ra_{BC}$, so the reduced density matrix $\rho_{squ_{AD}  }(t)$ reads:
\begin{align}
\rho_{squ_{AD}  }(t)
=&
\f{1}{ N_{squ, \{j_{i}\}, \{k_{i}\} }(\ell,t) }
\bigg(
| j_1,k_1,k_4 \ra \la j_1,k_1,k_4 |_{A} \otimes | j_4,k_4,k_3 \ra \la j_4,k_4,k_3 |_{D}
\nn
\\
&-\ri t \sum_{K_4}
{[} Z(squ)_{\ell} {]}^{ \{j_{i}, k_{r}, K_4 \} }_{ \phantom{ \{j_{i}, k_{r}, K_4 \} } \{j_{i}, k_{r}, k_4 \} }
| j_1,k_1,K_4 \ra \la j_1,k_1,k_4 |_{A} \otimes | j_4,K_4,k_3 \ra \la j_4,k_4,k_3 |_{D}
\nn
\\
&+\ri t \sum_{K_4}
{[} Z(squ)_{\ell} {]}^{ \{j_{i}, k_{r}, K_4 \} }_{ \phantom{ \{j_{i}, k_{r}, K_4 \} } \{j_{i}, k_{r}, k_4 \} }
| j_1,k_1,k_4 \ra \la j_1,k_1,K_4 |_{A} \otimes | j_4,k_4,k_3 \ra \la j_4,K_4,k_3 |_{D}
\nn
\\
&+t^2 \sum_{K_4, K'_4} \sum_{ K_1,K_2,K_3 }
{[} Z(squ)_{\ell} {]}^{ \{j_{i}, K_{r}, K_4 \} }_{ \phantom{ \{j_{i}, K_{r}, K_4 \} } \{j_{i}, k_{r}, k_4 \} }
| j_1,K_1,K_4 \ra \la j_1,K_1,K'_4 |_{A} 
\nn \\
&
\phantom{ +t^2 \sum_{K_4, K'_4} \sum_{ K_1,K_2,K_3 } }
\otimes | j_4,K_4,K_3 \ra \la j_4,K'_4,K_3 |_{D}
{[} Z(squ)_{\ell} {]}^{ \{j_{i}, K_{r}, K'_4 \} }_{ \phantom{ \{j_{i}, K_{r}, K'_4 \} } \{j_{i}, k_{r}, k_4 \} }
\bigg)
\,,
\end{align}
where $r=1,2,3$.

\smallskip

Now let us look at the reduced state from the point of view of coarse-graining. Considering bipartition $\cH_A \otimes \cH_D$ and $\cH_B \otimes \cH_C$, we gauge fix $g_2 \to \id$ and $g_4 \to \id$, then contract vertices $A,D$ along $e_{4}$ and vertices $B,C$ along $e_{2}$, to acquire a two 4-valent vertices with respect spins $j_1,j_4,k_1,k_3$ and $j_2,j_3,k_1,k_3$ as Fig.\ref{fig:squGraph-canGraph}. The spin network coarse-grained state from the square graph to the candy graph is given by
\be
| \psi_{ sc, \{ j_i,k_i\} } \ra
=
\sum_{j_{14} } \cU_{ \{j,k\}_{A'}}^{k_4,j_{14} }  | \{j_1, j_4,k_1, k_3 \}, j_{14} \ra_{A'}
\otimes
\sum_{j_{23} }
\cU_{ \{j,k\}_{B'}}^{k_2,j_{23} }  | \{ j_2, j_3, k_2, k_3 \}, j_{23} \ra_{B'}
\,.
\label{eq:squGraph-canGraph}
\ee
It has intertwiner superposition for $\cH_{A'}$ and $\cH_{B'}$ with respect to internal spins $j_{14}$ and $j_{23}$.
The $| \{j_1, j_4,k_1, k_3 \}, j_{14} \ra_{A'}$ and $| \{ j_2, j_3, k_2, k_3 \}, j_{23} \ra_{B'}$ are intertwiners for the two respect 4-valent vertices $A'$ and $B'$, and $\cU_{ \{j,k\}_{A'} }^{k_4,j_{14} }$ and $\cU_{ \{j,k\}_{B'}}^{k_2,j_{23} }$ are unitaries for respect $\cH_{A'}$ and $\cH_{B'}$,
\be
\begin{aligned}
\cU_{ \{j,k\}_{A'} }^{k_4,j_{14} } 
=&
(-1)^{2k_4+k_1-k_3+j_1+j_4} \sqrt{ (2k_4+1)(2j_{14}+1) }
\begin{Bmatrix}
j_1 & j_4 & j_{14} \\
k_3 & k_1 & k_4
\end{Bmatrix}
\,,
\\
\cU_{ \{j,k\}_{B'}}^{k_2,j_{23} } 
=&
(-1)^{2k_2+k_3-k_1+j_2+j_3} \sqrt{ (2k_2+1)(2j_{23}+1) }
\begin{Bmatrix}
j_2 & j_3 & j_{23} \\
k_3 & k_1 & k_2 
\end{Bmatrix}
\,.
\end{aligned}
\ee
They are real numbers. The unitarity of $\cU_{ \{j,k\}_{A'} }^{k_4,j_{14} } $ and $\cU_{ \{j,k\}_{B'}}^{k_2,j_{23} } $ are verified by the orthogonality of $6j$-symbols,
\be
\begin{aligned}
\sum_{j_{14} } \cU_{ \{j,k\}_{A'} }^{k_4,j_{14} }  \cU_{ \{j,k\}_{A'} }^{\tk_4,j_{14} }
=&
\sum_{j_{14} } \sqrt{ (2k_4+1) (2\tk_4+1) } (2j_{14}+1)
\begin{Bmatrix}
j_1 & j_4 & j_{14} \\
k_3 & k_1 & k_4
\end{Bmatrix}
\begin{Bmatrix}
j_1 & j_4 & j_{14} \\
k_3 & k_1 & \tk_4
\end{Bmatrix}
=\delta_{k_4 \tk_4}
\,,
\\
\sum_{j_{23} } \cU_{ \{j,k\}_{B'}}^{k_2,j_{23} }  \cU_{ \{j,k\}_{B'}}^{\tk_2,j_{23} }
=&
\sum_{j_{23} } \sqrt{ (2k_2+1) (2\tk_2+1) }(2j_{23}+1)
\begin{Bmatrix}
j_2 & j_3 & j_{23} \\
k_3 & k_1 & k_2 
\end{Bmatrix}
\begin{Bmatrix}
j_2 & j_3 & j_{23} \\
k_3 & k_1 & \tk_2 
\end{Bmatrix}
=\delta_{k_2 \tk_2}
\,.
\end{aligned}
\ee
As eq.(\ref{eq:unitary-tc}), the unitarity requires the two $\cU$ having same $\{ j , k \}$ and for the internal labels,
\be
\sum_{k_4 } \cU_{ \{j,k\}_{A'}}^{k_4,j_{14} }  \cU_{ \{j,k\}_{A'}}^{k_4,\tl{j}_{14} } 
=\delta_{j_{14} \tl{j}_{14} }
\,, \qquad
\sum_{k_2 } \cU_{ \{j,k\}_{B'}}^{k_2,j_{23} }  \cU_{ \{j,k\}_{B'}}^{k_2,\tl{j}_{23} } 
=\delta_{j_{23} \tl{j}_{23} }
\,.
\ee
The unitaries $\cU_{A'}$ and $\cU_{B'}$ link the transition matrices $Z(squ)$ and $Z(sc)$,
\be
{[} Z(squ)_{\ell} {]}^{ \{j_{i}, K_{i} \} }_{ \phantom{ \{j_{i}, K_{i}\} } \{j_{i}, k_{i} \} }
=
\sum_{ j_{14} } \sum_{  j_{23} }
\cU_{ \{j,K\}_{A'} }^{K_4,j_{14} }
\cU_{ \{j,K\}_{B'}}^{K_2,j_{23} }
[ Z(sc)_{\ell} ]^{ j_{14},j_{23}, K_1,K_3 }_{ \phantom{ j_{14}, j_{23}, K_1,K_3 } j_{14}, j_{23}, k_1, k_3 }
\cU_{ \{j,k\}_{A'} }^{k_4,j_{14} } 
\cU_{ \{j,k\}_{B'}}^{k_2,j_{23} } 
  \,, \label{eq:Amplitudes-SquCan}
\ee
where the transition matrix for $Z(sc)$ is given by
\begin{align}
[ Z(sc)_{\ell} ]^{ J_{14},J_{23}, K_1,K_3 }_{ \phantom{ J_{14}, J_{23}, K_1,K_3 } j_{14}, j_{23}, k_1, k_3 }
=&
(-1)^{j_{14}+j_{23}+K_1+k_1+K_3+k_3+2\ell}
\begin{Bmatrix}
   j_{14} & k_1 & k_3 \\
   \ell &  K_3 & K_1
  \end{Bmatrix}
\begin{Bmatrix}
   j_{23} & k_1 & k_3 \\
   \ell &  K_3 & K_1
  \end{Bmatrix}
\nn  \\
  & \times
  \sqrt{(2k_1+1)(2K_1+1)(2k_3+1)(2K_3+1)}
  \delta_{J_{14}j_{14}}\delta_{J_{23}j_{23}}
  \,, \label{eq:Amplitudes-scGraph}
\end{align}
or explicitly, the transformation is expressed as below identity in terms of $6j$-symbols:
\begin{align}
&
(-1)^{\sum_{i=1}^{4}( j_i+k_i+K_i + \ell)} \begin{Bmatrix}
   j_1 & k_1 & k_4 \\
   \ell &  K_4 & K_1
  \end{Bmatrix}\begin{Bmatrix}
   j_2 & k_2 & k_1 \\
   \ell &  K_1 & K_2
  \end{Bmatrix} \begin{Bmatrix}
   j_3 & k_3 & k_2 \\
   \ell & K_2 & K_3
  \end{Bmatrix} \begin{Bmatrix}
   j_4 & k_4 & k_3 \\
   \ell & K_3 & K_4
  \end{Bmatrix}
\nn
\\
=&
\sum_{ j_{14} } \sum_{  j_{23} }
(-1)^{j_{14}+j_{23} + k_1 + K_1 + k_3 + K_3 +2\ell +2k_2+2k_4-2K_2-2K_4 } (2j_{14}+1)(2j_{23}+1)
\begin{Bmatrix}
j_1 & j_4 & j_{14} \\
K_3 & K_1 & K_4
\end{Bmatrix}
\nn
\\
&\times
\begin{Bmatrix}
j_2 & j_3 & j_{23} \\
K_3 & K_1 & K_2 
\end{Bmatrix}
\begin{Bmatrix}
   j_{14} & k_1 & k_3 \\
   \ell &  K_3 & K_1
  \end{Bmatrix}
\begin{Bmatrix}
   j_{23} & k_1 & k_3 \\
   \ell &  K_3 & K_1
  \end{Bmatrix}
  \begin{Bmatrix}
j_1 & j_4 & j_{14} \\
k_3 & k_1 & k_4
\end{Bmatrix}
\begin{Bmatrix}
j_2 & j_3 & j_{23} \\
k_3 & k_1 & k_2 
\end{Bmatrix}
\,.
\end{align}

\smallskip

Let us look at the reduced state from the point of view of coarse-graining. For the coarse-grained state, the evolution is generated by the loop holonomy operator $\wh{\chi_{\ell} }$ along path $[A' \overset{ e_{3} }{\to} B' \overset{ e_{1} }{\to} A']$,
\beq
\wh{\chi_{\ell} } :
&&
\textrm{Inv}_{\SU(2)} \Big( \cV_{j_1} \otimes \cV_{j_4} \otimes \cV_{k_{3} } \otimes \cV_{k_{1}} \Big)
\otimes
\textrm{Inv}_{\SU(2)} \Big( \cV_{j_2} \otimes \cV_{j_3} \otimes \cV_{k_{3} } \otimes \cV_{k_{1}} \Big)
\nn \\
&&\to
\bigoplus_{ K_1,K_3 }
\textrm{Inv}_{\SU(2)} \Big( \cV_{j_1} \otimes \cV_{j_4} \otimes \cV_{K_{3} } \otimes \cV_{K_{1}} \Big)
\otimes
\textrm{Inv}_{\SU(2)} \Big( \cV_{j_2} \otimes \cV_{j_3} \otimes \cV_{K_{3} } \otimes \cV_{K_{1}} \Big)
\,.
\nn
\eeq
Following the same logic, the evolution of the state is truncated to leading order and properly normalized,
\be
| \Psi_{sc,\{j_i,k_i\}} (\ell,t) \ra
=
\f{ | \Psi_{sc,\{j_i,k_i\}} \ra - \ri t
\sum_{ K_1,K_3 }\sum_{ j_{14}, j_{23} }
[ Z(sc)_{\ell} ]^{ j_{14},j_{23}, K_1,K_3 }_{ \phantom{ j_{14}, j_{23}, K_1,K_3 } j_{14}, j_{23}, k_1, k_3 }
| \Psi_{sc,\{j_i,K_i\}} \ra }
{ \sqrt{ N_{sc, \{j_{i},k_{i}\} }(\ell,t) }   }
\,. \label{eq:TruncatedState-triGraph}
\ee
Note that here notation $| \Psi_{sc,\{j_i,K_i\}} \ra \equiv | \Psi_{sc,\{j_i,K_1,k_2,K_3,k_4\}} \ra$, because the spins $k_2$ and $k_4$ are fixed under the action of $\wh{\chi_{\ell} }$ on the candy graph.
The transition matrix is given by eq.(\ref{eq:Amplitudes-scGraph}). The normalization factor $N_{sc, \{j_{i},k_{i}\} }(\ell,t)=N_{squ, \{j_{i},k_{i}\} }(\ell,t)$ is again due to eq.(\ref{eq:Amplitudes-scGraph}).

Repeat the same procedure. The density matrix for final state on the coarse-grained graph is $\rho_{sc}(t)=| \Psi_{sc,\{j_i,k_i\}} (\ell,t) \ra\la \Psi_{sc,\{j_i,k_i\}} (\ell,t) |$. The reduced density matrix $\rho_{sc_{A'} }(t) \in \textrm{End} (\cH_{A'})$ is obtained via partial tracing over $\cH_{B'}$, which is done via choosing orthonormal basis $\big\{ | \{ j_2, j_3, k_3, k_1 \}, j_{23} \ra_{B'} \big\}$, so the reduced density matrix $\rho_{sc_{A'} }(t)$ reads:
\beq
\rho_{sc_{A'} }(t)
&=&
\sum_{K_1,K_3,j_{23} } \la \{ j_2, j_3, K_3, K_1 \}, j_{23} | \rho_{sc}(t)  | \{ j_2, j_3, K_3, K_1 \}, j_{23} \ra_{B'}
\nn \\
&=&
\f{1}{ N_{ sc, \{j_{i},k_{i}\} }(\ell,t) }
\bigg(
\sum_{j_{14}, \tl{j}_{14} } \cU_{ \{j,k\}_{A'}}^{k_4,j_{14} }
| \{ j_1,j_4,k_3,k_1\}, j_{14} \ra \la \{ j_1,j_4,k_3,k_1\}, \tl{j}_{14} |_{A'} \cU_{ \{j,k\}_{A'}}^{k_4,\tl{j}_{14} }
\nn
\\
&&-\ri t \sum_{j_{14}, \tl{j}_{14} } \sum_{j_{23} }
\Big( \cU_{ \{j,k\}_{B'}}^{k_2,j_{23} } \Big)^2
\cU_{ \{j,k\}_{A'}}^{k_4,j_{14} }
[ Z(sc)_{\ell} ]^{ j_{14},j_{23}, k_1,k_3 }_{ \phantom{ j_{14}, j_{23}, k_1,k_3 } j_{14}, j_{23}, k_1, k_3 }
\cU_{ \{j,k\}_{A'}}^{k_4,\tl{j}_{14} }
\nn
\\
&&\phantom{ -\ri t \sum_{j_{14}, \tl{j}_{14} } \sum_{j_{23} } }
| \{ j_1,j_4,k_3,k_1\}, j_{14} \ra \la \{ j_1,j_4,k_3,k_1\}, \tl{j}_{14} |_{A'}
\nn
\\
&&+\ri t \sum_{j_{14}, \tl{j}_{14} } \sum_{j_{23} }
\Big( \cU_{ \{j,k\}_{B'}}^{k_2,j_{23} } \Big)^2
\cU_{ \{j,k\}_{A'}}^{k_4,j_{14} }
[ Z(sc)_{\ell} ]^{ \tl{j}_{14},j_{23}, k_1,k_3 }_{ \phantom{ \tl{j}_{14}, j_{23}, K_1,K_3 } \tl{j}_{14}, j_{23}, k_1, k_3 }
\cU_{ \{j,k\}_{A'}}^{k_4,\tl{j}_{14} }
\nn
\\
&&\phantom{ -\ri t \sum_{j_{14}, \tl{j}_{14} } \sum_{j_{23} } }
| \{ j_1,j_4,k_3,k_1\}, j_{14} \ra \la \{ j_1,j_4,k_3,k_1\}, \tl{j}_{14} |_{A'}
\nn
\\
&&+ t^2 \sum_{ j_{14},\tl{j}_{14} }\sum_{K_1, K_3,j_{23} }
\Big( \cU_{ \{j,k\}_{B'}}^{k_2,j_{23} } \Big)^2
[ Z(sc)_{\ell} ]^{ j_{14},j_{23}, K_1,K_3 }_{ \phantom{ j_{14}, j_{23}, K_1,K_3 } j_{14}, j_{23}, k_1, k_3 }
[ Z(sc)_{\ell} ]^{ \tl{j}_{14},j_{23}, K_1,K_3 }_{ \phantom{ \tl{j}_{14}, j_{23}, K_1,K_3 } \tl{j}_{14}, j_{23}, k_1, k_3 }
\nn \\
&&
\phantom{ + t^2 \sum_{ j_{14},\tl{j}_{14} }\sum_{K_1, K_3,j_{23} } }
\cU_{ \{j,k\}_{A'}}^{k_4,j_{14} }
| \{ j_1,j_4,K_3,K_1\}, j_{14} \ra \la \{ j_1,j_4,K_3,K_1\}, \tl{j}_{14} |_{A'}
\cU_{ \{j,k\}_{A'}}^{k_4,\tl{j}_{14} }
\bigg) \,. 
\nn 
\eeq
Now we show the eigenvalues of $\rho_{squ_{AD} }(t)$ and $\rho_{sc_{A'} }(t)$ are identical.
With below alteration eq.(\ref{eq:Amplitudes-SquCan})
\be
[ Z(sc)_{\ell} ]^{ j_{14},j_{23}, K_1,K_3 }_{ \phantom{ j_{14}, j_{23}, K_1,K_3 } j_{14}, j_{23}, k_1, k_3 }
\cU_{ \{j,k\}_{A'} }^{k_4,j_{14} } 
\cU_{ \{j,k\}_{B'}}^{k_2,j_{23} }
=
\sum_{ K_2,K_4 }
\cU_{ \{j,K\}_{A'} }^{K_4,j_{14} }
\cU_{ \{j,K\}_{B'}}^{K_2,j_{23} }
{[} Z(squ)_{\ell} {]}^{ \{j_{i}, K_{i} \} }_{ \phantom{ \{j_{i}, K_{i}\} } \{j_{i}, k_{i} \} }
\,,
\ee
one can relate $\rho_{squ_{AD} }(t)$ to $\rho_{sc_{A'} }(t)$ by
\be
\rho_{squ_{AD} }(t)
=
M \rho_{sc_{A'} }(t) M^{\dagger}
\ee
where $M$ is a unitary map defined by
\beq
M&:&
\textrm{Inv}_{\SU(2)} \Big( \cV_{j_1} \otimes \cV_{j_{4}} \otimes \cV_{k_{3} } \otimes \cV_{k_{1} } \Big)
\nn \\
&&\to
\bigoplus_{K_4} \textrm{Inv}_{\SU(2)} \Big( \cV_{j_1} \otimes \cV_{K_1} \otimes \cV_{K_4} \Big) \otimes
\textrm{Inv}_{\SU(2)} \Big( \cV_{j_4} \otimes \cV_{K_{4} } \otimes \cV_{k_{3} } \Big)
\,, \nn \\
M
&=&
\sum_{ K_4} \sum_{ j_{14} }
\cU_{ \{j,k\}_{A'}}^{K_4,j_{14} }
\Big( | j_1,k_1,K_4 \ra_{A} \otimes | j_4,K_4,k_3 \ra_{D} \Big)
\la \{ j_1,j_4,k_3,k_1\}, j_{14} |_{A'}
\,.
\eeq
%
%
Hence, the entanglement excitation between $A,D$ and $B,C$ is exactly reflected in the entanglement excitation between $A'$ and $B'$.


\subsection{Path-dependency on simplest two-loop graph}
We consider the holonomy operator acting on the graph Fig.\ref{fig:PathChoices} either along path $j_1 \to k_a \to j_2$ or path $j_1 \to k_b \to  j_2$. We look at the bipartite entanglement between $\cH_{A}$ and $\cH_{B}\otimes\cH_{C}$. We show the entanglement excitation's dependency on the choices of path along which the holonomy operator acts. We also study the path-dependency from coarse-grained graph, converting the path-dependency to the spin-dependency of self-loop.

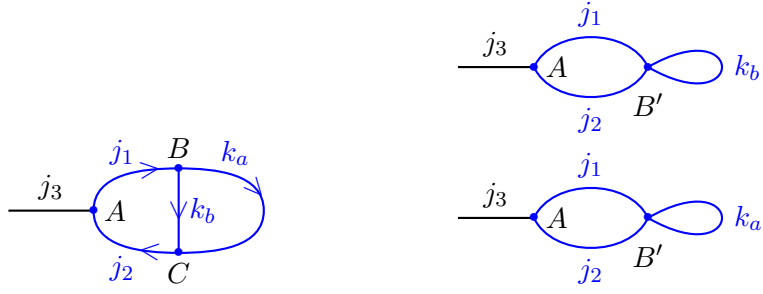
\begin{figure}
\centering
\begin{subfigure}[t]{0.4\linewidth}
\centering
	\begin{tikzpicture} [scale=0.7]
\coordinate (O) at (0,0);

\draw[blue,thick] (O) ++ (0,-1.6) coordinate (O1);
\draw[blue,thick] (O) ++ (-1.6,-0.8) coordinate (P);
\draw[blue,thick] (O) ++ (1.6,-0.8) coordinate (Q);
\draw [blue,thick] (O) to[out=180,in=90] node[near start,sloped] {$>$} node[midway,above] {$j_1$} (P) to[out=-90,in=180] node[near end,sloped] {$<$} node[midway,below] {$j_2$} (O1);
\draw[blue,thick] (O) -- node[midway,sloped] {$>$} node[midway,right] {$k_b$} (O1);

\draw [blue,thick] (O) to[out=0,in=90] node[near end,sloped] {$>$} node[midway,above] {$k_a$} (Q) to[out=-90,in=0] (O1);

\draw[thick] (P) -- node[midway,above] {$j_3$} ++(-1.6,0);

\draw[thick] (O) node[scale=0.7,blue] {$\bullet$} node[above] {$B$};
\draw[thick] (O1) node[scale=0.7,blue] {$\bullet$} node[below] {$C$};
\draw[thick] (P) node[scale=0.7,blue] {$\bullet$} node[right] {$A$};

\end{tikzpicture}
\caption{Holonomy operator acts along path $j_1 \to k_a \to j_2$ or path $j_1 \to k_b \to j_2$.
}
\label{fig:PathChoices}
\end{subfigure}
\begin{subfigure}[t]{0.4\linewidth}
\centering
\begin{tikzpicture} [scale=1]
\coordinate (A1) at (0,0);
\coordinate (A2) at (1.5,0);

\draw[thick] (A1) -- node[midway,above] {$j_3$} ++(-1,0);

\draw[blue,thick,in=115,out=65,rotate=0] (A1) to node[above,midway] {$j_1$} (A2) node[scale=0.7] {$\bullet$} to[out=245,in=-65] node[below,midway] {$j_2$} (A1) node[scale=0.7] {$\bullet$};

\draw (A1) ++ (0:0.3) node[scale=1] {$A$};
\draw (A2) ++ (-90:0.5) node[scale=1] {$B'$};

\draw[thick,in=+30,out=-30,scale=3,rotate=0,blue] (A2)  to [loop] node[midway,right=0.5,rotate=0] {$k_b$} (A2);

\coordinate (B1) at (0,-2);
\coordinate (B2) at (1.5,-2);

\draw[thick] (B1) -- node[midway,above] {$j_3$} ++(-1,0);

\draw[blue,thick,in=115,out=65,rotate=0] (B1) to node[above,midway] {$j_1$} (B2) node[scale=0.7] {$\bullet$} to[out=245,in=-65] node[below,midway] {$j_2$} (B1) node[scale=0.7] {$\bullet$};

\draw (B1) ++ (0:0.3) node[scale=1] {$A$};
\draw (B2) ++ (-90:0.5) node[scale=1] {$B'$};

\draw[thick,in=+30,out=-30,scale=3,rotate=0,blue] (B2)  to [loop] node[midway,right=0.5,rotate=0] {$k_a$} (B2);

\end{tikzpicture}
\caption{The coarse-grained graphs for path choices. Now the path dependency is reflected in the spin of self-loop.}
\label{fig:PathDependency-Loopy}
\end{subfigure}
\caption{The illustration for holonomy operator's path-dependency. The path-dependency can be converted to spin-dependency via coarse-graining method.
}
\label{fig:PathDependency-Example}
\end{figure}
For the sake of simplicity, we consider a simple spin network with spins $k_a=1,k_b=\f12$ and $j_1=j_2=\f12,j_3=1$, and set loop holonomy spin $\ell=\f12$.

Look at the action acting along the path $j_1 \to k_a \to j_2$. The spin network state can be labeled by spin-shifting $| j_1 j_2 K_1 \ra_a$. The initial state is $| \f12 \f12 1 \ra_a$, and following same logic, the final truncated state reads,
\be
| \psi_{a}(t) \ra
=
\f{ | \f12 \f12 1 \ra_{a}
+ \f{ \ri t }{2\sqrt{3} } | 0 1 \f12 \ra_{a}
+ \f{ \ri t }{2\sqrt{3} } | 1 0 \f12 \ra _{a}
+ \f{ \ri t }{3\sqrt{6} } | 1 1 \f12 \ra_{a}
+ \f{ 4 \ri t }{3\sqrt{3} } | 1 1 \f32 \ra_{a}
}{ \sqrt{ 1+\f{7 t^2}{9} } }
\,.
\ee
Likewise,  look at the action acting along the path $j_1 \to k_b \to j_2$. The spin network state can be labeled by spin-shifting $| j_1 j_2 K_2 \ra_{b}$. The initial state is $| \f12 \f12 \f12 \ra_{b}$, and the final truncated state reads,
\be
| \psi_{b}(t) \ra
=
\f{ | \f12 \f12 \f12 \ra_{b}
+ \f{ \sqrt{2} \ri t }{3 } | 0 1 1 \ra_{b}
+ \f{ \sqrt{2} \ri t }{3 } | 1 0 1 \ra _{b}
+ \f{ \sqrt{2} \ri t }{3 } | 1 1 0 \ra_{b}
+ \f{ 2\sqrt{2} \ri t }{ 3\sqrt{3} } | 1 1 1 \ra_{b}
}{ \sqrt{ 1+\f{26 t^2}{27} } }
\,.
\ee
Partial tracing over $\cH_{B} \otimes \cH_{C}$, the reduced states ${[} \rho_{a}(t) {]}_{A}$ and ${[} \rho_{b}(t) {]}_{A}$ are read,
\begin{align}
{[} \rho_{a}(t) {]}_{A}
=&
\f{ | \f12 \f12 \ra\la \f12 \f12 | + \f{ t^2 }{ 12 } | 0 1\ra\la 01 | + \f{ t^2 }{ 12 } | 1 0 \ra\la 1 0 | + \f{ 11 t^2 }{ 18 } | 1 1\ra\la 1 1 |
}{ 1+\f{7 t^2}{9} }
\,,
\\
{[} \rho_{b}(t) {]}_{A}
=&
\f{ | \f12 \f12 \ra\la \f12 \f12 | + \f{ 2 t^2 }{ 9 } | 0 1\ra\la 01 | + \f{ 2 t^2 }{ 9 } | 1 0 \ra\la 1 0 | + \f{ 14 t^2 }{ 27 } | 1 1\ra\la 1 1 |
}{ 1+\f{26 t^2}{27} }
\,.
\end{align}
The reduced states can be also derived from coarse-grained graph as Fig.\ref{fig:PathDependency-Loopy}. In this scenario, a loopy vertex affects the transition matrix through bouquet spin. In this sense, the loopy spin network in the problem amounts to being a 4-valent vertex. Based on that point, one can computes the eigenvalues (\ref{eq:eigenvalues-Truncated-tcGraph}) by setting the bouquet spin at $B'$.
As expected, working on the coarse-grained graph leads to same reduced states. Interestingly, the path-dependency is translated into the spin-dependency on the self-loop. The different entanglement excitations are interpreted by the different transition matrices, and the different transition matrices are reflected in the different bouquet spins.


\section{Conclusion \& outlook}
The present paper is dedicated to the study of coarse-graining spin entanglement at both kinematical and dynamical level in loop quantum gravity.
More precisely, we looked into the spin network entanglement on graph and coarse-grained graph, showing that arbitrary spin network state and the corresponding coarse-grained spin network state carry identical spin network entanglement. Furthermore, we show that the identical entanglement relation can be pushed towards the dynamical level as long as the evolution is generated by loop holonomy operator. That is, the dynamics of spin network entanglement can be also exactly reflected in the coarse-grained graph.

At the technical level, we utilize the point of view of bulk-boundary maps, understanding coarse-graining as unitary for dual boundary Hilbert space. This manner builds the relation between the spin network states based on graph and coarse-grained graph, guiding us to find the transformation between transition matrices for holonomy operator and the coarse-grained holonomy operator. Moreover, these unitaries are interpreted as local unitaries that preserves the entanglement amongst sub-networks. The conclusion is universal for any entanglement measure, according to the requirement that any entanglement measure should be invariant under local unitary.

\smallskip

We wish to shed light on the question about the quantum gravity degrees of freedom. In some sense, the coarse-graining feature of spin network entanglement implies that the graphical degrees of freedom can be considerably reduced.

On the other hand, the coarse-graining feature emphasizes the degrees of freedom of interfacing edges and self-loops. More precisely, the entanglement between spin sub-networks is exactly reflected in the entanglement between loopy spin sub-networks to which coarse-graining gives rise. Every loopy sub-network is made up by a single vertex plus boundary edges and self-loops (due to gauge-fixing). The bulk topology is reflected in the number of self-loops. When comes to spin network entanglement, the intertwiner at the single vertex encodes all the information about the sub-graph, associating with boundary spins and self-loop spins via recoupling.

\smallskip

Following the coarse-graining feature of spin network entanglement, we can consider a simple situation: staring with a closed spin network, partitioning it into a bipartite system, we then ask the maximal entanglement entropy between the subsystems. In line with our result, the entanglement entropy can be studied on the coarse-grained graph with two loopy vertices, and the maximal entanglement entropy is bounded from upper by the minor dimension of two loopy intertwiner spaces. Say $N$ the number of interfacing edges, and $u$ the loopy vertex with the number $L$ of self-loops. Suppose that the intertwiner space associated with $u$ has minor dimension, then the maximal entanglement entropy could be converted to a $\U(N+2L)$ problem for computing dimension \cite{Freidel:2009ck}. In particular, if one further assumes $BF$ dynamics as considered in \cite{Livine:2017xww}, then the self-loops could be removed at all, thus it becomes as 2-vertex model with $L=0$ \cite{Borja:2010gn,Aranguren:2022nzn}. In this case, the maximal entanglement entropy could meet area-entropy law at certain limit \cite{Freidel:2009ck}.

\smallskip

Since holonomy operators are the basic building blocks of the Hamiltonian dynamics of loop quantum gravity, this work gives a hint of non-local degrees of freedom description for the coarse-graining and dynamics on the quantum information carried by spin network states. We hope that further characterizing the various operators of loop quantum gravity through coarse-graining feature on the correlation and entanglement will allow to reformulate the precise mathematical framework for holography.

\acknowledgments

Q.C. especially thanks Etera Livine for many discussions, inspirations, encouragements and some of the key insights.
Q.C also thanks Xiangjing Liu, Shang-qiang Ning for discussions. 
Q.C. is financially supported by the China Scholarship Council.


\bibliographystyle{JHEP}
\bibliography{LQG}


%
%
%




\end{document}